\newtheorem{thm}{Theorem}
\theoremstyle{remark}
\theoremstyle{definition}
\numberwithin{equation}{section}
\tikzset{->-/.style={decoration={
  markings,
  mark=at position #1 with {\arrow{Latex}}},postaction={decorate}}}
\tikzset{->>-/.style={decoration={
  markings,
  mark=at position .5 with {\arrow{Latex[sep=10pt]Latex}}},postaction={decorate}}}
\tikzset{-<-/.style={decoration={
  markings,
  mark=at position #1 with {\arrow{Latex[reversed]}}},postaction={decorate}}}
\tikzset{-<<-/.style={decoration={
  markings,
  mark=at position .5 with {\arrow{Latex[reversed,sep=-10pt]Latex[reversed]}}},postaction={decorate}}}
\tikzset{-|-/.style={decoration={
  markings,
  mark=at position .51 with {\arrow{Bar}}},postaction={decorate}}}
\tikzset{-||-/.style={decoration={
  markings,
  mark=at position .49 with {\arrow{Bar[sep=-5pt] Bar}}},postaction={decorate}}}
\tikzset{-!-/.style={decoration={
  markings,
  mark=at position #1 with {\arrow{Rays[n=6,length=6pt]}}},postaction={decorate}}}
\tikzset{-!!-/.style={decoration={
  markings,
  mark=at position .51 with {\arrow{Bar[sep=1pt,length=4pt] Rays[n=6,length=8pt]}}},postaction={decorate}}}
\newcommand{\ds}{\displaystyle}
\renewcommand{\author}[1]{\large\rm #1\\ \bigskip}
\newcommand{\address}[1]{{\normalsize\it #1\\}\bigskip}
\renewcommand{\title}[1]{\bigskip\bigskip\Large\bf #1\bigskip\bigskip\\}
\def\EXP{\textrm{{\large e}}}
\newcommand{\x}{{\boldsymbol{x}}}
\newcommand{\y}{{\boldsymbol{y}}}
\newcommand{\bal}{{\bm{\alpha}}}
\newcommand{\bbt}{{\bm{\beta}}}
\newcommand{\al}{{{\alpha}}}
\newcommand{\bt}{{{\beta}}}
\newcommand{\gm}{{{\gamma}}}
\newcommand{\bbu}{{\bm{u}}}
\newcommand{\bbv}{{\bm{v}}}
\newcommand{\bbw}{{\bm{w}}}
\newcommand{\bbz}{{\bm{z}}}
\newcommand{\bbx}{{\bm{x}}}
\newcommand{\bby}{{\bm{y}}}
\newcommand{\oW}{\overline{W}}
\newcommand{\oV}{\overline{V}}
\newcommand{\lag}{{\mathcal \L}}
\newcommand{\ol}{\overline{\lag}}
\newcommand{\lam}{{\Lambda}}
\newcommand{\olam}{\overline{\lam}}
\newcommand{\olamh}{\hat{\olamh}}
\newcommand{\xv}{{x}}
\newcommand{\yv}{{y}}
\newcommand{\yva}{{y_1}}
\newcommand{\yvb}{{y_2}}
\newcommand{\dxv}{\dot{\xv}}
\newcommand{\pdz}{p_0}
\newcommand{\pda}{p_1}
\newcommand{\pdb}{p_2}
\newcommand{\pdc}{p_3}
\newcommand{\pdd}{p_4}
\newcommand{\pqaa}{\al}
\newcommand{\pqab}{\al-\gm}
\newcommand{\pqba}{\bt}
\newcommand{\pqbb}{\bt-\gm}
\newcommand{\ta}{q_1}
\newcommand{\tb}{q_2}
\newcommand{\tc}{q_3}
\newcommand{\td}{q_4}
\newcommand{\pa}{\al_1}
\newcommand{\pb}{\al_2}
\newcommand{\pc}{\al_3}
\newcommand{\pd}{\al_4}
\newcommand{\dwp}{{\wp'}}
\newcommand{\wes}{q}
\newcommand{\dwepa}{\dwp(\pa)}
\newcommand{\dwepb}{\dwp(\pb)}
\newcommand{\dwepc}{\dwp(\pc)}
\newcommand{\dwepd}{\dwp(\pd)}
\newcommand{\spn}{\sigma}
\newcommand{\bspn}{\bm{\sigma}}
\newcommand{\bu}{\bm{u}}
\newcommand{\bv}{\bm{v}}
\newcommand{\ccx}{x}
\newcommand{\xa}{x_a}
\newcommand{\xb}{x_b}
\newcommand{\xc}{x_e}
\newcommand{\xd}{x_d}
\newcommand{\xe}{x_f}
\newcommand{\xf}{x_c}
\newcommand{\ya}{y_a}
\newcommand{\yb}{y_b}
\newcommand{\yc}{y_e}
\newcommand{\yd}{y_d}
\newcommand{\ye}{y_f}
\newcommand{\yf}{y_c}
\newcommand{\za}{z_a}
\newcommand{\zb}{z_b}
\newcommand{\zc}{z_e}
\newcommand{\zd}{z_d}
\newcommand{\ze}{z_f}
\newcommand{\zf}{z_c}
\newcommand{\wa}{w_a}
\newcommand{\wb}{w_b}
\newcommand{\wc}{w_e}
\newcommand{\wdd}{w_d}
\newcommand{\we}{w_f}
\newcommand{\wf}{w_c}
\newcommand{\ua}{u_a}
\newcommand{\ub}{u_b}
\newcommand{\uc}{u_e}
\newcommand{\ud}{u_d}
\newcommand{\ue}{u_f}
\newcommand{\uf}{u_c}
\newcommand{\va}{v_a}
\newcommand{\vb}{v_b}
\newcommand{\vc}{v_e}
\newcommand{\vd}{v_d}
\newcommand{\ve}{v_f}
\newcommand{\vf}{v_c}
\newcommand{\rhob}{{\overline{\rho}}}
\newcommand{\rhot}{{\hat{\rho}}}
\renewcommand{\L}{L}
\newcommand{\A}[8]{A(#1;#2,#3,#4,#5;#6,#7,#8)}
\newcommand{\C}[8]{C(#1;#2,#3,#4,#5;#6,#7,#8)}
\newcommand{\Cbar}[8]{\overline{C}(#1;#2,#3,#4,#5;#6,#7,#8)}
\newcommand{\At}[7]{A(#1;#2,#3,#4,#5;#6,#7)}
\newcommand{\Ct}[7]{C(#1;#2,#3,#4,#5;#6,#7)}
\newcommand{\Cft}[7]{\overline{C}(#1;#2,#3,#4,#5;#6,#7)}
\newcommand{\Aa}{A.a}
\newcommand{\Ab}{A.b}
\newcommand{\Af}{A.c}
\newcommand{\Ad}{A.d}
\newcommand{\Ac}{A.e}
\newcommand{\Ae}{A.f}
\newcommand{\Ca}{C.a}
\newcommand{\Cb}{C.b}
\newcommand{\Cf}{C.c}
\newcommand{\Cd}{C.d}
\newcommand{\Cc}{C.e}
\newcommand{\Ce}{C.f}
\newcommand{\Qa}{Q.a}
\newcommand{\Qb}{Q.b}
\newcommand{\Qc}{Q.c}
\newcommand{\Qd}{Q.d}
\newcommand{\Qi}{Q.i}
\newcommand{\Ha}{H.a}
\newcommand{\Hb}{H.b}
\newcommand{\Hc}{H.c}
\newcommand{\Hd}{H.d}
\newcommand{\Hi}{H.i}
\newcommand{\sh}[1]{s(#1)}
\newcommand{\zz}[1]{z(#1)}
\newcommand{\hexAneq}[9]{\bm{A}(#1,#2,#3,#4,#5,#6;#7,#8,#9)}
\newcommand{\hexCneq}[9]{\bm{C}(#1,#2,#3,#4,#5,#6;#7,#8,#9)}
\newcommand{\hexA}[9]{\bm{A}(#1,#2,#3,#4,#5,#6;#7,#8,#9)=0}
\newcommand{\hexC}[9]{\bm{C}(#1,#2,#3,#4,#5,#6;#7,#8,#9)=0}
\newcommand{\prsm}[6]{\bm{P}(#1,#2;#3,#4,#5,#6)=0}
\newcommand{\quadQ}[6]{Q(#1,#2,#3,#4;#5,#6)}
\newcommand{\quadQs}[6]{Q^\ast(#1,#2,#3,#4;#5,#6)}
\newcommand{\quadH}[6]{H(#1,#2,#3,#4;#5,#6)}
\newcommand{\quadHu}[6]{\mathop{H}\limits^\ast(#1,#2,#3,#4;#5,#6)}
\newcommand{\quadHun}{\mathop{H}\limits^\ast}
\definecolor{cof}{RGB}{219,144,71}
\definecolor{pur}{RGB}{186,146,162}
\definecolor{greeo}{RGB}{91,173,69}
\definecolor{greet}{RGB}{52,111,72}
\begin{document}

\vglue 2cm

\begin{center}

\title{Integrable systems on hexagonal lattices and consistency on polytopes with quadrilateral and hexagonal faces}
\author{Andrew P.~Kels}
\address{Scuola Internazionale Superiore di Studi Avanzati,\\ Via Bonomea 265, 34136 Trieste, Italy}

\end{center}

\begin{abstract}

The new concept of a system of hex equations is introduced as an overdetermined system of six five-point face-centered quad equations defined on six vertices of a hexagon.   For a consistent system of hex equations,  two variables on neighbouring vertices of the hexagon can be solved for uniquely in terms of the other four.  A consistent system of hex equations has a well-defined unique evolution in the hexagonal lattice under suitable initial value problems defined on a single connected staircase of points.  Multidimensional consistency for systems of hex equations is proposed in terms of their consistency on certain polytopes which have both hexagonal and quadrilateral faces, and specific examples are presented for the hexagonal prism, the elongated dodecahedron, the truncated octahedron, and the 6-6-duoprism.  Consistent systems of hex equations on such polytopes may be constructed from face-centered quad equations which satisfy consistency-around-a-face-centered-cube, in combination with regular quad equations that satisfy consistency-around-a-cube.

\end{abstract}





\section{Introduction}

The Yang-Baxter equation is an important equation for integrable models and mathematical physics.  
One of its main applications is for integrablity of lattice models of statistical mechanics, where it implies that the transfer matrices of the model commute which can be used to solve the model in the thermodynamic limit \cite{Baxter:1982zz}.  For Ising-type lattice models with nearest-neighbour interactions involving discrete spin variables, the Yang-Baxter equation takes a special form known as the star-triangle relation \cite{Onsager,Zamolodchikov:1980mb,Fateev:1982wi,Kashiwara:1986tu,AuYang:1987zc,Baxter:1987eq,AuYangPerkSTRreview,Bazhanov:2016ajm}.  

The quasi-classical limit of such star-triangle relations also contains defining equations for an entirely different class of integrable models, known as discrete integrable systems.  Namely, the saddle-point equations of the star-triangle relations in the quasi-classical limit have been shown \cite{Bazhanov:2007mh,Bazhanov:2010kz,Bazhanov:2016ajm,Kels:2018xge} to correspond to equations that arise in the Adler, Bobenko, and Suris (ABS) classification of integrable quad equations \cite{ABS,ABS2}.  These are integrable two-dimensional partial difference equations that are defined on vertices and edges of a square (see the diagram on the left of Figure \ref{fig-intro}) and evolve in the square lattice.   Such difference equations provide integrable discrete counterparts of well-known integrable soliton equations, such as the Korteweg-de Vries equation, where the latter may be obtained through some continuous limits.  The quad equations in the ABS list satisfy a condition of integrability known as consistency-around-a-cube (CAC) \cite{nijhoffwalker,BobSurQuadGraphs}, which requires the consistency of an overdetermined system of six quad equations for four unknowns on the cube, and this property in turn implies Lax pairs and B\"acklund transformations for the equations.  For quad equations there were also found classical counterparts of the star-triangle relations, independently of the quasi-classical limit, in the form of a closure relation that was introduced for the concept of Lagrangian multiforms \cite{LobbNijhoff,BS09}.

\begin{figure}[htb!]
\centering
\begin{tikzpicture}[scale=1.15]

\begin{scope}[scale=0.6]

\fill[white!] (1,1) circle (0.01pt)
node[left=1pt]{\color{black} $\al$};
\fill[white!] (5,1) circle (0.01pt)
node[right=1pt]{\color{black} $\al$};

\fill[white!] (3,3) circle (0.01pt)
node[above=1pt]{\color{black} $\bt$};
\fill[white!] (3,-1) circle (0.01pt)
node[below=1pt]{\color{black} $\bt$};

\draw[-] (5,-1)--(5,3)--(1,3)--(1,-1)--(5,-1);

\fill (1,-1) circle (3.5pt)
node[left=1.5pt]{\color{black} $x_c$};
\filldraw[fill=black,draw=black] (1,3) circle (3.5pt)
node[left=1.5pt]{\color{black} $x_a$};
\fill (5,3) circle (3.5pt)
node[right=1.5pt]{\color{black} $x_b$};
\filldraw[fill=black,draw=black] (5,-1) circle (3.5pt)
node[right=1.5pt]{\color{black} $x_d$};

\end{scope}

\begin{scope}[scale=0.6,xshift=240pt]

\fill[white!] (1,1) circle (0.01pt)
node[left=1pt]{\color{black}\small $(\al_1,\al_2)$};
\fill[white!] (5,1) circle (0.01pt)
node[right=1pt]{\color{black}\small $(\al_1,\al_2)$};

\fill[white!] (3,3) circle (0.01pt)
node[above=1pt]{\color{black}\small $(\bt_1,\bt_2)$};
\fill[white!] (3,-1) circle (0.01pt)
node[below=1pt]{\color{black}\small $(\bt_1,\bt_2)$};

\draw[-,gray,dashed] (5,-1)--(5,3)--(1,3)--(1,-1)--(5,-1);
\draw[-] (5,-1)--(1,3); \draw[-] (5,3)--(1,-1);

\fill (3,1) circle (3.5pt)
node[left=2.5pt]{\color{black} $\ccx$};
\fill (1,-1) circle (3.5pt)
node[left=1.5pt]{\color{black} $x_c$};
\filldraw[fill=black,draw=black] (1,3) circle (3.5pt)
node[left=1.5pt]{\color{black} $x_a$};
\fill (5,3) circle (3.5pt)
node[right=1.5pt]{\color{black} $x_b$};
\filldraw[fill=black,draw=black] (5,-1) circle (3.5pt)
node[right=1.5pt]{\color{black} $x_d$};

\end{scope}

\begin{scope}[scale=1.4,xshift=240pt,yshift=13pt]

\pgfmathsetmacro\tr{sqrt(3)/2}

\coordinate (A1) at (-1,0);
\coordinate (A2) at (-1/2,{\tr});
\coordinate (A3) at (1/2,{\tr});
\coordinate (A4) at (1,0);
\coordinate (A5) at (1/2,{-\tr});
\coordinate (A6) at (-1/2,{-\tr});

\draw (A1)--(A2) node[midway,above left]{$\gamma$};
\draw (A2)--(A3) node[midway,above]{$\alpha$};
\draw (A3)--(A4) node[midway,above right]{$\beta$};
\draw (A4)--(A5) node[midway,below right]{$\gamma$};
\draw (A5)--(A6) node[midway,below]{$\alpha$};
\draw (A6)--(A1) node[midway,below left]{$\beta$};


\fill (A1) circle (1.4pt) node[left]{$\xa$};
\fill (A2) circle (1.4pt) node[above left]{$\xb$};
\fill (A3) circle (1.4pt) node[above right]{$\xf$};
\fill (A4) circle (1.4pt) node[right]{$\xd$};
\fill (A5) circle (1.4pt) node[below right]{$\xc$};
\fill (A6) circle (1.4pt) node[below left]{$\xe$};

\end{scope}

 \end{tikzpicture}
 
 \caption{Variables and parameters of a quad equation (left), a face-centered quad equation (center), and a system of hex equations (right).}
 \label{fig-intro}

\end{figure}
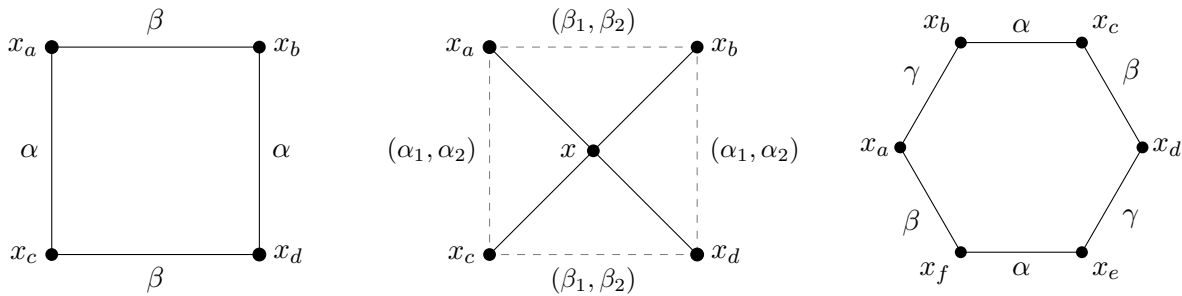

Another type of Yang-Baxter equation for interaction-round-a-face (IRF) models constructed from the star-star relation \cite{Bazhanov:1992jqa,Baxter:1997tn,Bazhanov:2011mz} was used to develop the concept of face-centered quad equations \cite{Kels:2020zjn}.  A face-centered quad equation may be regarded as an extension of a regular quad equation, having dependence on five variables assigned to vertices on the face and corners of a square, and two-component parameters associated to edges (see the diagram in the center of Figure \ref{fig-intro}).  The face-centered quad equations also evolve in the square lattice, but in a more complicated manner, requiring initial data that is defined on double staircases \cite{GubbiottiKels} in comparison to the single staircases required for regular quad equations \cite{AdlerVeselovIVPs,Viallet2006,KampIVPlattice}.  An analogue of CAC was proposed for face-centered quad equations called consistency-around-a-face-centered-cube (CAFCC), requiring the consistency of an overdetermined system of fourteen equations for eight unknowns on the face-centered cubic unit cell, and this property has also been shown to imply Lax pairs of the equations \cite{KelsLax,KelsLax2}.  

The aim of this paper is to further build on the above ideas to develop the analogues of integrable quad equations and their properties for systems of equations that evolve in hexagonal lattices, rather than square lattices.  In comparison to the theory of integrable quad equations, there appears to be only a handful of results for discrete integrable systems in hexagonal lattices or in hexagonal configurations \cite{DateHexagon,NijhoffHexagon,NovikovDynnikov,AgafonovBobenkoCircles,BHSHexagonalCircles,BHHexagonalCircles,DoliwaHoneycomb,DoliwaHoneycomb2,ABSoctahedron,KampIVPquad,JKMNHexagonal}, and to the best of the author's knowledge this paper offers a different approach and new results for this topic.  To begin with, instead of a square, the equations are to be defined on a hexagonal unit cell and have dependence on six variables assigned to vertices and three parameters assigned to edges, where the same parameter is assigned to opposite edges (see the diagram on the right of Figure \ref{fig-intro}), and under suitable initial value problems (including for initial data on a single connected path) these equations should have a unique evolution in the hexagonal lattice.  Also, as an analogue of CAC and CAFCC the equations are to form a consistent system of equations when assigned to polytopes that contain hexagonal faces.

Such consistent systems of equations on hexagons will be constructed in this paper using face-centered quad equations as the basic building block.  A new form of consistency condition for face-centered quad equations will be introduced in Section \ref{sec:quads}, called consistency-around-a-hexagon (CAH).  CAH is a simpler consistency condition than CAFCC, requiring the consistency of an overdetermined system of six equations for two unknown variables on the hexagon.  
Since they are defined on a hexagonal unit cell rather than a face-centered cubic structure, the system of six face-centered quad equations on the hexagon will be referred to in this paper as {\it a system of hex equations}.  For hex equations that satisfy CAH, two unknown variables on the hexagon can be solved for consistently in terms of four known variables, and this will be enough to uniquely evolve systems of hex equations in the hexagonal lattice under initial conditions defined on a single connected staircase of points.  

The previously found face-centered quad equations which satisfy CAFCC (specifically type-A and type-C equations) \cite{Kels:2020zjn} also satisfy CAH, and thus may be used to define consistent systems of hex equations.  Some general guidelines will be proposed in Section \ref{sec:polytope} for constructing consistent systems of hex equations on polytopes in combination with regular quad equations.  These guidelines are based on matching the edges associated to the three-leg and four-leg equations associated to quad and face-centered quad equations respectively.  
In particular, the consistency of equations on the hexagonal prism provides a natural extension of a two-dimensional system of hex equations in the hexagonal lattice onto the hexagonal prismatic honeycomb, which is the three-dimensional tessellation obtained by translations of the hexagonal prism.  This may be regarded as the analogue of extending systems of regular quad equations from the lattice $\mathbb{Z}^2$ to the lattice $\mathbb{Z}^3$ using their consistency on the cube (a uniform square prism).   Besides the hexagonal prism, explicit examples will be presented for consistent systems of equations in three dimensions on the elongated dodecahedron and the truncated octahedron, and in four dimensions on the 6-6-duoprism, which is a polytope whose cells are hexagonal prisms.

\section{Integrable quad and face-centered quad equations}\label{sec:quads}

This section will provide an introduction to the regular quad equations which satisfy consistency-around-a-cube (CAC), and the face-centered quad equations which satisfy consistency-around-a-face-centered-cube (CAFCC), as well as the relevant properties of the two types of equations and their connections to star-triangle and star-star relations which has motivated this paper.  The face-centered quad equations will play a central role in this paper for the construction of consistent systems of equations that evolve in hexagonal lattices (see Section \ref{sec:hexagon}), while both types of quad equations will be required to form consistent systems of equations on polytopes that have both quadrilateral and hexagonal faces (see Section \ref{sec:polytope}).

\subsection{Quad equations and star-triangle relations} 

A quad equation may be written as
\begin{equation}\label{afflinquad}
\quadQ{x_a}{x_b}{x_c}{x_d}{\alpha}{\beta}=0,
\end{equation}
where $Q$ is a multilinear polynomial in four variables $x_a,x_b,x_c,x_d$, that takes the general form
\begin{equation}\label{afflinquadpoly}
\begin{split}
\quadQ{x_a}{x_b}{x_c}{x_d}{\alpha}{\beta}
=\kappa_1x_ax_bx_cx_d+\kappa_2x_ax_bx_c+\kappa_3x_ax_bx_d+\kappa_4x_ax_cx_d+\kappa_5x_bx_cx_d\phantom{.}&
\\
+\kappa_6x_ax_b+\kappa_7x_ax_c+\kappa_8x_ax_d+\kappa_9x_bx_c+\kappa_{10}x_bx_d+\kappa_{11}x_cx_d\phantom{.}& \\
+\kappa_{12}x_a+\kappa_{13}x_b+\kappa_{14}x_c+\kappa_{15}x_d+\kappa_{16}.&
\end{split}
\end{equation}
The coefficients $\kappa_i=\kappa_i(\alpha,\beta)$ ($i=1,\ldots,16$) depend only on the parameters $\alpha$ and $\beta$.  The variables and parameters of the quad equation \eqref{afflinquad} may be assigned to the square of Figure \ref{fig:quadQ}.  Under appropriate initial conditions the quad equations \eqref{afflinquad} can be uniquely evolved in the square lattice due to the multilinearity of the polynomial $Q$.

\begin{figure}[htb!]
\centering
\begin{tikzpicture}[scale=1.3]

\coordinate (A1) at (-1,-1);
\coordinate (A2) at (-1,1);
\coordinate (A3) at (1,1);
\coordinate (A4) at (1,-1);

\draw (A1)--(A2) node[midway,left]{$\beta$};
\draw (A2)--(A3) node[midway,above]{$\alpha$};
\draw (A3)--(A4) node[midway,right]{$\beta$};
\draw (A4)--(A1) node[midway,below]{$\alpha$}node[midway,below=20pt]{$\quadQ{x_a}{x_b}{x_c}{x_d}{\alpha}{\beta}=0$};

\fill (A1) circle (2pt) node[below left]{$x_c$};
\fill (A2) circle (2pt) node[above left]{$x_a$};
\fill (A3) circle (2pt) node[above right]{$x_b$};
\fill (A4) circle (2pt) node[below right]{$x_d$};

\end{tikzpicture}
\caption{The quad equation \eqref{afflinquad}.}
\label{fig:quadQ}
\end{figure}

Quad equations of the form \eqref{afflinquad} have been shown \cite{Bazhanov:2007vg,Bazhanov:2010kz,Bazhanov:2016ajm,Kels:2017vbc,Kels:2018xge} to arise from the quasi-classical limit of more general equations for hypergeometric integrals which take the form of an important equation for integrability of lattice models of statistical mechanics called the star-triangle relation.  The quad equations that are obtained this way may be identified with equations in the Adler, Bobenko, and Suris (ABS) classification \cite{ABS,ABS2}. The connection to these types of star-triangle relations is summarised as follows.

In certain cases, hypergeometric integrals satisfy formulas that may be written in one of the following two typical forms \cite{Kels:2018xge}
\begin{align}\label{STRQ}
\int_Cd\spn_0 \oW_{\theta_1}(\spn_1,\spn_0)W_{\theta_1+\theta_3}(\spn_2,\spn_0)\oW_{\theta_3}(\spn_0,\spn_3)
&\!=\!R_{\theta}W_{\theta_1}(\spn_2,\spn_3)\oW_{\theta_1+\theta_3}(\spn_1,\spn_3)W_{\theta_3}(\spn_2,\spn_1), \\
\label{STRH}
\int_Cd\spn_0 \oV_{\theta_1}(\spn_1,\spn_0)V_{\theta_1+\theta_3}(\spn_2,\spn_0)\oW_{\theta_3}(\spn_0,\spn_3)
&\!=\!R_{\theta}V_{\theta_1}(\spn_2,\spn_3)\oV_{\theta_1+\theta_3}(\spn_1,\spn_3)W_{\theta_3}(\spn_2,\spn_1).
\end{align}
Each of $V_\theta(\spn_i,\spn_j),\oV_\theta(\spn_i,\spn_j),W_\theta(\spn_i,\spn_j),\oW_\theta(\spn_i,\spn_j)$, are complex-valued functions of $\spn_i,\spn_j,\theta$, and the formulas \eqref{STRQ} and \eqref{STRH} depend on the independent complex-valued variables $\spn_1,\spn_2,\spn_3$, and parameters $\theta_1,\theta_3$.  The  $R_{\theta}$ is a factor that is independent of the variables $\spn_1,\spn_2,\spn_3$.  The functions $V_\theta(\spn_i,\spn_j),\oV_\theta(\spn_i,\spn_j),W_\theta(\spn_i,\spn_j),\oW_\theta(\spn_i,\spn_j)$, are typically written in terms of the gamma function or its generalisations \cite{Ruijsenaars:1997:FOA}, with the contours $C$ chosen to separate infinite sequences of poles of the integrand that go to zero and/or infinity. Such formulas may be regarded as generalisations of the Euler beta function in the context of hypergeometric integrals, and generalisations of the Ising model in the context of integrable lattice models of statistical mechanics, where both of the latter may be obtained in certain limits \cite{Bazhanov:2010kz}.  For models of statistical mechanics, the functions $V_\theta(\spn_i,\spn_j),\oV_\theta(\spn_i,\spn_j),W_\theta(\spn_i,\spn_j),\oW_\theta(\spn_i,\spn_j)$, are identified as the Boltzmann weights of the associated model(s) \cite{Bazhanov:2007mh,Bazhanov:2010kz,Bazhanov:2016ajm}.

In a quasi-classical limit, the integrals of \eqref{STRQ} and \eqref{STRH} are written in the respective forms
\begin{equation}\label{STRqcl}
\begin{gathered}
\int_Cdx_0\, \EXP^{\hbar^{-1}\bigl(\ol_{\alpha_1}(x_1,x_0)+\lag_{\alpha_1+\alpha_3}(x_2,x_0)+\ol_{\alpha_3}(x_0,x_3)\bigr)+O(1)}, \\
\int_Cdx_0\, \EXP^{\hbar^{-1}\bigl(\olam_{\alpha_1}(x_1,x_0)+\lam_{\alpha_1+\alpha_3}(x_2,x_0)+\ol_{\alpha_3}(x_0,x_3)\bigr)+O(1)},
\end{gathered}
\end{equation}
for some parameter $\hbar\to0$, where each of $\lam_\alpha(x_i,x_j),\olam_\alpha(x_i,x_j),\lag_\alpha(x_i,x_j),\ol_\alpha(x_i,x_j)$, are complex-valued functions of $x_i,x_j,\alpha$, which arise as the leading $O(\hbar^{-1})$ asymptotics of the Boltzmann weights $V_\theta(x_i,x_j),\oV_\theta(x_i,x_j),W_\theta(x_i,x_j),\oW_\theta(x_i,x_j)$, respectively.  The variables and parameters of \eqref{STRQ}, \eqref{STRH}, and \eqref{STRqcl}, are related by
\begin{equation}\label{YBEcov}
\spn_i=F_i(x_i),\quad i=0,1,2,3,\qquad \theta_j=G(\alpha_j),\quad j=1,3,
\end{equation}
where $F_i(z)$ and $G(z)$ are some M\"obius transformations that depend on $\hbar$.  The saddle-point equations of these integrals are given by
\begin{equation}\label{STRsaddlepoint}
\begin{split}
\frac{\partial}{\partial x_0}\bigl(\ol_{\alpha_1}(x_1,x_0)+\lag_{\alpha_1+\alpha_3}(x_2,x_0)+\ol_{\alpha_3}(x_0,x_3)\bigr)=0, \\
\frac{\partial}{\partial x_0}\bigl(\olam_{\alpha_1}(x_1,x_0)+\lam_{\alpha_1+\alpha_3}(x_2,x_0)+\ol_{\alpha_3}(x_0,x_3)\bigr)=0.
\end{split}
\end{equation}
A discrete integrable equation is typically then obtained directly from these saddle-point equations, which are identified as the three-leg equations \cite{BobSurQuadGraphs,ABS} associated to quad equations.  A point transformation
\begin{equation}\label{quadcov}
x_a=f_0(x_0),\quad x_b=f_1(x_1),\quad x_c=f_2(x_2),\quad x_d=f_3(x_3),\qquad
\alpha=g(\alpha_1),\quad \beta=g(\alpha_3),
\end{equation}
of the variables and parameters should be used that will turn the saddle point equations \eqref{STRsaddlepoint} into equations of the form
\begin{equation}
\label{3legmult}
\begin{split}
\frac{a(x_a;x_c;\beta)}{a(x_a;x_b;\alpha)a(x_a;x_d;{\beta-\alpha})}=1, \\
\frac{c(x_a;x_c;\beta)}{c(x_a;x_b;\alpha)a(x_a;x_d;{\beta-\alpha})}=1,
\end{split}
\end{equation}
or the form
\begin{equation}
\label{3legadd}
\begin{split}
a(x_a;x_c;\beta)-a(x_a;x_b;\alpha)-a(x_a;x_d;{\beta-\alpha})=0, \\
c(x_a;x_c;\beta)-c(x_a;x_b;\alpha)-a(x_a;x_d;{\beta-\alpha})=0,
\end{split}
\end{equation}
where the functions $a(x;y;\alpha)$ and $c(x;y;\alpha)$ are both linear fractional functions of the variable $y$. Typical choices of the functions $f_i$ and $g$ for \eqref{quadcov} are given in Table \ref{tab:cov}, which depend on whether the equations \eqref{STRQ}--\eqref{STRsaddlepoint} are elliptic, hyperbolic, rational, or algebraic (also known as classical). The algebraic cases give equations of the form \eqref{3legadd}, and all other cases give equations of the form \eqref{3legmult}.

\begin{table}[htb!]
\centering
\begin{tabular}{c|c|c}
 & $f_i(x)$ & $g(x)$ \\[0.0cm]
\hline \\[-0.4cm]
Elliptic & $\wp(x)$ & $\wp(x)$ \\[0.05cm]
Hyperbolic & $\EXP^{x}$ or $\cosh(x)$ & $\EXP^{x}$  \\[0.05cm]
Rational & $x$ or $x^2$ & $x$ \\[0.05cm]
Algebraic & $x$ & $x$ \\[0.05cm]
\hline 
\end{tabular}
\caption{Typical choices for the changes of variables \eqref{quadcov}, depending on whether the equations \eqref{STRQ}--\eqref{STRsaddlepoint} are elliptic, hyperbolic, rational, or algebraic.  The function $\wp(z)$ is the Weierstrass elliptic function.}
\label{tab:cov}
\end{table}

For the equations of the form \eqref{3legadd}, if after combining each term under a common denominator the numerator of the resulting expression has the form \eqref{afflinquadpoly}, then this numerator provides the desired quad equation.

For the equations of the form \eqref{3legmult}, since $a(x;y;\alpha)$ and $c(x;y;\alpha)$ are linear fractional in $y$, these equations have the form
\begin{equation}\label{tlg}
\frac{N(x_a;x_b,x_c,x_d;\alpha,\beta)}{D(x_a;x_b,x_c,x_d;\alpha,\beta)}=1,
\end{equation}
where $N(x_a;x_b,x_c,x_d;\alpha,\beta)$ and $D(x_a;x_b,x_c,x_d;\alpha,\beta)$ are multilinear polynomials of the three variables $x_b$, $x_c$, $x_d$.  Rewriting the equation \eqref{tlg} as
\begin{equation}\label{tlgg}
N(x_a;x_b,x_c,x_d;\alpha,\beta)-D(x_a;x_b,x_c,x_d;\alpha,\beta)=0,
\end{equation}
the desired multilinear quad equation is then obtained if a factorisation of \eqref{tlgg} can be found of the form
\begin{equation}\label{quadfactor}
f(x_a;\alpha,\beta)\quadQ{x_a}{x_b}{x_c}{x_d}{\alpha}{\beta}=0,
\end{equation}
where $f$ is some function that is independent of $x_b,x_c,x_d$, and $Q$ is a polynomial of the form \eqref{afflinquadpoly}.  From \eqref{quadfactor} it is seen that the equations \eqref{3legmult} 
are satisfied on solutions of the corresponding multilinear quad equation $\quadQ{x_a}{x_b}{x_c}{x_d}{\alpha}{\beta}=0$, but there may be some solutions for \eqref{3legmult} corresponding to $f(x_a;\alpha,\beta)=0$ that are not solutions of $\quadQ{x_a}{x_b}{x_c}{x_d}{\alpha}{\beta}=0$.

The above procedure for obtaining quad equations from quasi-classical limits has been investigated \cite{Bazhanov:2016ajm,Kels:2018xge} for different solutions of the star-triangle relations of the forms \eqref{STRQ} and \eqref{STRH} that arise as hyperbolic/rational/algebraic degenerations of Bazhanov and Sergeev's solution of the star-triangle relation \cite{Bazhanov:2010kz}.  In terms of hypergeometric integrals, the latter star-triangle relation is equivalent to Spiridonov's elliptic beta integral formula \cite{Spiridonov:2010em}.  The quad equations that were obtained from the quasi-classical limit of these integrals correspond to equations in the ABS classification \cite{ABS,ABS2}, where the latter classification was made using completely different methods that are independent of the star-triangle relations.  Thus, the quasi-classical limit offers a different approach to the study of discrete integrable equations.

The equations in the ABS list may be grouped into two main types known as type-Q and type-H. These equations are listed in Table \ref{tab:ABSlist} and are given explicitly in Appendix \ref{app:equations}.  The type-Q equations may be derived from the quasi-classical expansion of star-triangle relations of the form \eqref{STRQ} and the type-H equations may be derived from the quasi-classical expansion of star-triangle relations of the form \eqref{STRH}.

\begin{table}[htb!]
\centering
\begin{tabular}{c|c|c}

& Type-Q  & Type-H  
 
 \\
 
 \hline
 
 




Elliptic & $Q4$ & - \\

Hyperbolic & $Q3_{(1)}$ $Q3_{(0)}$ & $H3_{(1;\,1)}$, $H3_{(1,0)}$, $H3_{(0,0)}$ \\[0.0cm]

Rational & $Q2$, $Q1_{(1)}$ & $H2_{(1)}$, $H2_{(0)}$ \\[0.0cm]

Algebraic & $Q1_{(0)}$ & $H1_{(1)}$, $H1_{(0)}$

\\[0.0cm]

\hline 
\end{tabular}
\caption{List of type-Q and type-H ABS quad equations grouped according to whether they come from a quasi-classical expansion that involves elliptic, hyperbolic, rational, or algebraic equations \cite{Bazhanov:2016ajm,Kels:2018xge}.  The equations are given in Appendix \ref{app:equations}.}%
\label{tab:ABSlist}
\end{table}

Each of the type-Q quad equations in Table \ref{tab:ABSlist} satisfy the following symmetries of the square 
\begin{equation}\label{symmetriesquad}
\begin{split}
&\quadQ{x_a}{x_b}{x_c}{x_d}{\al}{\bt}
=\quadQ{x_b}{x_a}{x_d}{x_c}{\al}{\bt}, \\
&\quadQ{x_a}{x_b}{x_c}{x_d}{\al}{\bt}
=-\quadQ{x_d}{x_b}{x_c}{x_a}{\bt}{\al}. 
\end{split}
\end{equation}
If $\varepsilon=0$, the type-H equations $H3_{(\delta;\,\varepsilon)}$, $H2_{(\varepsilon)}$, $H1_{(\varepsilon)}$, also satisfy both of the symmetries of \eqref{symmetriesquad}, while if $\varepsilon=1$ they only satisfy the second symmetry of \eqref{symmetriesquad}.

\subsubsection{Three-leg equations associated to quad equations}

The three-leg equations \eqref{3legmult} and \eqref{3legadd} are each written in terms of three functions, where in the arguments of the functions the variable $x_a$ is paired one of the other three variables $x_b,x_c,x_d$, assigned to vertices of the square.  These are also known as three-leg equations centered at $x_a$.  As was observed in the original classification result of ABS \cite{ABS}, each of the equations in Table \ref{tab:ABSlist} also have associated three-leg equations that are centered at the other three vertices.  The system of four different type-Q three-leg equations may be expressed in terms of a single function $a(x_i;x_j;\al)$, but the system of four different type-H three-leg equations is more complicated and expressed in terms of up to four different functions \cite{BollSuris11}.  
\begin{thm}
\label{thm:3leg}
If $\quadQ{x_a}{x_b}{x_c}{x_d}{\alpha}{\beta}=0$ is one of the elliptic, hyperbolic, or rational type-Q quad equations from Table \ref{tab:ABSlist}, then there is a system of three-leg equations of the form
\begin{equation}
\begin{alignedat}{2}\label{43legsQ}
(\Qa)&&\qquad\ds \frac{a(x_a;x_c;\beta)}{a(x_a;x_b;\alpha)a(x_a;x_d;{\beta-\alpha})}=1, \\
(\Qb)&&\qquad\ds \frac{a(x_b;x_a;\alpha)}{a(x_b;x_d;\beta)a(x_b;x_c;{\alpha-\beta})}=1, \\
(\Qc)&&\qquad\ds \frac{a(x_c;x_d;\alpha)}{a(x_c;x_a;\beta)a(x_c;x_b;{\alpha-\beta})}=1, \\
(\Qd)&&\qquad\ds \frac{a(x_d;x_b;\beta)}{a(x_d;x_c;\alpha)a(x_d;x_a;{\beta-\alpha})}=1,
\end{alignedat}
\end{equation}
which are satisfied on solutions of $\quadQ{x_a}{x_b}{x_c}{x_d}{\alpha}{\beta}=0$.  If $\quadQ{x_a}{x_b}{x_c}{x_d}{\alpha}{\beta}=0$ is the algebraic type-Q quad equation $Q1_{(0)}$, then the same holds true with each of $(\Qa)$--$(\Qd)$ being replaced with additive equations of the form \eqref{3legadd}.

Similarly, if $\quadH{x_a}{x_b}{x_c}{x_d}{\alpha}{\beta}=0$ is one of the hyperbolic or rational type-H quad equations from Table \ref{tab:ABSlist}, then there is a system of three-leg equations of the form
\begin{equation}
\begin{alignedat}{2}\label{43legsH}
(\Ha)&&\qquad\ds \frac{c^\ast(x_a;x_c;\beta)}{c^\ast(x_a;x_b;\alpha)a^\ast(x_a;x_d;{\beta-\alpha})}=1, \\
(\Hb)&&\qquad\ds \frac{c(x_b;x_a;\alpha)}{c(x_b;x_d;\beta)a(x_b;x_c;{\alpha-\beta})}=1, \\
(\Hc)&&\qquad\ds \frac{c(x_c;x_d;\alpha)}{c(x_c;x_a;\beta)a(x_c;x_b;{\alpha-\beta})}=1, \\
(\Hd)&&\qquad\ds \frac{c^\ast(x_d;x_b;\beta)}{c^\ast(x_d;x_c;\alpha)a^\ast(x_d;x_a;{\beta-\alpha})}=1,
\end{alignedat}
\end{equation}
which are satisfied on solutions of $\quadH{x_a}{x_b}{x_c}{x_d}{\alpha}{\beta}=0$, and where the functions $a(x_i;x_j;\al)$ and $a^\ast(x_i;x_j;\al)$ also satisfy systems of the form \eqref{43legsQ} on solutions of respective type-Q equations.  If $\quadH{x_a}{x_b}{x_c}{x_d}{\alpha}{\beta}=0$ is the algebraic type-H equation $H1_{(\varepsilon)}$, then for $\varepsilon=1$ the same holds true with $(\Ha)$ and $(\Hd)$ being replaced with additive equations of the form \eqref{3legadd}, and for $\varepsilon=0$ with each of $(\Ha)$--$(\Hd)$ being replaced with additive equations.

\end{thm}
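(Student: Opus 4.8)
The plan is to reduce the four centered equations of each system to a single base case plus the symmetries \eqref{symmetriesquad}. For each non-algebraic type-Q equation in Table \ref{tab:ABSlist} I would first exhibit the linear-fractional leg function $a(x_i;x_j;\al)$ explicitly (these are recorded together with the equations in Appendix \ref{app:equations}) and then verify the factorization \eqref{quadfactor}: after clearing denominators, the centered relation $\frac{a(x_a;x_c;\beta)}{a(x_a;x_b;\alpha)a(x_a;x_d;\beta-\alpha)}=1$ takes the form \eqref{tlgg}, and one checks that $N-D$ factors as $f(x_a;\al,\bt)\,\quadQ{x_a}{x_b}{x_c}{x_d}{\al}{\bt}$. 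This identifies $(\Qa)$ as a consequence of $\quadQ{x_a}{x_b}{x_c}{x_d}{\al}{\bt}=0$; it is essentially the content of the quasi-classical construction \eqref{STRsaddlepoint}--\eqref{quadfactor}, but carrying it out directly from the explicit polynomials \eqref{afflinquadpoly} makes the statement independent of that derivation.

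To obtain $(\Qb)$--$(\Qd)$ I would exploit that a type-Q polynomial satisfies both symmetries in \eqref{symmetriesquad}. Geometrically these are the reflection across the vertical axis, $(x_a,x_b,x_c,x_d)\mapsto(x_b,x_a,x_d,x_c)$, and the reflection across the $x_b$--$x_c$ diagonal, $(x_a,x_b,x_c,x_d;\al,\bt)\mapsto(x_d,x_b,x_c,x_a;\bt,\al)$; together they generate the full dihedral symmetry group of the square and hence move the center $x_a$ to any corner. Applying such a symmetry to the factorization of $(\Qa)$ yields the same factorization with $x_a$ replaced by the new center, and since $Q$ is (anti)invariant the quad equation is reproduced. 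The transformed relation coincides with the stated $(\Qb)$, $(\Qc)$, or $(\Qd)$ after using the parameter-reflection identity $a(x;y;\gamma)\,a(x;y;-\gamma)=1$, which I would verify for each leg function. Since only the single function $a$ enters, the whole system \eqref{43legsQ} follows at once. For the algebraic equation $Q1_{(0)}$ the same argument applies verbatim with the multiplicative relations \eqref{3legmult} replaced by the additive ones \eqref{3legadd}, as dictated by the algebraic row of Table \ref{tab:cov}.

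The type-H case is where the real work lies, and I expect it to be the main obstacle. For $\varepsilon=1$ only the second symmetry of \eqref{symmetriesquad} survives, so the available symmetry collapses to the single reflection across the $x_b$--$x_c$ diagonal. This reflection fixes $x_b,x_c$ and interchanges $x_a\leftrightarrow x_d$ with $\al\leftrightarrow\bt$, so it pairs $(\Ha)$ with $(\Hd)$ — forcing these two to share one pair of functions $a^\ast,c^\ast$ — but it cannot relate them to the center-$x_b$ and center-$x_c$ equations. Consequently $(\Hb)$ and $(\Hc)$ must be established by separate direct factorizations, and the absence of the horizontal reflection is precisely why up to four functions $a,a^\ast,c,c^\ast$ are required rather than one. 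The final assertion, that $a$ and $a^\ast$ themselves satisfy type-Q systems \eqref{43legsQ}, I would prove by recognizing these as the leg functions of the companion type-Q equations attached to the given type-H equation and invoking the first part of the theorem for those equations; the delicate point is the bookkeeping needed to match each leg function and its starred partner to the correct companion and to track the $\varepsilon$-dependence. For $\varepsilon=0$ both symmetries of \eqref{symmetriesquad} are restored, so the reduction proceeds as in the type-Q case, and the algebraic equations $H1_{(\varepsilon)}$ are handled exactly as $Q1_{(0)}$, with all four equations (or, for $\varepsilon=1$, the pair $(\Ha),(\Hd)$) written in the additive form \eqref{3legadd}.
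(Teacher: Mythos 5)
Your proposal is correct in substance, and its computational core is the same as the paper's proof: the paper, too, establishes each three-leg equation by direct verification of the factorisation \eqref{quadfactor}, i.e.\ that with the explicit functions of Tables \ref{tab:4legquads} and \ref{tab:3legquadsH} every $(\Qi)$ is equivalent to $f(x_i;\al,\bt)\,Q=0$ for $i\in\{a,b,c,d\}$, and every $(\Hi)$ to $f_1(x_i;\al,\bt)\,H=0$ for $i\in\{a,d\}$ or $f_2(x_i;\al,\bt)\,H=0$ for $i\in\{b,c\}$. Where you genuinely differ is in organisation: the paper simply checks all four factorisations per system, whereas you check one (type-Q) or three (type-H) and transport the rest by the square symmetries \eqref{symmetriesquad}. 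That reduction is legitimate, and your analysis of the type-H case is accurate: for $\varepsilon=1$ only the anti-diagonal reflection survives, which pairs $(\Ha)$ with $(\Hd)$ and forces separate verification of $(\Hb)$ and $(\Hc)$ --- this is precisely the structure the paper's proof records by giving a single $f_1$ for centers $a,d$ and a single $f_2$ for centers $b,c$. For $\varepsilon=0$ your reduction additionally uses (implicitly) that $a=a^\ast$ and $c=c^\ast$, which is true by Table \ref{tab:3legquadsH} but should be said, since the stated equations $(\Hb)$, $(\Hc)$ are written with the unstarred functions.

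The one step that needs repair is the reflection identity $a(x;y;\gamma)\,a(x;y;-\gamma)=1$ on which your reflection-based transport relies. It holds identically for every hyperbolic, rational, and algebraic leg function in Table \ref{tab:4legquads} (in particular for the $a^\ast$ functions $A3_{(0)}$, $A2_{(1;\,0)}$, $A2_{(0;\,0)}$, where it is genuinely unavoidable in the $\varepsilon=1$ type-H argument, with the additive analogue $a+a^{-}=0$ in the algebraic cases), but it fails for the elliptic leg function \eqref{a4q4leg} as written: its prefactor $\sigma\bigl(\wp(x)-\wp(\al)\bigr)/\sigma\bigl(\wp(x)+\wp(\al)\bigr)$ is even in $\al$ while the leading fraction inverts, so the product of the two reflections equals the square of that prefactor rather than $1$; the identity only becomes transparent in uniformising variables. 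Fortunately, for type-Q you can dispense with the identity altogether: since both symmetries of \eqref{symmetriesquad} hold, their compositions generate the rotations of the square, and applying the rotations (rather than single reflections) to the center-$x_a$ factorisation reproduces $(\Qb)$, $(\Qc)$, $(\Qd)$ \emph{exactly} as stated, with no leftover diagonal leg to cancel --- e.g.\ the composition of the two reflections sends $(x_a,x_b,x_c,x_d;\al,\bt)\mapsto(x_c,x_a,x_d,x_b;\bt,\al)$, and the $(\Qa)$ template for the relabelled equation is precisely $(\Qc)$. With that substitution your argument goes through in all cases, including $Q4$.
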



\begin{proof}
Theorem \ref{thm:3leg} can be verified directly by using the functions $a(x_i,x_j;\al)$, $a^\ast(x_i;x_j;\al)$, $c(x_i,x_j;\al)$, and $c^\ast(x_i;x_j;\al)$, that are given in Table \ref{tab:3legquadsH} of Appendix \ref{app:equations}.  With these functions an equation $(\Qi)$ from \eqref{43legsQ} is equivalent to (as was given in \eqref{quadfactor} for $(\Qa)$)
\begin{equation}
f(x_i;\alpha,\beta)\quadQ{x_a}{x_b}{x_c}{x_d}{\alpha}{\beta}=0,\qquad i\in\{a,b,c,d\},
\end{equation}
where $Q$ is some type-Q equation from Table \ref{tab:ABSlist}, and an equation $(\Hi)$ from \eqref{43legsH} is equivalent to (as was given in \eqref{quadfactor} for $(\Ha)$)
\begin{equation}
\begin{split}
f_1(x_i;\alpha,\beta)\quadH{x_a}{x_b}{x_c}{x_d}{\alpha}{\beta}=0,\qquad i\in\{a,d\}, \\
f_2(x_i;\alpha,\beta)\quadH{x_a}{x_b}{x_c}{x_d}{\alpha}{\beta}=0,\qquad i\in\{b,c\},
\end{split}
\end{equation}
where $H$ is some type-H equation from Table \ref{tab:ABSlist}.
\end{proof}

The existence of consistent systems of three-leg equations \eqref{43legsQ} and \eqref{43legsH} may be naturally understood through the quasi-classical limit of the star-triangle relations.  The idea is as follows. Equating both sides at leading order of a quasi-classical expansion of \eqref{STRH}, implies a classical star-triangle relation formula that takes the form (up to some irrelevant factors independent of the variables)
\begin{equation}\label{CSTR}
\begin{split}
\olam_{\alpha_1}(x_1,x_0)+\lam_{\alpha_1+\alpha_3}(x_2,x_0)+\ol_{\alpha_3}(x_0,x_3)=
\lam_{\alpha_1}(x_2,x_3)+\olam_{\alpha_1+\alpha_3}(x_1,x_3)+\lag_{\alpha_3}(x_2,x_1).
\end{split}
\end{equation}
The left hand side is the same that appears inside the exponential for the second integral of \eqref{STRqcl}, and the right hand side comes from a quasi-classical expansion of the right hand side of \eqref{STRH}.  The equation \eqref{CSTR} is required to hold on solutions of the second saddle point equation in \eqref{STRsaddlepoint}.  Recall that this saddle point equation leads to the three-leg equation $(\Ha)$ given in \eqref{3legmult}.  Assuming \eqref{CSTR} holds, there are also three other partial derivatives that can be taken with respect to the three variables $x_1,x_2,x_3$, and following a similar procedure that was used to obtain \eqref{3legmult}, these derivatives would lead to the other three-leg equations $(\Hb)$--$(\Hd)$.

\subsection{Face-centered quad equations and star-star relations}

A face-centered quad equation may be regarded as an extension of the regular quad equation \eqref{afflinquad}, having dependence on an additional variable and parameters.  The parameters may be written with two components as
\begin{equation}
\bal=(\al_1,\al_2),\quad \bbt=(\bt_1,\bt_2).    
\end{equation}
A face-centered quad equation may then be written as
\begin{equation}\label{afflin}
\At{x}{x_a}{x_b}{x_c}{x_d}{\bal}{\bbt}=0,
\end{equation}
where $A$ is a multivariate polynomial of five variables $x,x_a,x_b,x_c,x_d$.  It is multilinear in the four variables $x_a,x_b,x_c,x_d$, but there is no restriction imposed on the degree of $x$.  Thus, it is a polynomial with the general form
\begin{equation}\label{afflinpoly}
\begin{split}
\At{\ccx}{x_a}{x_b}{x_c}{x_d}{\bal}{\bbt} 
=\kappa_1x_ax_bx_cx_d+\kappa_2x_ax_bx_c+\kappa_3x_ax_bx_d+\kappa_4x_ax_cx_d+\kappa_5x_bx_cx_d\phantom{,}&
\\
+\kappa_6x_ax_b+\kappa_7x_ax_c+\kappa_8x_ax_d+\kappa_9x_bx_c+\kappa_{10}x_bx_d+\kappa_{11}x_cx_d\phantom{,}& \\
+\kappa_{12}x_a+\kappa_{13}x_b+\kappa_{14}x_c+\kappa_{15}x_d+\kappa_{16},&
\end{split}
\end{equation}
where the coefficients $\kappa_i=\kappa_i(x,\bal,\bbt)$ ($i=1,\ldots,16$) depend on the face variable $\ccx$ and the components of the two parameters $\bal,\bbt$.  The multilinear expression \eqref{afflinpoly} resembles the multilinear expression for regular quad equations \eqref{afflinquadpoly}, but with more general coefficients $\kappa_i$.

The face-centered quad equation \eqref{afflin} may be associated to the square of Figure \ref{fig:fcquadA}, where the variables $x_a,x_b,x_c,x_d$ are assigned to vertices at corners, the variable $x$ is assigned to a central vertex on the face, and the components of the parameters are assigned to edges.  When evolving the equations in the square lattice it is never required to solve for the variable $x$, which is the reason that the dependence on $x$ does not have to be linear.

\begin{figure}[htb!]
\centering
\begin{tikzpicture}[scale=1.5]

\coordinate (A0) at (0,0);
\coordinate (A1) at (-1,1);
\coordinate (A2) at (1,1);
\coordinate (A3) at (-1,-1);
\coordinate (A4) at (1,-1);

\draw[gray,very thin,dashed] (A1)--(A2)--(A4)--(A3)--cycle;

\draw (A0)--(A1);
\draw (A0)--(A2);
\draw (A0)--(A3);
\draw (A0)--(A4);

\fill (A3) circle (1.7pt)
node[left=1.5pt]{\color{black} $x_c$};
\fill (A1) circle (1.7pt)
node[left=1.5pt]{\color{black} $x_a$};
\fill (A2) circle (1.7pt)
node[right=1.5pt]{\color{black} $x_b$};
\fill (A4) circle (1.7pt)
node[right=1.5pt]{\color{black} $x_d$};
\fill (A0) circle (1.7pt)
node[left=1.5pt]{\color{black} $x$};

\draw[-,dotted] (-1.2,0.5)--(1.2,0.5);
\draw[-,dotted] (-1.2,-0.5)--(1.2,-0.5);
\draw[-,dotted] (-0.5,-1.2)--(-0.5,1.2);
\draw[-,dotted] (0.5,-1.2)--(0.5,1.2);

\draw (-1.2,0.5) circle (0.01pt)
node[left=0pt]{\small $\al_2$};
\draw (1.2,0.5) circle (0.01pt)
node[right=0pt]{\small $\al_2$};
\draw (-1.2,-0.5) circle (0.01pt)
node[left=0pt]{\small $\al_1$};
\draw (1.2,-0.5) circle (0.01pt)
node[right=0pt]{\small $\al_1$};
\draw (-0.5,-1.2) circle (0.01pt)
node[below=0pt]{\small $\bt_1$};
\draw (-0.5,1.2) circle (0.01pt)
node[above=0pt]{\small $\bt_1$};
\draw (0.5,-1.2) circle (0.01pt)
node[below=0pt]{\small $\bt_2$};
\draw (0.5,1.2) circle (0.01pt)
node[above=0pt]{\small $\bt_2$};

\fill (0,-1.5) circle (0.01pt)
node[below=0.5pt]{\color{black}\small $\At{\ccx}{x_a}{x_b}{x_c}{x_d}{\bal}{\bbt}=0$};

\end{tikzpicture}
\caption{The face-centered quad equation \eqref{afflin}.}
\label{fig:fcquadA}
\end{figure}
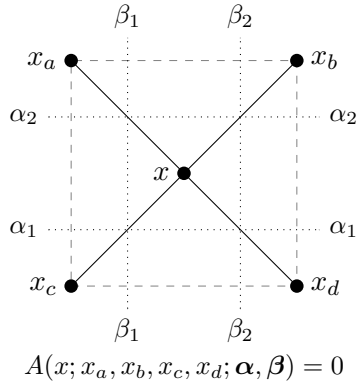

The face-centered quad equation \eqref{afflin} may also be written in the form
\begin{equation}\label{afflinsum}
\sum_{i=0}^nx^iP_i(x_a,x_b,x_c,x_d;\bal,\bbt)=0,
\end{equation}
where $n$ is the degree of the polynomial \eqref{afflinpoly} in $x$, and the $P_i$ ($i=0,1,\ldots,n$) are multilinear polynomials in the four variables $x_a,x_b,x_c,x_d$, {\it i.e.}, the $P_i$ are regular quad equations (but depending on the two-component parameters $\bal,\bbt$, instead of two scalar parameters).

For face-centered quad equations the analogue of the star-triangle relations \eqref{STRQ} and \eqref{STRH} are the star-star relations \cite{Bazhanov:1992jqa,Baxter:1997tn,Bazhanov:2011mz}.  Two examples of such star-star relations for the case of continuous spin variables have the forms
\begin{align}\label{SSRa}
W_{v_1-v_2}(\spn_1,\spn_2)W_{u_1-u_2}(\spn_1,\spn_3)\int_C w^{(1)}_{\bu\bv}(\bspn) &=
W_{v_1-v_2}(\spn_3,\spn_4)W_{u_1-u_2}(\spn_2,\spn_4)\int_C w^{(2)}_{\bu\bv}(\bspn), \\
\label{SSRc}
V_{v_1-v_2}(\spn_1,\spn_2)W_{u_1-u_2}(\spn_1,\spn_3)\int_C v^{(1)}_{\bu\bv}(\bspn) &=
V_{v_1-v_2}(\spn_3,\spn_4)W_{u_1-u_2}(\spn_2,\spn_4)\int_C v^{(2)}_{\bu\bv}(\bspn),
\end{align}
where $\bspn$, $\bu$, $\bv$, represent variables $\bspn=(\spn_0,\spn_1,\spn_2,\spn_3,\spn_4)$ and parameters $\bu=(u_1,u_2)$, $\bv=(v_1,v_2)$, and the integrands are
\begin{equation}
\begin{split}
w^{(1)}_{\bu\bv}(\bspn)&=d\spn_0
W_{u_2-v_1}(\spn_1,\spn_0)\oW_{u_2-v_2}(\spn_0,\spn_2)\oW_{u_1-v_1}(\spn_3,\spn_0)W_{u_1-v_2}(\spn_0,\spn_4), \\
w^{(2)}_{\bu\bv}(\bspn)&=d\spn_0
W_{u_2-v_1}(\spn_0,\spn_4)\oW_{u_2-v_2}(\spn_3,\spn_0)\oW_{u_1-v_1}(\spn_0,\spn_2)W_{u_1-v_2}(\spn_1,\spn_0), \\
v^{(1)}_{\bu\bv}(\bspn)&=d\spn_0
W_{u_2-v_1}(\spn_1,\spn_0)\oV_{u_2-v_2}(\spn_0,\spn_2)\oW_{u_1-v_1}(\spn_3,\spn_0)V_{u_1-v_2}(\spn_0,\spn_4), \\
v^{(2)}_{\bu\bv}(\bspn)&=d\spn_0
W_{u_2-v_1}(\spn_0,\spn_4)\oV_{u_2-v_2}(\spn_3,\spn_0)\oW_{u_1-v_1}(\spn_0,\spn_2)V_{u_1-v_2}(\spn_1,\spn_0).
\end{split}
\end{equation}
Similarly to the star-triangle relations \eqref{STRQ} and \eqref{STRH}, each of $V_\theta(\spn_i,\spn_j)$, $\oV_\theta(\spn_i,\spn_j)$, $W_\theta(\spn_i,\spn_j)$, $\oW_\theta(\spn_i,\spn_j)$ are complex-valued functions of $\spn_i,\spn_j,\theta$, while the formulas \eqref{SSRa} and \eqref{SSRc} depend on the independent variables $\spn_1,\spn_2,\spn_3,\spn_4$, and parameters $u_1,u_2,v_1,v_2$.  The contours $C$ are typically chosen to separate infinite sequences of poles that go to zero and/or infinity. In terms of hypergeometric integrals, examples of these formulas have been shown \cite{Bazhanov:2013bh} to be equivalent to the simplest ($n=1$) examples of transformation formulas between hypergeometric integrals associated to the $A_n$ root system \cite{RainsT}.

Consider the left hand sides of the star-star relations \eqref{SSRa} and \eqref{SSRc}. In a quasi-classical limit these left hand sides are written in the respective forms
\begin{equation}\label{qclssr}
\begin{gathered}
\int_Cdx_0\,\EXP^{\hbar^{-1}\bigl(\lag_{u_2'-v_1'}(x_1,x_0)+\ol_{u_2'-v_2'}(x_0,x_2)+\ol_{u_1'-v_1'}(x_3,x_0)+\lag_{u_1'-v_2'}(x_0,x_4)+\lag_{v_1'-v_2'}(x_1,x_2)+\lag_{u_1'-u_2'}(x_1,x_3)\bigr)+O(1)}\!, \\
\int_Cdx_0\,\EXP^{\hbar^{-1}\bigl(\lag_{u_2'-v_1'}(x_1,x_0)+\olam_{u_2'-v_2'}(x_0,x_2)+\ol_{u_1'-v_1'}(x_3,x_0)+\lam_{u_1'-v_2'}(x_0,x_4)+\lam_{v_1'-v_2'}(x_1,x_2)+\lag_{u_1'-u_2'}(x_1,x_3)\bigr)+O(1)}\!,
\end{gathered}
\end{equation}
for some parameter $\hbar\to0$, where each of $\lam_\alpha(x_i,x_j)$, $\olam_\alpha(x_i,x_j)$, $\lag_\alpha(x_i,x_j)$, $\ol_\alpha(x_i,x_j)$, are complex-valued functions of $x_i,x_j,\alpha$, which arise as the leading $O(\hbar^{-1})$ asymptotics of the respective functions $V_\theta(\spn_i,\spn_j)$, $\oV_\theta(\spn_i,\spn_j)$, $W_\theta(\spn_i,\spn_j)$, $\oW_\theta(\spn_i,\spn_j)$.  Analogously to \eqref{YBEcov}, the variables and parameters of \eqref{SSRa}, \eqref{SSRc}, and \eqref{qclssr}, are related by
\begin{equation}\label{YBEcov2}
\spn_i=F_i(x_i),\quad i=0,1,2,3,4\qquad u_j'=G(u_j),\quad v_j'=G(v_j),\quad j=1,2,
\end{equation}
where $F_i(z)$ and $G(z)$ are some M\"obius transformations that depend on $\hbar$.  The saddle-point equations of these integrals are given by
\begin{equation}\label{SSRsaddle}
\begin{split}
\frac{\partial}{\partial x_0}\bigl(\lag_{u_2'-v_1'}(x_1,x_0)+\ol_{u_2'-v_2'}(x_0,x_2)+\ol_{u_1'-v_1'}(x_3,x_0)+\lag_{u_1'-v_2'}(x_0,x_4)\bigr)=0, \\
\frac{\partial}{\partial x_0}\bigl(\lag_{u_2'-v_1'}(x_1,x_0)+\olam_{u_2'-v_2'}(x_0,x_2)+\ol_{u_1'-v_1'}(x_3,x_0)+\lam_{u_1'-v_2'}(x_0,x_4)\bigr)=0. 
\end{split}
\end{equation}
Analogously to \eqref{quadcov}, to obtain a discrete integrable equation from \eqref{SSRsaddle} a point transformation
\begin{equation}\label{quadcov2}
\begin{gathered}
x=f_0(x_0),\quad x_a=f_1(x_1),\quad x_b=f_2(x_2),\quad x_c=f_3(x_3),\quad x_d=f_4(x_4), \\
\al_j=g(u_j'),\quad \bt_j=g(v_j'),\quad j=1,2,
\end{gathered}
\end{equation}
of the variables and parameters should be used that will turn the saddle-point equations \eqref{STRsaddlepoint} into equations of the form 
\begin{align}
\label{4lega}
\frac{
a(x;x_a;\bt_1-\al_2)a(x;x_d;\bt_2-\al_1)}{a(x;x_b;\bt_2-\al_2)a(x;x_c;\bt_1-\al_1)}=1, \\
\label{4legc}
\frac{a(x;x_a;\bt_1-\al_2)c(x;x_d;\bt_2-\al_1)}{a(x;x_b;\bt_2-\al_2)c(x;x_c;\bt_1-\al_1)}=1,
\end{align}
or the form
\begin{align}
\label{4legadd}
a(x;x_a;\bt_1-\al_2)+a(x;x_d;\bt_2-\al_1)-a(x;x_b;\bt_2-\al_2)-a(x;x_c;\bt_1-\al_1)=0, \\
\label{4legcdd}
a(x;x_a;\bt_1-\al_2)+c(x;x_d;\bt_2-\al_1)-a(x;x_b;\bt_2-\al_2)-c(x;x_c;\bt_1-\al_1)=0,
\end{align}
where the functions $a(x;y;\al)$ and $c(x;y;\al)$ are linear fractional functions of the variable $y$.  Typical choices of the functions $f_i$ and $g$ for \eqref{quadcov2} are the same as those listed in Table \ref{tab:cov}, depending on whether the equations \eqref{SSRa}, \eqref{SSRc}, \eqref{qclssr}, \eqref{SSRsaddle}, are elliptic, hyperbolic, rational, or algebraic.  The algebraic cases give equations of the form \eqref{4legadd} and \eqref{4legcdd}, and all other cases give equations of the form \eqref{4lega} and \eqref{4legc}.

For the equations of the form \eqref{4legadd} and \eqref{4legcdd}, if after combining each term under a common denominator the numerator of the resulting expression has the form \eqref{afflinpoly}, then this numerator provides the desired face-centered quad equation.

For the equations of the form \eqref{4lega} and \eqref{4legc}, since $a(x;y;\alpha)$ and $c(x;y;\alpha)$ are linear fractional in $y$, these equations have the form
\begin{equation}\label{tlg2}
\frac{N(x;x_a;x_b,x_c,x_d;\bal,\bbt)}{D(x;x_a;x_b,x_c,x_d;\bal,\bbt)}=1,
\end{equation}
where $N(x;x_a;x_b,x_c,x_d;\bal,\bbt)$ and $D(x;x_a;x_b,x_c,x_d;\bal,\bbt)$ are multilinear polynomials of the four variables $x_a$, $x_b$, $x_c$, $x_d$.  Rewriting the equation \eqref{tlg2} as
\begin{equation}\label{tlgg2}
N(x;x_a;x_b,x_c,x_d;\bal,\bbt)-D(x;x_a;x_b,x_c,x_d;\bal,\bbt)=0,
\end{equation}
the desired multilinear face-centered quad equation is then obtained if a factorisation of \eqref{tlgg2} can be found of the form
\begin{equation}\label{quadfactor2}
f(x;\bal,\bbt)\At{x}{x_a}{x_b}{x_c}{x_d}{\bal}{\bbt}=0,
\end{equation}
where $f$ is some function independent of $x_a,x_b,x_c,x_d$, and $A$ is a polynomial of the form \eqref{afflinpoly}.

The star-star relations \eqref{SSRa} and \eqref{SSRc} are known to imply a Yang-Baxter equation for interaction-round-a-face (IRF) models \cite{Bazhanov:1992jqa,Bazhanov:2011mz}, and this connection was used \cite{Kels:2020zjn} with the Boltzmann weights for the star-triangle relations corresponding to ABS equations \cite{Kels:2018xge} to derive the type-A and type-C face-centered quad equations that are listed in Table \ref{tab:CAFCClist}, and given explicitly in Appendix \ref{app:equations}.  The type-A equations and type-C equations may be derived from the quasi-classical expansion of star-star relations of the form \eqref{SSRa} and \eqref{SSRc}, respectively.  The third type of equations known as type-B would come from star-star relations involving only $V_\theta(\spn_i,\spn_j)$ and $\oV_\theta(\spn_i,\spn_j)$, but no such relations are known to the author, and for this reason type-B equations do not appear in this paper. 
Unlike the quad equations listed in Table \ref{tab:CAFCClist}, there currently does not exist a classification of face-centered quad equations of the form \eqref{afflin}.

\begin{table}[htb!]
\centering
\begin{tabular}{c|c|c}

 & Type-A equations & Type-C equations 
 
 \\
 
 \hline
 
 
Elliptic & $A4$ & - \\

Hyperbolic & $A3_{(1)}$, $A3_{(0)}$ & $C3_{(1/2;\,1/2;\,0)}$, $C3_{(1/2;\,0;\,1/2)}$, $C3_{(1;\,0;\,0)}$, $C3_{(0;\,0;\,0)}$ \\[0.0cm]

Rational & $A2_{(1;\,1)}$, $A2_{(1;\,0)}$ & $C2_{(1;\,1;\,0)}$, $C2_{(1;\,0;\,1)}$, $C2_{(1;\,0;\,0)}$, $C1_{(1)}$ \\[0.0cm]

Algebraic & $A2_{(0;\,0)}$ & $C2_{(0;\,0;\,0)}$, $C1_{(0)}$

\\[0.0cm]

\hline 
\end{tabular}
\caption{Type-A and type-C face-centered quad equations grouped according to whether they come from a quasi-classical expansion involving elliptic, hyperbolic, rational, or algebraic equations \cite{Kels:2020zjn}. The equations are given in Appendix \ref{app:equations}.}
\label{tab:CAFCClist}
\end{table}

The analogues of the square symmetries \eqref{symmetriesquad} for face-centered quad equations are
\begin{equation}\label{symmetriesfcqe}
\begin{split}
 \At{x}{x_a}{x_b}{x_c}{x_d}{\al}{\bt}=
-\At{x}{x_b}{x_a}{x_d}{x_c}{\al}{\hat{\bt}}, \\
 \At{x}{x_a}{x_b}{x_c}{x_d}{\al}{\bt}=
-\At{x}{x_c}{x_d}{x_a}{x_b}{\hat{\al}}{\bt}, \\
 \At{x}{x_a}{x_b}{x_c}{x_d}{\al}{\bt}=
-\At{x}{x_d}{x_b}{x_c}{x_a}{\bt}{\al}.
\end{split}
\end{equation}
Type-A equations listed in Table \ref{tab:CAFCClist} satisfy each of these symmetries, while type-C equations only satisfy the first of these symmetries.

It was found \cite{GubbiottiKels} that in the square lattice not all type-C equations from Table \ref{tab:CAFCClist} have vanishing algebraic entropy on their own, but can have vanishing algebraic entropy in specific arrangements of the following pairs (some equations are simply paired with themselves)
\begin{equation}\label{typeCpairs}
\begin{gathered}
\bigl(C3_{(1/2;\,1/2;\,0)},C3_{(1/2;\,0;\,1/2)}\bigr),\;
\bigl(C3_{(1;\,0;\,0)},C3_{(1;\,0;\,0)}\bigr),\;
\bigl(C3_{(0;\,0;\,0)},C3_{(0;\,0;\,0)}\bigr),\\
\bigl(C2_{(1;\,1;\,0)},C2_{(1;\,0;\,1)}\bigr),\;
\bigl(C2_{(1;\,0;\,0)},C2_{(1;\,0;\,0)}\bigr),\;
\bigl(C2_{(0;\,0;\,0)},C1_{(1)}\bigr),\;
\bigl(C1_{(0)},C1_{(0)}\bigr).
\end{gathered}
\end{equation}
These pairs will be needed for constructing consistent systems of equations involving type-C equations.

\subsubsection{Consistency-around-a-hexagon}

Let $X$ denote the six-tuple of variables 
\begin{equation}
X=(\xa,\xb,\xf,\xd,\xc,\xe),
\end{equation}
and let $X_i$ ($i\in\mathbb{Z}/6\mathbb{Z}$) denote the i\textsuperscript{th} (mod $6$) element of $X$.  These may be considered as variables assigned to six consecutive vertices of a hexagon, as shown in Figure \ref{fig:hexagon}.

\begin{figure}[htb!]
\centering
\begin{tikzpicture}[scale=1.5]

\pgfmathsetmacro\tr{sqrt(3)/2}

\coordinate (A1) at (-1,0);
\coordinate (A2) at (-1/2,{\tr});
\coordinate (A3) at (1/2,{\tr});
\coordinate (A4) at (1,0);
\coordinate (A5) at (1/2,{-\tr});
\coordinate (A6) at (-1/2,{-\tr});

\draw (A1)--(A2);
\draw (A2)--(A3);
\draw (A3)--(A4);
\draw (A4)--(A5);
\draw (A5)--(A6);
\draw (A6)--(A1);

\fill (A1) circle (1.4pt) node[left]{$\xa$};
\fill (A2) circle (1.4pt) node[above left]{$\xb$};
\fill (A3) circle (1.4pt) node[above right]{$\xf$};
\fill (A4) circle (1.4pt) node[right]{$\xd$};
\fill (A5) circle (1.4pt) node[below right]{$\xc$};
\fill (A6) circle (1.4pt) node[below left]{$\xe$};

 \end{tikzpicture}
 
 \caption{Variables assigned to vertices of a hexagon.}
 \label{fig:hexagon}

\end{figure}
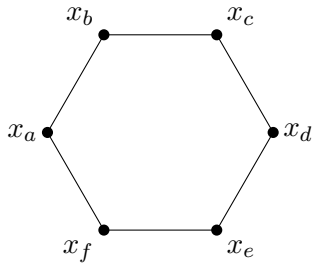

The elements of $X$ may be split up into two disjoint sets $S_{i}$ and $\overline{S}_i$ that respectively contain consecutive elements of $X$
\begin{equation}
S_{i}=\{X_{i},X_{i+1},X_{i+2},X_{i+3}\},\quad \overline{S}_i=\{X_{i-2},X_{i-1}\},\qquad i\in\mathbb{Z}/6\mathbb{Z}.
\end{equation}
An overdetermined system of face-centered quad equations can be used to consistently determine any two variables on consecutive vertices of the hexagon (from $\overline{S}_i$) in terms of the other four (from $S_i$) as follows.
\begin{thm}[Consistency-around-a-hexagon]\label{thm:CAH}
Let $A$ be a polynomial for one of the type-A face-centered quad equations listed in Table \ref{tab:CAFCClist}. For any $i\in\mathbb{Z}/6\mathbb{Z}$, the two equations from
\begin{equation}\label{64legsA}
\begin{alignedat}{2}
(\Aa)&&\qquad\At{\xa}{\xe}{\xc}{\xb}{\xf}{(\bt_2,\al_2)}{(\bt_1,\al_1)}=0, \\
(\Ab)&&\qquad\At{\xb}{\xf}{\xd}{\xa}{\xe}{(\bt_2,\al_1)}{(\bt_1,\al_2)}=0, \\
(\Af)&&\qquad\At{\xf}{\xb}{\xd}{\xa}{\xc}{(\bt_2,\bt_1)}{(\al_1,\al_2)}=0, \\
(\Ad)&&\qquad\At{\xd}{\xf}{\xb}{\xc}{\xe}{(\bt_2,\al_2)}{(\bt_1,\al_1)}=0, \\
(\Ac)&&\qquad\At{\xc}{\xe}{\xa}{\xd}{\xf}{(\bt_2,\al_1)}{(\bt_1,\al_2)}=0, \\
(\Ae)&&\qquad\At{\xe}{\xc}{\xa}{\xd}{\xb}{(\bt_2,\bt_1)}{(\al_1,\al_2)}=0,
\end{alignedat}
\end{equation}
which are respectively independent of $X_{i-1}$ and $X_{i-2}$ may be used to solve uniquely for the two variables $X_{i-2},X_{i-1}\in\overline{S}_i$ in terms of the four variables from $S_i$, and the remaining four equations from \eqref{64legsA} are also satisfied on these solutions.

In a similar way, let one of $(C,\overline{C})$ or $(\overline{C},C)$ denote one of the pairs of type-C face-centered quad equations given in \eqref{typeCpairs}.  For any $i\in\mathbb{Z}/6\mathbb{Z}$, the two equations from
\begin{equation}\label{64legsC}
\begin{alignedat}{2}
(\Ca)&&\qquad\Cft{\xa}{\xe}{\xc}{\xb}{\xf}{(\bt_2,\al_2)}{(\bt_1,\al_1)}=0, \\
(\Cb)&&\qquad \Ct{\xb}{\xf}{\xd}{\xa}{\xe}{(\bt_2,\al_1)}{(\bt_1,\al_2)}=0, \\
(\Cf)&&\qquad \Ct{\xf}{\xb}{\xd}{\xa}{\xc}{(\bt_2,\bt_1)}{(\al_1,\al_2)}=0, \\
(\Cd)&&\qquad \Ct{\xd}{\xf}{\xb}{\xc}{\xe}{(\bt_2,\al_2)}{(\bt_1,\al_1)}=0, \\
(\Cc)&&\qquad\Cft{\xc}{\xe}{\xa}{\xd}{\xf}{(\bt_2,\al_1)}{(\bt_1,\al_2)}=0, \\
(\Ce)&&\qquad\Cft{\xe}{\xc}{\xa}{\xd}{\xb}{(\bt_2,\bt_1)}{(\al_1,\al_2)}=0,
\end{alignedat}
\end{equation}
which are respectively independent of $X_{i-1}$ and $X_{i-2}$ may be used to solve uniquely for the two variables $X_{i-2},X_{i-1}\in\overline{S}_i$ in terms of the four variables from $S_i$, and the remaining four equations from \eqref{64legsC} are also satisfied on these solutions. 
\end{thm}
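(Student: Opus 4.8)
The plan is to reduce each of the six face-centered quad equations to its four-leg form and then to single out the two designated equations (those independent of $X_{i-1}$ and $X_{i-2}$) as the ones that determine the evolution, verifying afterwards that the remaining four are automatically satisfied. Concretely, for the type-A case I would rewrite each equation in \eqref{64legsA} in the multiplicative four-leg form \eqref{4lega} (or the additive form \eqref{4legadd} in the algebraic case), so that the equation centered at a vertex $X_j$ becomes a relation among the four legs $a(X_j;X_k;\gamma_{jk})$ pairing $X_j$ with its four non-opposite neighbours $X_k$, the parameters $\gamma_{jk}$ being differences of the edge parameters $\al_1,\al_2,\bt_1,\bt_2$. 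For type-C the same is done using \eqref{4legc} and \eqref{4legcdd}, now with both leg functions $a$ and $c$ and with the equations chosen from one of the admissible pairs in \eqref{typeCpairs}.

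For the first assertion (unique solvability) I would note that the equation independent of $X_{i-1}$ is centered at $X_{i+2}$ and, among the vertices of $\overline{S}_i$, contains only $X_{i-2}$ as a corner variable, while the equation independent of $X_{i-2}$ is centered at $X_{i+1}$ and contains only $X_{i-1}$. Since a face-centered quad equation is multilinear in its corner variables, each of these two equations is a linear-fractional relation in a single unknown once the four variables of $S_i$ are fixed, so each can be solved uniquely and rationally; crucially the two equations decouple, one delivering $X_{i-2}$ and the other $X_{i-1}$ in terms of $S_i$ alone.

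The heart of the proof is the second assertion, that the remaining four equations hold on this solution. Here I would use that the leg function $a(x;y;\gamma)$ is exactly the function appearing in the type-Q three-leg systems of Theorem \ref{thm:3leg}, and that the symmetries \eqref{symmetriesfcqe} of the face-centered equations, together with the square symmetries \eqref{symmetriesquad} of the underlying type-Q and type-H equations, provide the reciprocity relating a leg $a(X_j;X_k;\gamma)$ to the leg $a(X_k;X_j;\gamma')$ carried by the neighbouring equation. The strategy is a telescoping argument around the hexagon: every leg joining two non-opposite vertices occurs in precisely the two four-leg equations centered at its endpoints, and matching these occurrences by means of the reciprocity and the three-leg identities of Theorem \ref{thm:3leg} shows that the six four-leg forms are not independent, so that once $X_{i-2}$ and $X_{i-1}$ are fixed by the two designated equations the remaining four follow. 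For type-C one repeats this bookkeeping keeping track of which legs carry $a$ and which carry $c$ (and the conjugate functions $a^\ast,c^\ast$); this is where the restriction to the pairs \eqref{typeCpairs} becomes essential, since only for these pairings do the $c$-type legs match consistently all the way around the hexagon.

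I expect the main obstacle to be precisely this parameter-and-leg bookkeeping in the last step: verifying that the cancellation around the hexagon is complete and that each of the four remaining equations follows individually rather than merely their product, and, for type-C, that the $a$- and $c$-legs are compatible only for the admissible pairs. A useful guide and independent check is the quasi-classical origin of these equations, in direct parallel with the discussion following Theorem \ref{thm:3leg}: the classical limit of the star-star relations \eqref{SSRa} and \eqref{SSRc} yields an identity among Lagrangian functions that holds on solutions of the saddle-point equation \eqref{SSRsaddle}, and differentiating that identity with respect to the several external variables reproduces the full system \eqref{64legsA} (resp.\ \eqref{64legsC}), explaining conceptually why the six equations are mutually consistent. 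If the telescoping proves unwieldy, the fallback is a direct verification in the spirit of the proof of Theorem \ref{thm:3leg}, substituting the explicit leg functions and polynomials from the appendix.
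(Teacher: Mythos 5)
Your treatment of the first assertion is correct and matches what the paper's verification establishes: the equation independent of $X_{i-1}$ is centered at $X_{i+2}$ and has $X_{i-2}$ as its only unknown corner variable, the equation independent of $X_{i-2}$ is centered at $X_{i+1}$ and has only $X_{i-1}$, and multilinearity in the corner variables gives unique, decoupled, rational solvability. You should also know that the paper's entire proof of Theorem \ref{thm:CAH} is the single sentence that it ``can be verified directly,'' i.e.\ a case-by-case polynomial computation with the explicit expressions of Appendix \ref{app:equations}; the quasi-classical star-star discussion around \eqref{CSSR} is presented only as a heuristic explanation, and the paper explicitly declines to use it as a proof. So your ``fallback'' is in fact the whole of the paper's argument, and your closing paragraph mirrors the paper's own commentary.

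The gap lies in your primary telescoping route, and it is structural rather than a matter of bookkeeping. A leg $a(X_j;X_k;\gamma)$ occurs in exactly one of the six four-leg equations (the one centered at $X_j$), and its reciprocal partner $a(X_k;X_j;\gamma)$ occurs only in the equation centered at $X_k$; hence the only combination in which every leg cancels against its partner is the product of all six equations. No proper sub-collection telescopes completely (a pair of equations cancels only the two legs joining its centers; a triple still leaves six legs uncancelled), so telescoping produces exactly one dependency relation among the six equations. That lets you deduce a sixth equation from five, or conclude that the product of the remaining four four-leg forms equals one, but it cannot deliver each of the four remaining equations individually from the two designated ones --- which is precisely what the theorem asserts. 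Moreover, the three-leg identities of Theorem \ref{thm:3leg} that you invoke hold on solutions of quad equations, and there are no quad equations present in the hexagon configuration (in the paper they enter only when a four-leg equation is assembled from four quad equations arranged around a vertex, as in \eqref{ABStoCAFCC1} and Figure \ref{fig:ABStoCAFCC}, a structure absent here); and the four-leg form is equivalent to the multilinear polynomial equation only up to the factor $f(x;\bal,\bbt)$ in \eqref{quadfactor2}, so a leg-level argument would in any case prove the claim only modulo spurious branches. The same objections apply to the type-C case, compounded by the orientation constraints that make the pairs \eqref{typeCpairs} necessary. To complete the proof you must either perform the direct verification, as the paper does, or make the quasi-classical derivative argument rigorous case by case, which the paper itself avoids.
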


Theorem \ref{thm:CAH} can be verified directly.  
%
The initial conditions for Theorem \ref{thm:CAH} are on four consecutive vertices of the hexagon, but this could be relaxed to initial conditions on three consecutive vertices plus any one of the other three vertices of the hexagon.  The result of Theorem \ref{thm:CAH} will be important for considering systems of equations in the hexagonal lattice in Section \ref{sec:hexagon} as it allows for well-defined evolutions of the equations in six different lattice directions. 

Theorem \ref{thm:CAH} is in a sense an analogue for face-centered quad equations of Theorem \ref{thm:3leg} for the three-leg equations associated to ABS quad equations.  However, Theorem \ref{thm:3leg} is only relevant for systems of three-leg equations, because the corresponding property for quad equations is trivial (that four copies of the same quad equation on a square are consistent with themselves).  In contrast, Theorem \ref{thm:CAH} is a non-trivial property for consistency of overdetermined systems of face-centered quad equations, where each equation involves different combinations of five out of six variables on the hexagon.

Just as the equations of Theorem \ref{thm:3leg} may be understood from the quasi-classical limit of the star-triangle relations, the construction of the consistent systems of equations \eqref{64legsA} and \eqref{64legsC} is based on equations obtained from the quasi-classical limit of the star-star relations \eqref{SSRa} and \eqref{SSRc}.  The idea is as follows.  Equating both sides at leading order of a quasi-classical expansion of \eqref{SSRc} implies a classical star-star relation formula that takes the form
\begin{equation}\label{CSSR}
\begin{split}
\lag_{u_2'-v_1'}(x_1,x_0)+\olam_{u_2'-v_2'}(x_0,x_2)+\ol_{u_1'-v_1'}(x_3,x_0)+\lam_{u_1'-v_2'}(x_0,x_4)\phantom{,} \\
+\lam_{v_1'-v_2'}(x_1,x_2)+\lag_{u_1'-u_2'}(x_1,x_3)\phantom{,} \\
=
\lag_{u_2'-v_1'}(x_0',x_4)+\olam_{u_2'-v_2'}(x_3,x_0')+\ol_{u_1'-v_1'}(x_0,x_2')+\lam_{u_1'-v_2'}(x_1,x_0')\phantom{,} \\
+\lam_{v_1'-v_2'}(x_3,x_4)+\lag_{u_1'-u_2'}(x_2,x_4).
\end{split}
\end{equation}
The left hand side is the same that appears inside the exponential for the second integral of \eqref{qclssr}, and the right hand side comes from a quasi-classical expansion of the right hand side of \eqref{SSRc}.  The equation \eqref{CSSR} is required to hold (up to irrelevant factors) on solutions of the saddle points of the integrals in \eqref{SSRc}, {\it i.e.}, when the variables of \eqref{CSSR} are constrained such that the partial derivative of the left hand side with respect to $x_0$, and the partial derivative of the right hand side with respect to $x_0'$, are both equal to zero.  Assuming \eqref{CSSR} holds, there are also four other partial derivatives that can be taken with respect to the four variables $x_1,x_2,x_3,x_4$.  Then following a similar procedure that was outlined to derive the face-centered quad equation \eqref{quadfactor2} from the saddle-point equations \eqref{SSRsaddle}, the six different partial derivatives of the classical star-star relation \eqref{CSSR} would lead to the combination of six equations given in \eqref{64legsC}.  The details of these computations depend on the functions used for \eqref{CSSR} and should be considered case-by-case.  For the purposes here, it is sufficient to simply verify directly if systems of face-centered quad equations \eqref{64legsA} and \eqref{64legsC} are consistent.

\subsection{Connections between quad equations and face-centered quad equations}

The ABS quad equations of Table \ref{tab:ABSlist} and the face-centered quad equations of Table \ref{tab:CAFCClist} are closely related.  Starting either from an ABS quad equation from Table \ref{tab:ABSlist} or a face-centered quad equation from Table \ref{tab:CAFCClist}, one may derive a respective equation from the other table.   Going from ABS quad equations to face-centered quad equations may be done through their associated three-leg and four-leg equations, while going from face-centered quad equations to ABS quad equations may be done through the multilinear polynomial forms of the equations themselves.

For the latter, the type-Q ABS equations may be identified as special cases of the quad equation $P_1$ in \eqref{afflinsum} for type-A face-centered quad equations (except for the elliptic case), and the type-H ABS equations may be identified as special cases of the quad equations $P_1$ for type-C face-centered quad equations.

Note that if $n=2$ in \eqref{afflinsum}, then $P_1$ may be defined from the limit
\begin{equation}
\begin{split}
xP_1(x_a,x_b,x_c,x_d;\bal,\bbt)&=\At{x}{x_a}{x_b}{x_c}{x_d}{\bal}{\bbt} \\
&-\lim_{x\to0}\bigl(x^2\At{x^{-1}}{x_a}{x_b}{x_c}{x_d}{\bal}{\bbt} 
+\At{x}{x_a}{x_b}{x_c}{x_d}{\bal}{\bbt}\bigr),
\end{split}
\end{equation}
since this isolates the term linear in $x$.  If $\At{x}{x_a}{x_b}{x_c}{x_d}{\bal}{\bbt}=0$ is a type-A (resp.~type-C) face-centered quad equation from Table \ref{tab:CAFCClist}, then 
\begin{equation}\label{finalCAFCCtoABS}
P_1\bigl(x_a,-x_d,x_c,x_b;(0,\bt),(\al,\bt)\bigr)=0,
\end{equation}
is equivalent to some type-Q (resp.~type-H) ABS quad equation from Table \ref{tab:ABSlist}.  
This connection was previously observed in the original derivation of face-centered quad equations in the context of consistency-around-a-face-centered-cube \cite{Kels:2020zjn}. A list of face-centered quad equations and the corresponding ABS quad equation that can be found as the equation \eqref{finalCAFCCtoABS} is given in Table \ref{tab:CAFCCtoABS}.

\begin{table}[htb!]
\centering
\begin{tabular}{l|l}

Type-A & Eq. \eqref{finalCAFCCtoABS}
 
 \\
 
 \hline
 
 
 $A3_{(1)}$ & $Q3_{(1)}$ \\
 $A3_{(0)}$ & $Q3_{(0)}$ \\
 $A2_{(1;\,1)}$ & $Q2$ \\
 $A2_{(1;\,0)}$ & $Q1_{(1)}$ \\
 $A2_{(0;\,0)}$ & $Q1_{(0)}$

\\[0.0cm]

\hline 
\end{tabular}
 \hspace{2cm}
\begin{tabular}{l|l}

Type-C & Eq. \eqref{finalCAFCCtoABS}
 
 \\
 
 \hline
 
 
 $C3_{(1/2;\,1/2;\,0)}$ & $H3_{(1;\,1)}$ $(x_a\leftrightarrow x_c,x_b\leftrightarrow x_d)$ \\
 $C3_{(1/2;\,0;\,1/2)}$ & $H3_{(1;\,1)}$ \\
 $C3_{(1;\,0;\,0)}$ & $H3_{(1;\,0)}$ \\
 $C3_{(0;\,0;\,0)}$ & $H3_{(0;\,0)}$ \\
 $C2_{(1;\,1;\,0)}$ & $H2_{(1)}$ $(x_a\leftrightarrow x_c,x_b\leftrightarrow x_d)$ \\
 $C2_{(1;\,0;\,1)}$ & $H2_{(1)}$ \\
 $C2_{(1;\,0;\,0)}$ & $H2_{(0)}$ \\
 $C1_{(1)}$ & $H1_{(1)}$ $(x_a\leftrightarrow x_c,x_b\leftrightarrow x_d)$ \\
 $C2_{(0;\,0;\,0)}$ & $H1_{(1)}$ \\
 $C1_{(0)}$ &  $H1_{(0)}$

\\[0.0cm]

\hline 
\end{tabular}
\caption{ABS equations from Table \ref{tab:ABSlist} that arise as the equation \eqref{finalCAFCCtoABS} for a face-centered quad equation from Table \ref{tab:CAFCClist}.  The expressions of the equations are given in Appendix \ref{app:equations}.}%
\label{tab:CAFCCtoABS}
\end{table}

The above procedure does not work with the expression given in Appendix \ref{app:equations} for the elliptic face-centered quad equation $A4$, which unfortunately is more complicated and has $n=10$ in the form \eqref{afflinsum}.  Instead, for $A4$ the following limit
\begin{equation}\label{A4Q4}
\lim_{\gm\to\bt}\frac{P_9\bigl(x_a,-x_d,x_c,x_b;(0,\gm),(\al,\bt)\bigr)+4\bigl(\wp(\al-\bt)+\wp(\bt)\bigr)P_{10}\bigl(x_a,-x_d,x_c,x_b;(0,\gm),(\al,\bt)\bigr)}{\wp(\bt-\gm)^2}
\end{equation}
can be used to explicitly obtain the ABS quad equation $Q4$.  The limit \eqref{A4Q4} is finite and isolates the coefficient of the term $\wp(\bt-\gm)^2$ that appears in $P_9$ and $P_{10}$ for $A4$.  It is expected that $Q4$ appears explicitly through more simpler limits, which could perhaps be found using some M\"obius transformations of the variables that gives a simpler form of the equation.

As mentioned above, going from ABS quad equations to face-centered quad equations may be done through their associated three-leg and four-leg equations.  This can be done because the three-leg and four-leg equations share the same functions, which can be clearly seen from the lists of these functions given in Tables \ref{tab:4legquads} and \ref{tab:3legquadsH} in Appendix \ref{app:equations}.
First, consider the following four type-Q equations
\begin{equation}\label{ABStoCAFCC1}
\begin{gathered}
\quadQ{x_a}{x_{ab}}{x_{ac}}{x}{\bt_1}{\al_2}=0, \qquad
\quadQ{x_{ab}}{x_b}{x}{x_{bd}}{\bt_2}{\al_2}=0, \\
\quadQ{x_{ac}}{x}{x_c}{x_{cd}}{\bt_1}{\al_1}=0, \qquad
\quadQ{x}{x_{bd}}{x_{cd}}{x_d}{\bt_2}{\al_1}=0,
\end{gathered}
\end{equation}
which are arranged around a common vertex $x$ as drawn in Figure \ref{fig:ABStoCAFCC}.  Multiplying together the three-leg equation $(\Qd)$ for the first equation and $(\Qa)$ for the fourth equation, and dividing by both $(\Qc)$ for the second equation and $(\Qb)$ for the third equation, precisely results in a type-A four-leg equation of the form \eqref{4lega}.  The above is a known procedure to arrive at discrete-Toda (also known as discrete-Laplace) equations from the additive three-leg equations for either type-Q or type-H ABS equations \cite{AdlerPlanarGraphs,BobSurQuadGraphs,MR2467378,Suris_DiscreteTimeToda}, where the functions associated to a common edge of two neighbouring three-leg equations cancel.

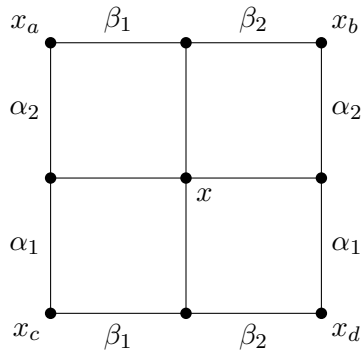
\begin{figure}[htb!]
\centering
\begin{tikzpicture}[scale=0.9]

\coordinate (A1) at (-2,2);
\coordinate (A2) at (2,2);
\coordinate (A3) at (-2,-2);
\coordinate (A4) at (2,-2);
\coordinate (A12) at (0,2);
\coordinate (A13) at (-2,0);
\coordinate (A24) at (2,0);
\coordinate (A34) at (0,-2);
\coordinate (A0) at (0,0);

\draw (A1)--(A12) node[midway,above=0pt]{$\beta_1$};
\draw (A12)--(A2) node[midway,above=0pt]{$\beta_2$};
\draw (A1)--(A13) node[midway,left=0pt]{$\alpha_2$};
\draw (A13)--(A3) node[midway,left=0pt]{$\alpha_1$};
\draw (A3)--(A34) node[midway,below=0pt]{$\beta_1$};
\draw (A34)--(A4) node[midway,below=0pt]{$\beta_2$};
\draw (A2)--(A24) node[midway,right=0pt]{$\alpha_2$};
\draw (A24)--(A4) node[midway,right=0pt]{$\alpha_1$};

\draw (A12)--(A34);\draw (A13)--(A24);

\fill (A3) circle (2.4pt) node[below left]{$x_c$};
\fill (A1) circle (2.4pt) node[above left]{$x_a$};
\fill (A2) circle (2.4pt) node[above right]{$x_b$};
\fill (A4) circle (2.4pt) node[below right]{$x_d$};
\fill (A0) circle (2.4pt) node[below right]{$x$};
\fill (A12) circle (2.4pt);
\fill (A34) circle (2.4pt);
\fill (A13) circle (2.4pt);
\fill (A24) circle (2.4pt);

\end{tikzpicture}
\caption{The four quad equations \eqref{ABStoCAFCC1}.  The three-leg equations \eqref{43legsQ} for these type-Q quad equations may be used to derive a four-leg equation \eqref{4lega} associated to a type-A face-centered quad equation.}
\label{fig:ABStoCAFCC}
\end{figure}

A similar procedure can be used to arrive at a four-leg equation for a type-C face-centered quad equation by using the three-leg equations associated to type-H quad equations.  However, this requires the introduction of a type-H quad equation with a different choice of parameters, that is known in the literature as the trapezoidal version of a type-H equation \cite{BollSuris11}.  For convenience, a different notation will be introduced for the latter.  The regular type-H quad equation will be denoted as usual by $\quadH{x_a}{x_b}{x_c}{x_d}{\alpha}{\beta}=0$, and the trapezoidal type-H version will be denoted separately as
\begin{equation}\label{typeHtrapezoidal}
\quadHu{x_a}{x_b}{x_c}{x_d}{\alpha}{\beta}=
\quadH{x_a}{x_d}{x_c}{x_b}{\beta-\alpha}{\beta}, 
\end{equation}
These different instances of the type-H equation will also be distinguished graphically as shown in Figure \ref{fig:5quads}.  These diagrams are based on the three-leg equations \eqref{43legsH}, where each of the four functions $a(x_i,x_j;\al)$, $a^\ast(x_i;x_j;\al)$, $c(x_i,x_j;\al)$, and $c^\ast(x_i;x_j;\al)$, are assigned to different edges that connect the vertices of two variables. 

\begin{figure}[htb!]
\centering
\begin{tikzpicture}[scale=1.2]

\coordinate (A1) at (-1,-1);
\coordinate (A2) at (-1,1);
\coordinate (A3) at (1,1);
\coordinate (A4) at (1,-1);

\draw[double,->-=.6] (A1)--(A2) node[midway,left=2pt]{$\beta$};
\draw[double,-<-=.55] (A2)--(A3) node[midway,above=2pt]{$\alpha$};
\draw[double,->-=.6] (A3)--(A4) node[midway,right=2pt]{$\beta$};
\draw[double,-<-=.55] (A4)--(A1) node[midway,below=2pt]{$\alpha$}node[midway,below=20pt]{$\quadH{x_a}{x_b}{x_c}{x_d}{\alpha}{\beta}=0$};

\fill (A1) circle (2pt) node[below left]{$x_c$};
\fill (A2) circle (2pt) node[above left]{$x_a$};
\fill (A3) circle (2pt) node[above right]{$x_b$};
\fill (A4) circle (2pt) node[below right]{$x_d$};

\begin{scope}[xshift=5cm]
\coordinate (A1) at (-1,-1);
\coordinate (A2) at (-1,1);
\coordinate (A3) at (1,1);
\coordinate (A4) at (1,-1);

\draw[double,-<-=.55] (A2)--(A1) node[midway,left=2pt]{$\beta$};
\draw[-!-=.53] (A2)--(A3) node[midway,above=2pt]{$\alpha$};
\draw[double,->-=.6] (A4)--(A3) node[midway,right=2pt]{$\beta$};
\draw[-] (A4)--(A1) node[midway,below=2pt]{$\alpha$}node[midway,below=15pt]{$\quadHu{x_a}{x_b}{x_c}{x_d}{\alpha}{\beta}=0$};

\fill (A1) circle (2pt) node[below left]{$x_c$};
\fill (A2) circle (2pt) node[above left]{$x_a$};
\fill (A3) circle (2pt) node[above right]{$x_b$};
\fill (A4) circle (2pt) node[below right]{$x_d$};
\end{scope}

\end{tikzpicture}
\caption{A type-H quad equation on the left, and an instance of this type-H equation with different parameters (trapezoidal version) on the right.}
\label{fig:5quads}
\end{figure}
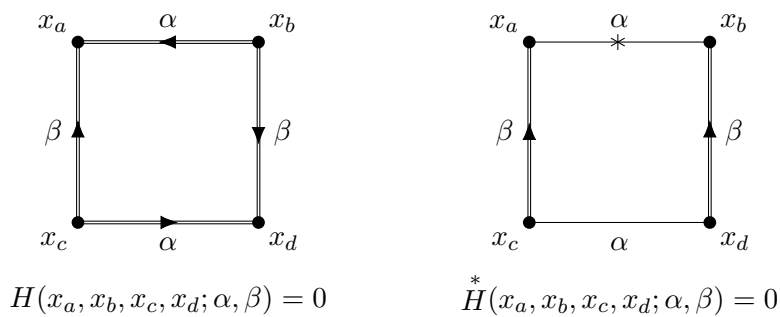

When making arrangements of type-H equations, the orientation of the equations should be chosen so that edges of neighbouring equations match according to the diagrams of Figure \ref{fig:5quads}.  This corresponds to matching the individual functions $a(x_i,x_j;\al)$, $a^\ast(x_i;x_j;\al)$, $c(x_i,x_j;\al)$, and $c^\ast(x_i;x_j;\al)$, which come from neighbouring equations.  Taking this into account, consider the following four type-H quad equations
\begin{equation}\label{ABStoCAFCC2}
\begin{gathered}
\quadH{x_{ac}}{x_a}{x}{x_{ab}}{\al_2}{\bt_1}=0, \qquad
\quadH{x_{ab}}{x_b}{x}{x_{bd}}{\bt_2}{\al_2}=0, \\
\quadHu{x_c}{x_{ac}}{x_{cd}}{x}{\al_1}{\bt_1}=0, \qquad
\quadHu{x_{bd}}{x_{d}}{x}{x_{cd}}{\al_1}{\bt_2}=0,
\end{gathered}
\end{equation}
which are arranged around a common vertex $x$ as drawn in the diagram on the left of Figure \ref{fig:ABStoCAFCC2}.  Dividing the three-leg equation $(\Hb)$ for the third equation, by the three-leg equations $(\Hc)$ for the other three equations, gives precisely the type-C four-leg equation of the form \eqref{4legc}.  It appears that discrete-Toda equations for this arrangement of type-H equations have not yet been considered in the literature, but this resembles irregular arrangements of quad equations that were previously studied in connection with weak Lax pairs and algebraic entropy \cite{HietarintaWeakLax}.

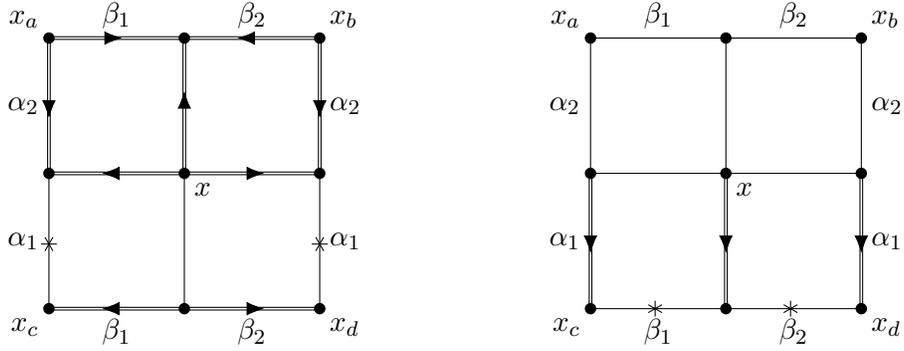
\begin{figure}[htb!]
\centering
\begin{tikzpicture}[scale=0.9]

\begin{scope}
\coordinate (A1) at (-2,2);
\coordinate (A2) at (2,2);
\coordinate (A3) at (-2,-2);
\coordinate (A4) at (2,-2);
\coordinate (A12) at (0,2);
\coordinate (A13) at (-2,0);
\coordinate (A24) at (2,0);
\coordinate (A34) at (0,-2);
\coordinate (A0) at (0,0);

\draw[-<-=.6,double] (A12)--(A1) node[midway,above=0pt]{$\beta_1$};
\draw[->-=.65,double] (A2)--(A12) node[midway,above=0pt]{$\beta_2$};
\draw[-<-=.55,double] (A13)--(A1) node[midway,left=0pt]{$\alpha_2$};
\draw[-!-=.53] (A3)--(A13) node[midway,left=0pt]{$\alpha_1$};
\draw[->-=.65,double] (A34)--(A3) node[midway,below=0pt]{$\beta_1$};
\draw[-<-=.55,double] (A4)--(A34) node[midway,below=0pt]{$\beta_2$};
\draw[-<-=.55,double] (A24)--(A2) node[midway,right=0pt]{$\alpha_2$};
\draw[-!-=.53] (A4)--(A24) node[midway,right=0pt]{$\alpha_1$};

\draw[->-=.65,double] (A0)--(A12);
\draw (A0)--(A34);
\draw[->-=.65,double] (A0)--(A13);
\draw[-<-=.55,double] (A24)--(A0);

\fill (A3) circle (2.4pt) node[below left]{$x_c$};
\fill (A1) circle (2.4pt) node[above left]{$x_a$};
\fill (A2) circle (2.4pt) node[above right]{$x_b$};
\fill (A4) circle (2.4pt) node[below right]{$x_d$};
\fill (A0) circle (2.4pt) node[below right]{$x$};
\fill (A12) circle (2.4pt);
\fill (A34) circle (2.4pt);
\fill (A13) circle (2.4pt);
\fill (A24) circle (2.4pt);
\end{scope}

\begin{scope}[xshift=8cm]
\coordinate (A1) at (-2,2);
\coordinate (A2) at (2,2);
\coordinate (A3) at (-2,-2);
\coordinate (A4) at (2,-2);
\coordinate (A12) at (0,2);
\coordinate (A13) at (-2,0);
\coordinate (A24) at (2,0);
\coordinate (A34) at (0,-2);
\coordinate (A0) at (0,0);

\draw (A1)--(A12) node[midway,above=0pt]{$\beta_1$};
\draw (A12)--(A2) node[midway,above=0pt]{$\beta_2$};
\draw (A1)--(A13) node[midway,left=0pt]{$\alpha_2$};
\draw[-<-=.55,double] (A3)--(A13) node[midway,left=0pt]{$\alpha_1$};
\draw[-!-=.53] (A3)--(A34) node[midway,below=0pt]{$\beta_1$};
\draw[-!-=.53] (A34)--(A4) node[midway,below=0pt]{$\beta_2$};
\draw (A2)--(A24) node[midway,right=0pt]{$\alpha_2$};
\draw[-<-=.55,double] (A4)--(A24) node[midway,right=0pt]{$\alpha_1$};

\draw (A12)--(A0);
\draw[-<-=.55,double] (A34)--(A0);
\draw (A13)--(A0);
\draw (A0)--(A24);

\fill (A3) circle (2.4pt) node[below left]{$x_c$};
\fill (A1) circle (2.4pt) node[above left]{$x_a$};
\fill (A2) circle (2.4pt) node[above right]{$x_b$};
\fill (A4) circle (2.4pt) node[below right]{$x_d$};
\fill (A0) circle (2.4pt) node[below right]{$x$};
\fill (A12) circle (2.4pt);
\fill (A34) circle (2.4pt);
\fill (A13) circle (2.4pt);
\fill (A24) circle (2.4pt);
\end{scope}

\end{tikzpicture}
\caption{The four quad equations for \eqref{ABStoCAFCC2} on the left, and \eqref{ABStoCAFCC3} on the right.  The three-leg equations \eqref{43legsQ} and \eqref{43legsH} for these type-Q and type-H quad equations may be used to derive a four-leg equation \eqref{4legc} associated to a type-C face-centered quad equation.}
\label{fig:ABStoCAFCC2}
\end{figure}

The arrangement of equations in \eqref{ABStoCAFCC2} is not the only one that will lead to four-leg equations for type-C equations.  For example, the three-leg equations for the following type-Q and type-H equations (shown on the diagram on the right of of Figure \ref{fig:ABStoCAFCC2})
\begin{equation}\label{ABStoCAFCC3}
\begin{gathered}
\quadQ{x_a}{x_{ab}}{x_{ac}}{x}{\bt_1}{\al_2}=0, \qquad
\quadQ{x_{ab}}{x_b}{x}{x_{bd}}{\bt_2}{\al_2}=0, \\
\quadHu{x_{cd}}{x_{c}}{x}{x_{ac}}{\bt_1}{\al_2}=0, \qquad
\quadHu{x_{d}}{x_{cd}}{x_{bd}}{x}{\bt_2}{\al_1}=0,
\end{gathered}
\end{equation}
will also lead to the type-C four-leg equation \eqref{4legc}, but with opposite signs of the parameters.

\section{Systems of hex equations in the hexagonal lattice}\label{sec:hexagon}


The consistent systems of face-centered quad equations given by \eqref{64legsA} and \eqref{64legsC} will be used to define new types of discrete systems of equations that evolve in the hexagonal lattice.  For this purpose it turns out that instead of the two-component parameters $\bal=(\al_1,\al_2)$ and $\bbt=(\bt_1,\bt_2)$, it is more convenient to consider equations in terms of three scalar parameters $\al,\bt,\gm$.  The relation between the two sets of parameters will be chosen as
\begin{equation}\label{parametercov}
\al=\bt_1-\al_1,\quad \bt=\bt_1-\al_2,\quad \gm=\bt_1-\bt_2.
\end{equation}
The parameters $\al,\bt,\gm$ will end up being assigned to edges in the hexagonal lattice, where opposite edges of each hexagonal unit cell are always assigned the same parameter.

Each of the face-centered quad equations of Table \ref{tab:CAFCClist} may be written in terms of the parameters $\al,\bt,\gm$, and such expressions can be found in Appendix \ref{app:equations}.  The face-centered quad equation of Figure \ref{fig:fcquadA} is drawn in terms of the three parameters $\al,\bt,\gm$, in the diagram on the left of Figure \ref{fig:3parameterfcqe}.  Also shown in Figure \ref{fig:3parameterfcqe} are graphical representations of a pair of type-C equations $C$ and $\overline{C}$, distinguished by the orientations of directed edges.

\begin{figure}[htb!]
\centering
\begin{tikzpicture}[scale=1.6]

\coordinate (A0) at (0,0);
\coordinate (A1) at (-1,1);
\coordinate (A2) at (1,1);
\coordinate (A3) at (-1,-1);
\coordinate (A4) at (1,-1);

\draw[gray,very thin,dashed] (A1)--(A2)--(A4)--(A3)--cycle;

\draw (A0)--(A1) node[midway,left=2pt]{\small$\bt$};
\draw (A0)--(A2) node[midway,right=2pt]{\small$\bt-\gm$};
\draw (A0)--(A3) node[midway,left=2pt]{\small$\al$};
\draw (A0)--(A4) node[midway,right=2pt]{\small$\al-\gm$};

\fill (A3) circle (1.7pt)
node[left=1.5pt]{\color{black} $x_c$};
\fill (A1) circle (1.7pt)
node[left=1.5pt]{\color{black} $x_a$};
\fill (A2) circle (1.7pt)
node[right=1.5pt]{\color{black} $x_b$};
\fill (A4) circle (1.7pt)
node[right=1.5pt]{\color{black} $x_d$};
\fill (A0) circle (1.7pt)
node[left=1.5pt]{\color{black} $x$};


\fill (0,-1.3) circle (0.01pt)
node[below=0.0pt]{\color{black}\small $\A{\ccx}{x_a}{x_b}{x_c}{x_d}{\al}{\bt}{\gm}=0$};

\begin{scope}[xshift=100pt]

\coordinate (A0) at (0,0);
\coordinate (A1) at (-1,1);
\coordinate (A2) at (1,1);
\coordinate (A3) at (-1,-1);
\coordinate (A4) at (1,-1);

\draw[gray,very thin,dashed] (A1)--(A2)--(A4)--(A3)--cycle;

\draw (A0)--(A1) node[midway,left=4pt]{\small$\bt$};
\draw (A0)--(A2) node[midway,right=4pt]{\small$\bt-\gm$};
\draw[->-=.6,double] (A0)--(A3) node[midway,left=4pt]{\small$\al$};
\draw[->-=.6,double] (A0)--(A4) node[midway,right=4pt]{\small$\al-\gm$};

\fill (A3) circle (1.7pt)
node[left=1.5pt]{\color{black} $x_c$};
\fill (A1) circle (1.7pt)
node[left=1.5pt]{\color{black} $x_a$};
\fill (A2) circle (1.7pt)
node[right=1.5pt]{\color{black} $x_b$};
\fill (A4) circle (1.7pt)
node[right=1.5pt]{\color{black} $x_d$};
\fill (A0) circle (1.7pt)
node[left=1.5pt]{\color{black} $x$};


\fill (0,-1.3) circle (0.01pt)
node[below=0.0pt]{\color{black}\small $\C{\ccx}{x_a}{x_b}{x_c}{x_d}{\al}{\bt}{\gm}=0$};

\end{scope}

\begin{scope}[xshift=200pt]

\coordinate (A0) at (0,0);
\coordinate (A1) at (-1,1);
\coordinate (A2) at (1,1);
\coordinate (A3) at (-1,-1);
\coordinate (A4) at (1,-1);

\draw[gray,very thin,dashed] (A1)--(A2)--(A4)--(A3)--cycle;

\draw[-!-=.53] (A0)--(A1) node[midway,left=4pt]{\small$\bt$};
\draw[-!-=.53] (A0)--(A2) node[midway,right=4pt]{\small$\bt-\gm$};
\draw[-<-=.6,double] (A0)--(A3) node[midway,left=4pt]{\small$\al$};
\draw[-<-=.6,double] (A0)--(A4) node[midway,right=4pt]{\small$\al-\gm$};

\fill (A3) circle (1.7pt)
node[left=1.5pt]{\color{black} $x_c$};
\fill (A1) circle (1.7pt)
node[left=1.5pt]{\color{black} $x_a$};
\fill (A2) circle (1.7pt)
node[right=1.5pt]{\color{black} $x_b$};
\fill (A4) circle (1.7pt)
node[right=1.5pt]{\color{black} $x_d$};
\fill (A0) circle (1.7pt)
node[left=1.5pt]{\color{black} $x$};


\fill (0,-1.3) circle (0.01pt)
node[below=0.0pt]{\color{black}\small $\Cbar{\ccx}{x_a}{x_b}{x_c}{x_d}{\al}{\bt}{\gm}=0$};

\end{scope}

\end{tikzpicture}
\caption{Face-centered quad equations in terms of the parameters \eqref{parametercov}.}
\label{fig:3parameterfcqe}
\end{figure}
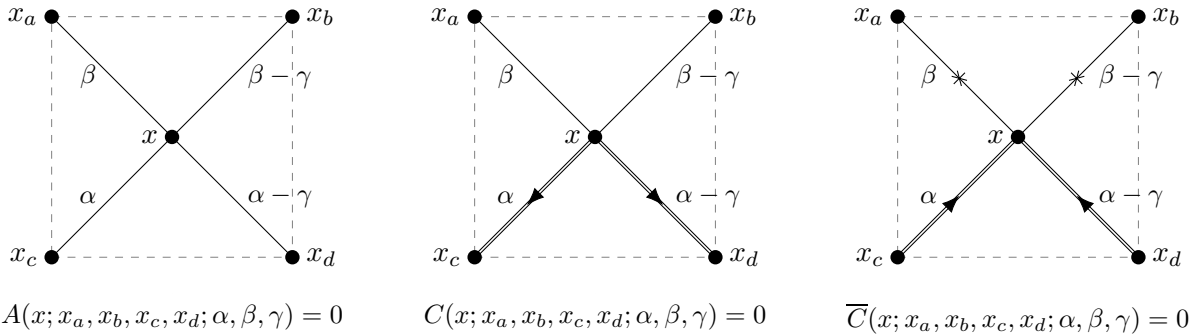

Based on \eqref{64legsA}, $\hexA{\xa}{\xb}{\xf}{\xd}{\xc}{\xe}{\al}{\bt}{\gm}$ is defined to be the following system of type-A face-centered quad equations
\begin{equation}\label{Ahexdef}
\begin{alignedat}{2}
(\Aa)&&\qquad\A{\xa}{\xe}{\xc}{\xb}{\xf}{\gm}{\bt}{\al}=0, \\
(\Ab)&&\qquad\A{\xb}{\xf}{\xd}{\xa}{\xe}{\gm}{\al}{\bt}=0, \\
(\Af)&&\qquad\A{\xf}{\xd}{\xc}{\xb}{\xa}{\al}{\bt}{\gm}=0, \\
(\Ad)&&\qquad\A{\xd}{\xf}{\xb}{\xc}{\xe}{\gm}{\bt}{\al}=0, \\
(\Ac)&&\qquad\A{\xc}{\xe}{\xa}{\xd}{\xf}{\gm}{\al}{\bt}=0, \\
(\Ae)&&\qquad\A{\xe}{\xa}{\xb}{\xc}{\xd}{\al}{\bt}{\gm}=0,
\end{alignedat}
\end{equation}
which each depend on different combinations of five of the six variables $\xa,\ldots,\xe$, and on the three parameters $\al,\bt,\gm$.  
The system of equations $\hexA{\xa}{\xb}{\xf}{\xd}{\xc}{\xe}{\al}{\bt}{\gm}$ is assigned to the hexagon shown in the diagram on the left of Figure \ref{fig:hexagonunitcell}.  The edges of this diagram are consistent with the edges  from Figure \ref{fig:3parameterfcqe} associated to the individual equations in \eqref{Ahexdef}.  To recognise that here the equations are associated to a hexagonal unit cell rather than a face-centered cubic structure, $\hexA{\xa}{\xb}{\xf}{\xd}{\xc}{\xe}{\al}{\bt}{\gm}$ will be referred to as a {\it system of type-A hex equations}.

\begin{figure}[htb!]
\centering
\begin{tikzpicture}[scale=2.0]

\pgfmathsetmacro\tr{sqrt(3)/2}

\coordinate (A1) at (-1,0);
\coordinate (A2) at (-1/2,{\tr});
\coordinate (A3) at (1/2,{\tr});
\coordinate (A4) at (1,0);
\coordinate (A5) at (1/2,{-\tr});
\coordinate (A6) at (-1/2,{-\tr});

\draw (A1)--(A2) node[midway,above left]{$\gamma$};
\draw (A2)--(A3) node[midway,above]{$\alpha$};
\draw (A3)--(A4) node[midway,above right]{$\beta$};
\draw (A4)--(A5) node[midway,below right]{$\gamma$};
\draw (A5)--(A6) node[midway,below]{$\alpha$};
\draw (A6)--(A1) node[midway,below left]{$\beta$};

\fill (A1) circle (1.4pt) node[left]{$\xa$};
\fill (A2) circle (1.4pt) node[above left]{$\xb$};
\fill (A3) circle (1.4pt) node[above right]{$\xf$};
\fill (A4) circle (1.4pt) node[right]{$\xd$};
\fill (A5) circle (1.4pt) node[below right]{$\xc$};
\fill (A6) circle (1.4pt) node[below left]{$\xe$};

\fill (0,-1.2) circle (0.01pt)
node[below=0.0pt]{\color{black}\small $\hexA{\xa}{\xb}{\xf}{\xd}{\xc}{\xe}{\al}{\bt}{\gm}$};

\begin{scope}[xshift=120pt]

\coordinate (A1) at (-1,0);
\coordinate (A2) at (-1/2,{\tr});
\coordinate (A3) at (1/2,{\tr});
\coordinate (A4) at (1,0);
\coordinate (A5) at (1/2,{-\tr});
\coordinate (A6) at (-1/2,{-\tr});

\draw[double,-<-=.55] (A1)--(A2) node[midway,above left]{$\gamma$};
\draw (A2)--(A3) node[midway,above]{$\alpha$};
\draw (A3)--(A4) node[midway,above right]{$\beta$};
\draw[double,->-=.6] (A4)--(A5) node[midway,below right]{$\gamma$};
\draw[-!-=.53] (A5)--(A6) node[midway,below]{$\alpha$};
\draw[-!-=.53] (A6)--(A1) node[midway,below left]{$\beta$};

\fill (A1) circle (1.4pt) node[left]{$\xa$};
\fill (A2) circle (1.4pt) node[above left]{$\xb$};
\fill (A3) circle (1.4pt) node[above right]{$\xf$};
\fill (A4) circle (1.4pt) node[right]{$\xd$};
\fill (A5) circle (1.4pt) node[below right]{$\xc$};
\fill (A6) circle (1.4pt) node[below left]{$\xe$};

\fill (0,-1.2) circle (0.01pt)
node[below=0.0pt]{\color{black}\small $\hexC{\xa}{\xb}{\xf}{\xd}{\xc}{\xe}{\al}{\bt}{\gm}$};

\end{scope}

\end{tikzpicture}
\caption{ The hexagon on the left is associated to the system of type-A hex equations \eqref{Ahexdef}, and the hexagon on the right is associated to the system of type-C hex equations \eqref{Chexdef}.}
\label{fig:hexagonunitcell}
\end{figure}
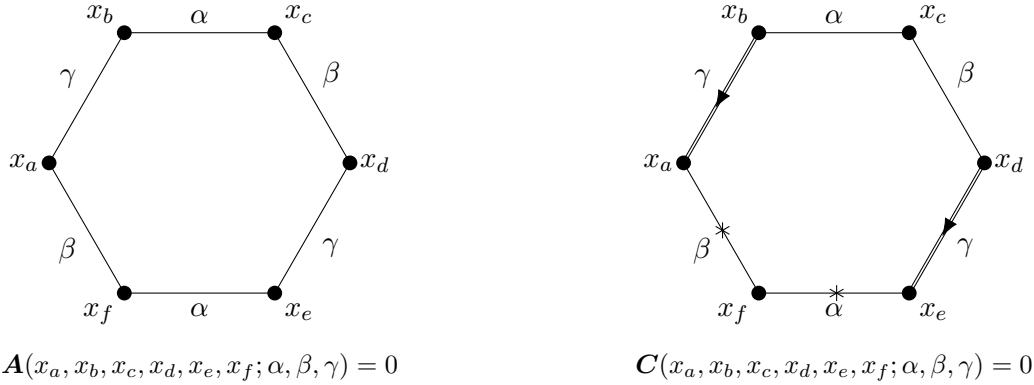

Based on \eqref{64legsC}, $\hexC{\xa}{\xb}{\xf}{\xd}{\xc}{\xe}{\al}{\bt}{\gm}$ is defined to be the following system of type-C face-centered quad equations
\begin{equation}\label{Chexdef}
\begin{alignedat}{2}
(\Ca)&&\qquad\Cbar{\xa}{\xe}{\xc}{\xb}{\xf}{\gm}{\bt}{\al}=0, \\
(\Cb)&&\qquad\C{\xb}{\xf}{\xd}{\xa}{\xe}{\gm}{\al}{\bt}=0, \\
(\Cf)&&\qquad\C{\xf}{\xb}{\xd}{\xa}{\xc}{\gm-\al}{-\al}{\bt-\al}=0, \\
(\Cd)&&\qquad\C{\xd}{\xf}{\xb}{\xc}{\xe}{\gm}{\bt}{\al}=0, \\
(\Cc)&&\qquad\Cbar{\xc}{\xe}{\xa}{\xd}{\xf}{\gm}{\al}{\bt}=0, \\
(\Ce)&&\qquad\Cbar{\xe}{\xc}{\xa}{\xd}{\xb}{\gm-\al}{-\al}{\bt-\al}=0,
\end{alignedat}
\end{equation}
which each depend on different combinations of five of the six variables $\xa,\ldots,\xe$, and on the three parameters $\al,\bt,\gm$.  
The system of equations $\hexC{\xa}{\xb}{\xf}{\xd}{\xc}{\xe}{\al}{\bt}{\gm}$ is assigned to the hexagon shown in the diagram on the right of Figure \ref{fig:hexagonunitcell}.  The edges of this diagram are consistent with the edges from Figure \ref{fig:3parameterfcqe} associated to the individual equations in \eqref{Chexdef}. Since here the equations are defined on a hexagonal unit cell, $\hexC{\xa}{\xb}{\xf}{\xd}{\xc}{\xe}{\al}{\bt}{\gm}$ will be referred to as a {\it system of type-C hex equations}.

In terms of the parameters $\al,\bt,\gm$, the square symmetries \eqref{symmetriesfcqe} unfortunately take a more cumbersome form
\begin{equation}\label{fcqsyms}
\begin{split}
\A{x}{x_a}{x_b}{x_c}{x_d}{\al}{\bt}{\gm}&=-\A{x}{x_b}{x_a}{x_d}{x_c}{\al-\gm}{\bt-\gm}{-\gm}, \\
\A{x}{x_a}{x_b}{x_c}{x_d}{\al}{\bt}{\gm}&=-\A{x}{x_c}{x_d}{x_a}{x_b}{\bt}{\al}{\gm}, \\
\A{x}{x_a}{x_b}{x_c}{x_d}{\al}{\bt}{\gm}&=-\A{x}{x_d}{x_b}{x_c}{x_a}{-\al}{\al-\gm}{\bt-\al}.
\end{split}
\end{equation}
However, the hexagonal symmetries satisfied by systems of hex equations can be better expressed in terms of $\al,\bt,\gm$.  
Unfortunately, there appears to be no symmetry that can be used to write the equations $(\Cf)$ and $(\Ce)$ with simpler forms of the parameter dependence as has been given for the type-A hex equations $(\Af)$ and $(\Ae)$ in \eqref{Ahexdef}.

The system of type-A hex equations $\hexA{\xa}{\xb}{\xf}{\xd}{\xc}{\xe}{\al}{\bt}{\gm}$ satisfies each of the following hexagonal symmetries 
\begin{equation}\label{d6syms}
\begin{split}
&\hexAneq{\xa}{\xb}{\xf}{\xd}{\xc}{\xe}{\al}{\bt}{\gm}
=\hexAneq{\xc}{\xd}{\xf}{\xb}{\xa}{\xe}{\bt}{\al}{\gm}, \\
&\hexAneq{\xa}{\xb}{\xf}{\xd}{\xc}{\xe}{\al}{\bt}{\gm}
=\hexAneq{\xb}{\xa}{\xe}{\xc}{\xd}{\xf}{\bt}{\al}{\gm}, \\
&\hexAneq{\xa}{\xb}{\xf}{\xd}{\xc}{\xe}{\al}{\bt}{\gm}
=\hexAneq{\xb}{\xf}{\xd}{\xc}{\xe}{\xa}{\bt}{\gm}{\al}. \\
\end{split}
\end{equation}
The symmetries \eqref{d6syms} are shown in the diagrams of Figure \ref{fig:d6syms}.  The equalities \eqref{d6syms} follow from the use of the second symmetry of \eqref{fcqsyms}.  The first of \eqref{d6syms} is a reflection symmetry which simply has the effect of exchanging the equations $(\Aa)\leftrightarrow(\Ac)$ and $(\Ab)\leftrightarrow(\Ad)$.  The second of \eqref{d6syms} is another reflection symmetry which simply has the effect of exchanging the equations $(\Aa)\leftrightarrow(\Ab)$, $(\Af)\leftrightarrow(\Ae)$, $(\Ad)\leftrightarrow(\Ac)$.  The third of \eqref{d6syms} is a symmetry of counterclockwise rotation by $\pi/3$ which simply has the effect of cycling the equations $\bigl((\Aa),(\Ab),(\Af),(\Ad),(\Ac),(\Ae)\bigr)\to\bigl((\Ab),(\Af),(\Ad),(\Ac),(\Ae),(\Aa)\bigr)$.

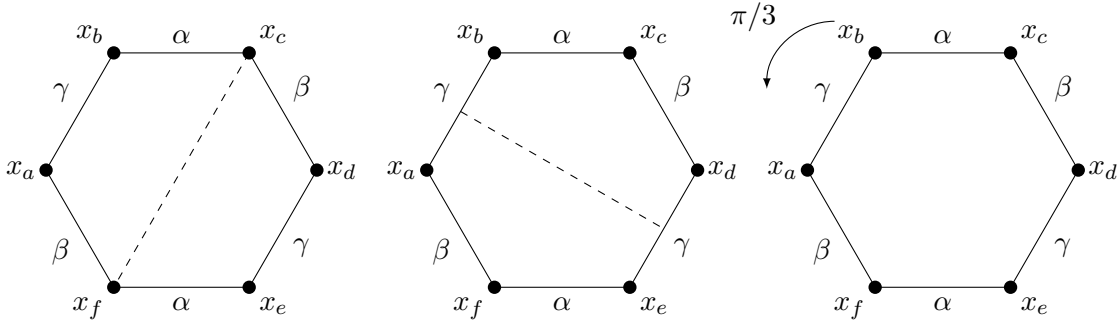
\begin{figure}[htb!]
\centering
\begin{tikzpicture}[scale=1.8]

\begin{scope}[xshift=80pt]

\pgfmathsetmacro\tr{sqrt(3)/2}

\coordinate (A1) at (-1,0);
\coordinate (A2) at (-1/2,{\tr});
\coordinate (A3) at (1/2,{\tr});
\coordinate (A4) at (1,0);
\coordinate (A5) at (1/2,{-\tr});
\coordinate (A6) at (-1/2,{-\tr});

\draw (A1)--(A2) node[midway,above left]{$\gamma$} coordinate[midway](A12);
\draw (A2)--(A3) node[midway,above]{$\alpha$};
\draw (A3)--(A4) node[midway,above right]{$\beta$};
\draw (A4)--(A5) node[midway,below right]{$\gamma$} coordinate[midway](A45);
\draw (A5)--(A6) node[midway,below]{$\alpha$};
\draw (A6)--(A1) node[midway,below left]{$\beta$};

\draw[dashed] (A12)--(A45);


\fill (A1) circle (1.4pt) node[left]{$\xa$};
\fill (A2) circle (1.4pt) node[above left]{$\xb$};
\fill (A3) circle (1.4pt) node[above right]{$\xf$};
\fill (A4) circle (1.4pt) node[right]{$\xd$};
\fill (A5) circle (1.4pt) node[below right]{$\xc$};
\fill (A6) circle (1.4pt) node[below left]{$\xe$};

\end{scope}

\begin{scope}[xshift=0pt]

\pgfmathsetmacro\tr{sqrt(3)/2}

\coordinate (A1) at (-1,0);
\coordinate (A2) at (-1/2,{\tr});
\coordinate (A3) at (1/2,{\tr});
\coordinate (A4) at (1,0);
\coordinate (A5) at (1/2,{-\tr});
\coordinate (A6) at (-1/2,{-\tr});

\draw (A1)--(A2) node[midway,above left]{$\gamma$} coordinate[midway](A12);
\draw (A2)--(A3) node[midway,above]{$\alpha$};
\draw (A3)--(A4) node[midway,above right]{$\beta$};
\draw (A4)--(A5) node[midway,below right]{$\gamma$} coordinate[midway](A45);
\draw (A5)--(A6) node[midway,below]{$\alpha$};
\draw (A6)--(A1) node[midway,below left]{$\beta$};

\draw[dashed] (A3)--(A6);


\fill (A1) circle (1.4pt) node[left]{$\xa$};
\fill (A2) circle (1.4pt) node[above left]{$\xb$};
\fill (A3) circle (1.4pt) node[above right]{$\xf$};
\fill (A4) circle (1.4pt) node[right]{$\xd$};
\fill (A5) circle (1.4pt) node[below right]{$\xc$};
\fill (A6) circle (1.4pt) node[below left]{$\xe$};

\end{scope}

\begin{scope}[xshift=160pt]

\pgfmathsetmacro\tr{sqrt(3)/2}

\coordinate (A1) at (-1,0);
\coordinate (A2) at (-1/2,{\tr});
\coordinate (A3) at (1/2,{\tr});
\coordinate (A4) at (1,0);
\coordinate (A5) at (1/2,{-\tr});
\coordinate (A6) at (-1/2,{-\tr});

\draw (A1)--(A2) node[midway,above left]{$\gamma$} coordinate[midway](A12);
\draw (A2)--(A3) node[midway,above]{$\alpha$};
\draw (A3)--(A4) node[midway,above right]{$\beta$};
\draw (A4)--(A5) node[midway,below right]{$\gamma$} coordinate[midway](A45);
\draw (A5)--(A6) node[midway,below]{$\alpha$};
\draw (A6)--(A1) node[midway,below left]{$\beta$};

\draw[latex-] (-1.3,0.6) arc (180:90:0.5) node[midway,above left]{$\pi/3$};


\fill (A1) circle (1.4pt) node[left]{$\xa$};
\fill (A2) circle (1.4pt) node[above left]{$\xb$};
\fill (A3) circle (1.4pt) node[above right]{$\xf$};
\fill (A4) circle (1.4pt) node[right]{$\xd$};
\fill (A5) circle (1.4pt) node[below right]{$\xc$};
\fill (A6) circle (1.4pt) node[below left]{$\xe$};

\end{scope}

\end{tikzpicture}
\caption{Line of reflection for the first of \eqref{d6syms} is shown on the left, and for the second of \eqref{d6syms} in the center.  The third of \eqref{d6syms} represents a counterclockwise rotation by $\pi/3$ as shown on the right.  The system of type-A hex equations is invariant under each of these actions, while the system of type-C hex equations is only invariant under the first reflection.}
\label{fig:d6syms}
\end{figure}

As might be expected, the system of type-C hex equations $\hexC{\xa}{\xb}{\xf}{\xd}{\xc}{\xe}{\al}{\bt}{\gm}$ has less symmetry, and is only invariant under the the first reflection symmetry of \eqref{d6syms}, {\it i.e.},
\begin{equation}
\hexCneq{\xa}{\xb}{\xf}{\xd}{\xc}{\xe}{\al}{\bt}{\gm}
=\hexCneq{\xc}{\xd}{\xf}{\xb}{\xa}{\xe}{\bt}{\al}{\gm}.
\end{equation}
The second reflection symmetry of \eqref{d6syms} has the effect of exchanging the equations $(\Ca)\leftrightarrow(\Cb)$, $(\Cf)\leftrightarrow(\Ce)$, $(\Cd)\leftrightarrow(\Cc)$, so that equations $C$ become equations $\overline{C}$, and vice-versa (so the system is invariant when $\overline{C}=C$).  Both of the latter symmetries follow from the use of the first symmetry of \eqref{fcqsyms} (the only one satisfied by type-C equations).  The third symmetry in \eqref{d6syms} is also not satisfied by $\hexC{\xa}{\xb}{\xf}{\xd}{\xc}{\xe}{\al}{\bt}{\gm}$, but the following rotation of the hexagon by $\pi$
\begin{equation}
\xa\leftrightarrow\xd,\quad \xb\leftrightarrow\xc, \quad \xf\leftrightarrow\xe,
\end{equation}
has the effect of exchanging the equations $(\Ca)\leftrightarrow(\Cd)$, $(\Cb)\leftrightarrow(\Cc)$, $(\Cf)\leftrightarrow(\Ce)$, so that once again the equations $C$ become equations $\overline{C}$, and vice-versa.

Under appropriate initial conditions, if the systems of hex equations \eqref{Ahexdef} and \eqref{Chexdef} are consistent (in the sense of Theorem \ref{thm:CAH}) one may define unique evolutions of systems of hex equations in the hexagonal lattice. 
This relies on the fact that once the four variables associated to four consecutive vertices of a hexagonal face have been determined, a system of hex equations may be consistently solved for the variables on the remaining two vertices.

For example, if the four initial variables on an individual hexagon are chosen as $x_a,x_b,x_e,x_f$, the remaining variables $x_c$ and $x_d$ may be uniquely determined from two of the six type-A hex equations from \eqref{Ahexdef}, or from two of the six type-C hex equations from \eqref{Chexdef}.  There are a total of eight ways to solve for these variables as is illustrated in the diagram of Figure \ref{fig:hexevolution} for a system of type-A hex equations.  For consistent systems, the choice of which of the four equations is used to solve for $x_c$ or $x_d$ in the second step does not matter.

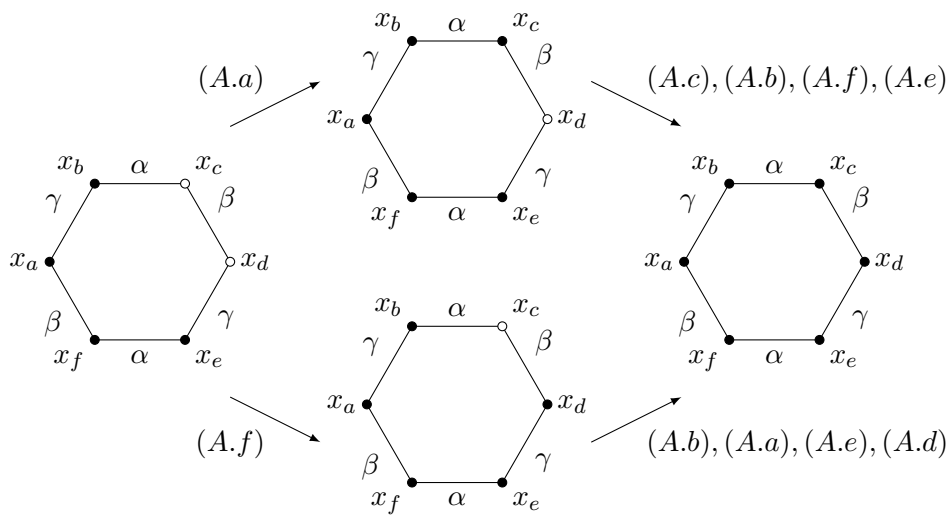
\begin{figure}[htb!]
\centering
\begin{tikzpicture}[scale=1.2]

\pgfmathsetmacro\tr{sqrt(3)/2}

\coordinate (A1) at (-1,0);
\coordinate (A2) at (-1/2,{\tr});
\coordinate (A3) at (1/2,{\tr});
\coordinate (A4) at (1,0);
\coordinate (A5) at (1/2,{-\tr});
\coordinate (A6) at (-1/2,{-\tr});

\draw (A1)--(A2) node[midway,above left]{$\gamma$};
\draw (A2)--(A3) node[midway,above]{$\alpha$};
\draw (A3)--(A4) node[midway,above right]{$\beta$};
\draw (A4)--(A5) node[midway,below right]{$\gamma$};
\draw (A5)--(A6) node[midway,below]{$\alpha$};
\draw (A6)--(A1) node[midway,below left]{$\beta$};


\filldraw[fill=black,draw=black] (A1) circle (1.4pt) node[left]{$\xa$};
\filldraw[fill=black,draw=black] (A2) circle (1.4pt) node[above left]{$\xb$};
\filldraw[fill=white,draw=black] (A3) circle (1.4pt) node[above right]{$\xf$};
\filldraw[fill=white,draw=black] (A4) circle (1.4pt) node[right]{$\xd$};
\filldraw[fill=black,draw=black] (A5) circle (1.4pt) node[below right]{$\xc$};
\filldraw[fill=black,draw=black] (A6) circle (1.4pt) node[below left]{$\xe$};

\draw[-latex] (1,1.5)--(2,2) node[midway,above left]{$(\Aa)$};
\draw[-latex] (1,-1.5)--(2,-2) node[midway,below left]{$(\Ae)$};

\draw[-latex] (5,2)--(6,1.5) node[midway,above right]{$(\Af),(\Ab),(\Ae),(\Ac)$};
\draw[-latex] (5,-2)--(6,-1.5) node[midway,below right]{$(\Ab),(\Aa),(\Ac),(\Ad)$};

\begin{scope}[xshift=100pt,yshift=45pt]

\coordinate (A1) at (-1,0);
\coordinate (A2) at (-1/2,{\tr});
\coordinate (A3) at (1/2,{\tr});
\coordinate (A4) at (1,0);
\coordinate (A5) at (1/2,{-\tr});
\coordinate (A6) at (-1/2,{-\tr});

\draw (A1)--(A2) node[midway,above left]{$\gamma$};
\draw (A2)--(A3) node[midway,above]{$\alpha$};
\draw (A3)--(A4) node[midway,above right]{$\beta$};
\draw (A4)--(A5) node[midway,below right]{$\gamma$};
\draw (A5)--(A6) node[midway,below]{$\alpha$};
\draw (A6)--(A1) node[midway,below left]{$\beta$};


\filldraw[fill=black,draw=black] (A1) circle (1.4pt) node[left]{$\xa$};
\filldraw[fill=black,draw=black] (A2) circle (1.4pt) node[above left]{$\xb$};
\filldraw[fill=black,draw=black] (A3) circle (1.4pt) node[above right]{$\xf$};
\filldraw[fill=white,draw=black] (A4) circle (1.4pt) node[right]{$\xd$};
\filldraw[fill=black,draw=black] (A5) circle (1.4pt) node[below right]{$\xc$};
\filldraw[fill=black,draw=black] (A6) circle (1.4pt) node[below left]{$\xe$};

\end{scope}

\begin{scope}[xshift=100pt,yshift=-45pt]

\coordinate (A1) at (-1,0);
\coordinate (A2) at (-1/2,{\tr});
\coordinate (A3) at (1/2,{\tr});
\coordinate (A4) at (1,0);
\coordinate (A5) at (1/2,{-\tr});
\coordinate (A6) at (-1/2,{-\tr});

\draw (A1)--(A2) node[midway,above left]{$\gamma$};
\draw (A2)--(A3) node[midway,above]{$\alpha$};
\draw (A3)--(A4) node[midway,above right]{$\beta$};
\draw (A4)--(A5) node[midway,below right]{$\gamma$};
\draw (A5)--(A6) node[midway,below]{$\alpha$};
\draw (A6)--(A1) node[midway,below left]{$\beta$};


\filldraw[fill=black,draw=black] (A1) circle (1.4pt) node[left]{$\xa$};
\filldraw[fill=black,draw=black] (A2) circle (1.4pt) node[above left]{$\xb$};
\filldraw[fill=white,draw=black] (A3) circle (1.4pt) node[above right]{$\xf$};
\filldraw[fill=black,draw=black] (A4) circle (1.4pt) node[right]{$\xd$};
\filldraw[fill=black,draw=black] (A5) circle (1.4pt) node[below right]{$\xc$};
\filldraw[fill=black,draw=black] (A6) circle (1.4pt) node[below left]{$\xe$};

\end{scope}

\begin{scope}[xshift=200pt,yshift=0pt]

\coordinate (A1) at (-1,0);
\coordinate (A2) at (-1/2,{\tr});
\coordinate (A3) at (1/2,{\tr});
\coordinate (A4) at (1,0);
\coordinate (A5) at (1/2,{-\tr});
\coordinate (A6) at (-1/2,{-\tr});

\draw (A1)--(A2) node[midway,above left]{$\gamma$};
\draw (A2)--(A3) node[midway,above]{$\alpha$};
\draw (A3)--(A4) node[midway,above right]{$\beta$};
\draw (A4)--(A5) node[midway,below right]{$\gamma$};
\draw (A5)--(A6) node[midway,below]{$\alpha$};
\draw (A6)--(A1) node[midway,below left]{$\beta$};


\filldraw[fill=black,draw=black] (A1) circle (1.4pt) node[left]{$\xa$};
\filldraw[fill=black,draw=black] (A2) circle (1.4pt) node[above left]{$\xb$};
\filldraw[fill=black,draw=black] (A3) circle (1.4pt) node[above right]{$\xf$};
\filldraw[fill=black,draw=black] (A4) circle (1.4pt) node[right]{$\xd$};
\filldraw[fill=black,draw=black] (A5) circle (1.4pt) node[below right]{$\xc$};
\filldraw[fill=black,draw=black] (A6) circle (1.4pt) node[below left]{$\xe$};

\end{scope}

\end{tikzpicture}
\caption{Initial value problem on a hexagon.  Initial variables are indicated by black vertices. For the first step either $(\Aa)$ may be used to solve for $\xf$, or $(\Ae)$ may be used to solve for $\xd$.  For either choice, there will then be four equations that can be used to determine the remaining variable.}
\label{fig:hexevolution}
\end{figure}

The three different parameters $\al$, $\bt$, $\gm$, on the hexagonal faces of Figure \ref{fig:hexagonunitcell}, may be associated with three different directions in the hexagonal lattice, such that opposite edges of each hexagonal face are always associated with the same parameter. The example in Figure \ref{fig:hexevolution} would correspond to an evolution in the hexagonal lattice in the direction associated to the parameter $\bt$. The evolutions of systems of hex equations in other directions of the hexagonal lattice would involve different choices of four initial variables and different choices of the hex equations that would be used to solve for the two unknown variables.

Initial conditions based on the diagram of Figure \ref{fig:hexevolution} are given on the hexagonal lattices shown in Figure \ref{fig:hexlattice1}.  These may be regarded as analogues of the staircase (or sawtooth) and corner initial conditions, respectively, which are two commonly used types of initial conditions for the square lattice.  In both cases, the consistent systems of hex equations can be used to uniquely determine the unknown variables on white vertices.  Initial conditions for other directions in the hexagonal lattice may be obtained simply by rotations by $\pi/3$.

\newcommand\hx{12pt}
\newcommand\vs{1.7pt}

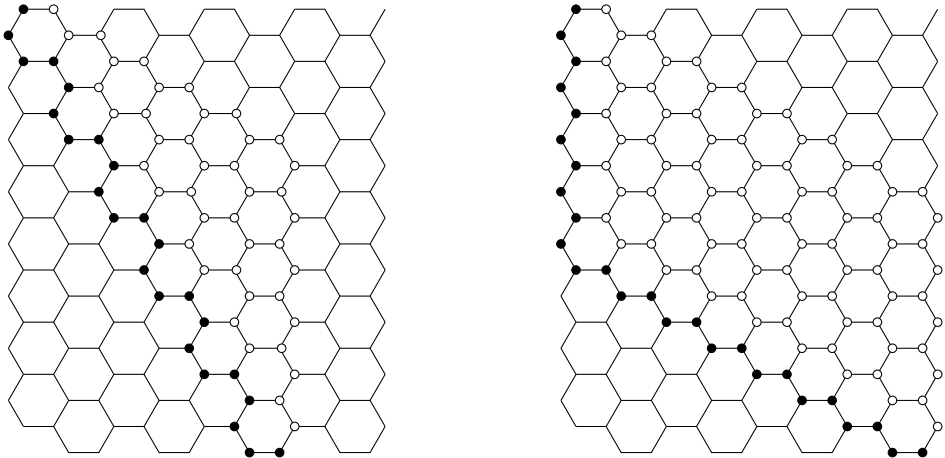
\begin{figure}[htb!]
\centering
\begin{tikzpicture}[rotate=0,scale=0.95]

\pgfmathsetmacro\cs{cos(30)}

\foreach \y in {0,1,2,3,4,6,7,8,5}{

\ifnum\y<8{
\draw (-15*\hx,2*\y*\hx*\cs+2*\hx*\cs) -- ++(-120:\hx);
\draw (18*\hx-21*\hx,2*\y*\hx*\cs) -- ++(120:\hx);
}\fi


\foreach \x in {0,...,5}{

\ifnum\x<4{
\draw (3*\x*\hx-15*\hx,2*\y*\hx*\cs) -- ++(0:\hx) -- ++(-60:\hx) -- ++(0:\hx) -- ++(60:\hx);

\ifnum\y<8{
\draw (3*\x*\hx-15*\hx,2*\y*\hx*\cs) -- ++(120:\hx);
\draw (3*\x*\hx+\hx-15*\hx,2*\y*\hx*\cs) -- ++(60:\hx);
}\fi
}\fi

}

}

\foreach \y in {0,1,2,3,4,6,7,8,5}{

\foreach \x in {0,...,5}{

\pgfmathsetmacro\tmp{-\y+5-1}
\pgfmathsetmacro\tmpp{\y-\tmp}

\pgfmathsetmacro\tmpo{\y+1}

\ifnum\y>2{\ifnum\x<\tmpo{\ifnum\x<6{\ifnum\y<6{

\ifnum\y=5{
\filldraw[fill=black,draw=black] (3*\x*\hx/2 -3*\y*\hx,2*8*\hx*\cs-3*\x*\hx*\cs) circle (\vs);
\filldraw[fill=black,draw=black] (3*\x*\hx/2 -3*\y*\hx-\hx/2,2*8*\hx*\cs-3*\x*\hx*\cs-\hx*\cs) circle (\vs);
\filldraw[fill=black,draw=black] (3*\x*\hx/2 -3*\y*\hx,2*8*\hx*\cs-3*\x*\hx*\cs-2*\hx*\cs) circle (\vs);
\filldraw[fill=black,draw=black] (3*\x*\hx/2 -3*\y*\hx+\hx,2*8*\hx*\cs-3*\x*\hx*\cs-2*\hx*\cs) circle (\vs);
}
\else{
\ifnum\y=4{
\filldraw[fill=white,draw=black] (3*\x*\hx/2 -3*\y*\hx-\hx*\cs/2,2*8*\hx*\cs-3*\x*\hx*\cs-\hx*\cs) circle (\vs);
\filldraw[fill=white,draw=black] (3*\x*\hx/2 -3*\y*\hx,2*8*\hx*\cs-3*\x*\hx*\cs-2*\hx*\cs) circle (\vs);
\ifnum\x<\y{
\filldraw[fill=white,draw=black] (3*\x*\hx/2 -3*\y*\hx+\hx,2*8*\hx*\cs-3*\x*\hx*\cs-2*\hx*\cs) circle (\vs);
\filldraw[fill=white,draw=black] (3*\x*\hx/2 -3*\y*\hx+3*\hx/2,2*8*\hx*\cs-3*\x*\hx*\cs-3*\hx*\cs) circle (\vs);
}\fi
}
\else{
\ifnum\x>0{
\ifnum\x<3{
\filldraw[fill=white,draw=black] (3*\x*\hx/2 -3*\y*\hx-\hx*\cs/2,2*8*\hx*\cs-3*\x*\hx*\cs-\hx*\cs) circle (\vs);
\filldraw[fill=white,draw=black] (3*\x*\hx/2 -3*\y*\hx,2*8*\hx*\cs-3*\x*\hx*\cs-2*\hx*\cs) circle (\vs);
\ifnum\x<2{
\filldraw[fill=white,draw=black] (3*\x*\hx/2 -3*\y*\hx+\hx,2*8*\hx*\cs-3*\x*\hx*\cs-2*\hx*\cs) circle (\vs);
\filldraw[fill=white,draw=black] (3*\x*\hx/2 -3*\y*\hx+3*\hx/2,2*8*\hx*\cs-3*\x*\hx*\cs-3*\hx*\cs) circle (\vs);
}\fi}\fi}\fi
}\fi}\fi

\ifnum\x>\tmp{
\ifnum\x<\tmpp{
\filldraw[fill=white,draw=black] (3*\x*\hx/2+3*\hx/2 -3*\y*\hx,2*6*\hx*\cs+3*\hx*\cs-3*\x*\hx*\cs) circle (\vs);
\filldraw[fill=white,draw=black] (3*\x*\hx/2+2*\hx/2 -3*\y*\hx,2*6*\hx*\cs+\hx*\cs-3*\x*\hx*\cs+3*\hx*\cs) circle (\vs);
}\fi}\fi

}\fi}\fi}\fi}\fi

}

}

\begin{scope}[xshift=220pt]

\pgfmathsetmacro\cs{cos(30)}

\foreach \y in {0,1,2,3,4,6,7,8,5}{

\ifnum\y<8{
\draw (-15*\hx,2*\y*\hx*\cs+2*\hx*\cs) -- ++(-120:\hx);
\draw (18*\hx-21*\hx,2*\y*\hx*\cs) -- ++(120:\hx);
}\fi


\foreach \x in {0,...,7}{

\ifnum\x<4{
\draw (3*\x*\hx-15*\hx,2*\y*\hx*\cs) -- ++(0:\hx) -- ++(-60:\hx) -- ++(0:\hx) -- ++(60:\hx);

\ifnum\y<8{
\draw (3*\x*\hx-15*\hx,2*\y*\hx*\cs) -- ++(120:\hx);
\draw (3*\x*\hx+\hx-15*\hx,2*\y*\hx*\cs) -- ++(60:\hx);
}\fi
}\fi
}
}

\foreach \y in {0,1,2,3,4,6,7,8,5}{

\foreach \x in {0,...,7}{

\pgfmathsetmacro\tmp{-\y+5-1}
\pgfmathsetmacro\tmpp{\y-\tmp}

\pgfmathsetmacro\tmpo{\y+1}

\ifnum\y=5{
\filldraw[fill=black,draw=black] (3*\x*\hx/2 -4*\y*\hx+6*\hx,2*3*\hx*\cs-1*\x*\hx*\cs) circle (\vs);
\ifnum\x<7{
\filldraw[fill=black,draw=black] (3*\x*\hx/2 -4*\y*\hx+9*\hx/2+2*\hx,2*3*\hx*\cs-\x*\hx*\cs-\hx*\cs) circle (\vs);
}\fi
}\fi

\ifnum\x=0{
\ifnum\y>2{
\filldraw[fill=black,draw=black] (-15*\hx+\hx*\x,2*\y*\hx*\cs) circle (\vs);
\ifnum\y<8{
\filldraw[fill=black,draw=black] (-\hx/2-15*\hx+\hx*\x,2*\y*\hx*\cs+\cs*\hx) circle (\vs);
}\fi}\fi
}
\else{
\ifnum\y>3{
\filldraw[fill=white,draw=black] (-15*\hx+2*\hx*\x-\hx*\x/2,2*\y*\hx*\cs-2*\x*\hx*\cs+\hx*\cs*0+\x*\hx*\cs) circle (\vs);
\ifnum\y<9{
\filldraw[fill=white,draw=black] (-\hx/2-15*\hx+2*\hx*\x-\hx*\x/2,2*\y*\hx*\cs+\cs*\hx-2*\x*\hx*\cs+0*\hx*\cs+\x*\hx*\cs) circle (\vs);
}\fi}\fi
}\fi

}

\ifnum\y<5{
\filldraw[fill=white,draw=black] (-3*\hx-\hx/2,2*\y*\hx*\cs+\hx*\cs) circle (\vs);
\filldraw[fill=white,draw=black] (-3*\hx,2*\y*\hx*\cs) circle (\vs);
}\fi

}

\end{scope}
\end{tikzpicture}
\caption{Variables on black vertices for ``staircase'' initial conditions (left) and ``corner'' initial conditions (right).  In both cases, the variables at white vertices may be uniquely determined if the systems of hex equations are consistent.}
\label{fig:hexlattice1}
\end{figure}

There are obviously many other examples of initial conditions that might be obtained simply by taking different paths (``staircases'') in the hexagonal lattice, beyond the simple examples shown in Figures \ref{fig:hexlattice1}.  One could also consider disconnected initial conditions, as shown in the examples of Figure \ref{fig:hexlattice2}.  In contrast with Figure \ref{fig:hexlattice1}, both of the examples of Figure \ref{fig:hexlattice2} have initial conditions that are on broken paths in the hexagonal lattice.  Nonetheless, these two examples of initial conditions may also be used to define unique evolutions of consistent systems of hex equations in the hexagonal lattice.   

\begin{figure}[htb!]
\centering
\begin{tikzpicture}[scale=0.95]

\pgfmathsetmacro\cs{cos(30)}

\foreach \y in {0,...,8}{

\pgfmathsetmacro\tmp{8-\y+1}

\ifnum\y<8{
\draw (0,2*\y*\hx*\cs+2*\hx*\cs) -- ++(-120:\hx);
\draw (15*\hx,2*\y*\hx*\cs) -- ++(120:\hx);
}\fi

\foreach \x in {0,...,4}{
\draw (3*\x*\hx,2*\y*\hx*\cs) -- ++(0:\hx) -- ++(-60:\hx) -- ++(0:\hx) -- ++(60:\hx);

\ifnum\y<8{
\draw (3*\x*\hx,2*\y*\hx*\cs) -- ++(120:\hx);
\draw (3*\x*\hx+\hx,2*\y*\hx*\cs) -- ++(60:\hx);
}\fi

\ifnum\x=0{
\filldraw[fill=black,draw=black] (3*\x*\hx,2*\y*\hx*\cs) circle (\vs);
}
\else{\ifnum\x<\tmp{
\filldraw[fill=white,draw=black] (3*\x*\hx,2*\y*\hx*\cs) circle (\vs);
}\fi}\fi

\ifnum\y=0{
\filldraw[fill=black,draw=black] (3*\x*\hx+\hx,2*\y*\hx*\cs) circle (\vs);
\filldraw[fill=black,draw=black] (3*\x*\hx+\hx+\hx/2,2*\y*\hx*\cs-\cs*\hx) circle (\vs);
\filldraw[fill=black,draw=black] (3*\x*\hx+2*\hx+\hx/2,2*\y*\hx*\cs-\cs*\hx) circle (\vs);
}
\else{
\ifnum\x<\tmp{
\filldraw[fill=white,draw=black] (3*\x*\hx+\hx,2*\y*\hx*\cs) circle (\vs);
\filldraw[fill=white,draw=black] (3*\x*\hx+\hx+\hx/2,2*\y*\hx*\cs-\cs*\hx) circle (\vs);
\filldraw[fill=white,draw=black] (3*\x*\hx+2*\hx+\hx/2,2*\y*\hx*\cs-\cs*\hx) circle (\vs);
}\fi
}\fi
}

\ifnum\y<8{
\filldraw[fill=black,draw=black] (-\hx/2,2*\y*\hx*\cs+\cs*\hx) circle (\vs);
\ifnum\y<3{
\filldraw[fill=white,draw=black] (15*\hx,2*\y*\hx*\cs) circle (\vs);
}\fi}\fi

}


\begin{scope}[xshift=230pt]

\pgfmathsetmacro\cs{cos(30)}

\foreach \y in {0,...,8}{

\pgfmathsetmacro\tmpa{8-\y-2}

\ifnum\y<8{
\draw (0,2*\y*\hx*\cs+2*\hx*\cs) -- ++(-120:\hx);
\draw (15*\hx,2*\y*\hx*\cs) -- ++(120:\hx);
}\fi

\foreach \x in {0,...,4}{
\draw (3*\x*\hx,2*\y*\hx*\cs) -- ++(0:\hx) -- ++(-60:\hx) -- ++(0:\hx) -- ++(60:\hx);

\ifnum\y<8{
\draw (3*\x*\hx,2*\y*\hx*\cs) -- ++(120:\hx);
\draw (3*\x*\hx+\hx,2*\y*\hx*\cs) -- ++(60:\hx);
}\fi

\ifnum\x=0{
}
\else{
}\fi

\ifnum\y=0{
\filldraw[fill=black,draw=black] (3*\x*\hx,2*\y*\hx*\cs) circle (\vs);
\filldraw[fill=black,draw=black] (3*\x*\hx+\hx,2*\y*\hx*\cs) circle (\vs);
\filldraw[fill=black,draw=black] (3*\x*\hx+\hx+\hx/2,2*\y*\hx*\cs-\cs*\hx) circle (\vs);
\filldraw[fill=black,draw=black] (3*\x*\hx+2*\hx+\hx/2,2*\y*\hx*\cs-\cs*\hx) circle (\vs);
}
\else{
\ifnum\x<\tmpa{
\filldraw[fill=white,draw=black] (3*\x*\hx,2*\y*\hx*\cs) circle (\vs);
\filldraw[fill=white,draw=black] (3*\x*\hx+\hx,2*\y*\hx*\cs) circle (\vs);
\filldraw[fill=white,draw=black] (3*\x*\hx+\hx+\hx/2,2*\y*\hx*\cs-\cs*\hx) circle (\vs);
\filldraw[fill=white,draw=black] (3*\x*\hx+2*\hx+\hx/2,2*\y*\hx*\cs-\cs*\hx) circle (\vs);
}\fi}\fi
}

\ifnum\y<5{
\filldraw[fill=black,draw=black] (-\hx/2,2*\y*\hx*\cs+\cs*\hx) circle (\vs);
\ifnum\y=0{
\filldraw[fill=black,draw=black] (15*\hx,2*\y*\hx*\cs) circle (\vs);
}\fi
}\fi

}

\end{scope}
\end{tikzpicture}
\caption{Left: Initial conditions (black vertices) for evolutions starting from the leftmost column and going to the right, column-by-column. Right: Initial conditions for evolutions starting from the lowest row and going up, row-by-row. In both cases, the white vertices may be uniquely determined if the systems of hex equations are consistent.  
In contrast with Figure \ref{fig:hexlattice1} these are examples of disconnected initial conditions.}
\label{fig:hexlattice2}
\end{figure}
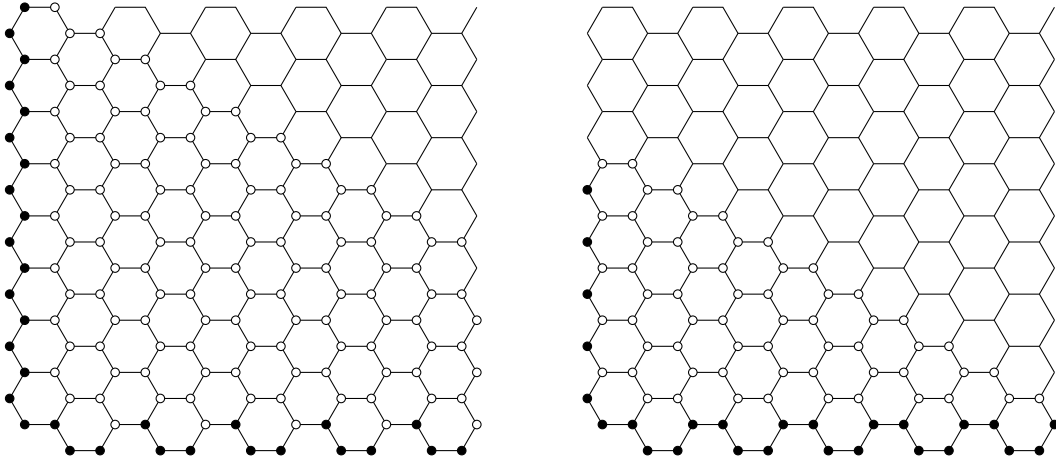

\section{Consistent systems of hex and quad equations on polytopes}\label{sec:polytope}

The properties of CAC and CAFCC allow to respectively embed systems of quad equations or face-centered quad equations consistently into higher-dimensional $\mathbb{Z}^n$-type lattices.  Analogues of the properties of CAC and CAFCC will be formulated here for hex equations in terms of their consistency on $n$-dimensional polytopes (or $n$-polytopes), for $n\geq3$, which have faces that are hexagons and quadrilaterals.  The consistent systems of hex equations will be assigned to hexagonal faces of such a polytope, and regular quad equations will be assigned to quadrilateral faces.  There are obviously a wide range of polytopes that could be used for defining consistent systems of equations.  Some explicit examples will be considered for consistent systems of equations on polytopes in 3 and 4 dimensions.

To define consistent systems of hex and quad equations on polytopes the graphical representations of the equations will be utilised.  These are given in Figure \ref{fig:hexagonunitcell} for the systems of type-A and type-C hex equations \eqref{Ahexdef} and \eqref{Chexdef}, and in Figure \ref{fig:5quads} for type-H ABS quad equations.  
In addition to these, a pair of type-Q quad equations will be denoted as $\quadQ{x_a}{x_b}{x_c}{x_d}{\alpha}{\beta}=0$ and $\quadQs{x_a}{x_b}{x_c}{x_d}{\alpha}{\beta}=0$ and these two equations will be associated to the respective diagrams shown in Figure \ref{fig:2quads}.

\begin{figure}[htb!]
\centering
\begin{tikzpicture}[scale=1.2]

\coordinate (A1) at (-1,-1);
\coordinate (A2) at (-1,1);
\coordinate (A3) at (1,1);
\coordinate (A4) at (1,-1);

\draw (A1)--(A2) node[midway,left]{$\beta$};
\draw (A2)--(A3) node[midway,above]{$\alpha$};
\draw (A3)--(A4) node[midway,right]{$\beta$};
\draw (A4)--(A1) node[midway,below]{$\alpha$}node[midway,below=15pt]{$\quadQ{x_a}{x_b}{x_c}{x_d}{\alpha}{\beta}=0$};

\fill (A1) circle (2pt) node[below left]{$x_c$};
\fill (A2) circle (2pt) node[above left]{$x_a$};
\fill (A3) circle (2pt) node[above right]{$x_b$};
\fill (A4) circle (2pt) node[below right]{$x_d$};

\begin{scope}[xshift=6cm]
\coordinate (A1) at (-1,-1);
\coordinate (A2) at (-1,1);
\coordinate (A3) at (1,1);
\coordinate (A4) at (1,-1);

\draw[-!-=.53] (A1)--(A2) node[midway,left=2pt]{$\beta$};
\draw[-!-=.53] (A2)--(A3) node[midway,above=2pt]{$\alpha$};
\draw[-!-=.53] (A3)--(A4) node[midway,right=2pt]{$\beta$};
\draw[-!-=.53] (A4)--(A1) node[midway,below=2pt]{$\alpha$}node[midway,below=15pt]{$\quadQs{x_a}{x_b}{x_c}{x_d}{\alpha}{\beta}=0$};

\fill (A1) circle (2pt) node[below left]{$x_c$};
\fill (A2) circle (2pt) node[above left]{$x_a$};
\fill (A3) circle (2pt) node[above right]{$x_b$};
\fill (A4) circle (2pt) node[below right]{$x_d$};
\end{scope}

\end{tikzpicture}
\caption{Two type-Q ABS quad equations.}
\label{fig:2quads}
\end{figure}
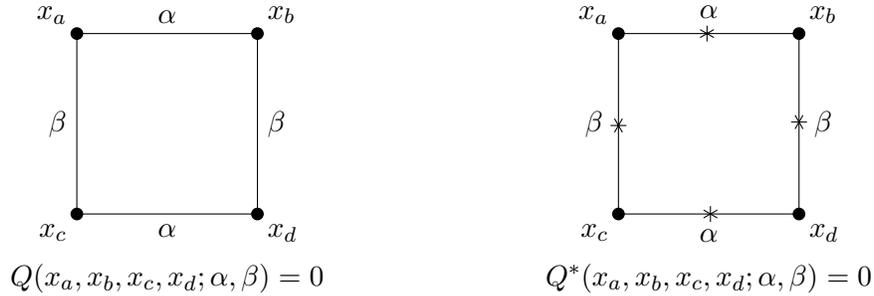

To begin with, for some polytope that contains only quadrilateral and hexagonal faces, variables are assigned to each of its vertices and parameters are assigned to each of its edges, such that opposite edges of every quadrilateral and hexagonal face in the polytope are associated to the same parameter.  
The simplest case to consider is when the same systems of type-A hex equations $\bm{A}=0$ are assigned to each hexagonal face, and the same type-Q ABS quad equations $Q=0$ are assigned to each quadrilateral face.  Because these equations satisfy the hexagonal symmetries \eqref{d6syms} and square symmetries \eqref{symmetriesquad} respectively, their orientation on faces of the polytope does not matter.  The pairs of quad and face-centered quad equations listed on the left of Table \ref{tab:CAFCCtoABS}, as well as the elliptic pair $Q4$ and $A4$, may respectively be used to collectively form consistent systems of equations on polytopes. 

An example is shown for the hexagonal prism of Figure \ref{fig:hexprismexample}.  This polyhedron has the four parameters $\al,\bt,\gm,\rho$ assigned to its edges, and twelve variables assigned to its vertices.  A type-A system of hex equations $\hexA{x_a}{x_b}{x_c}{x_d}{x_e}{x_f}{\al}{\bt}{\gm}$ is assigned to the bottom hexagon, a type-A system of hex equations $\hexA{y_a}{y_b}{y_c}{y_d}{y_e}{y_f}{\al}{\bt}{\gm}$ is assigned to the top hexagon, and there are six type-Q ABS quad equations $Q=0$ assigned to the six quadrilaterals.  An initial value problem may be posed by choosing four out of the six variables $x_a,x_b,x_c,x_d,x_e,x_f$ that are on consecutive vertices, plus one of the six variables $y_a,y_b,y_c,y_d,y_e,y_f$, as initial values.  One may check directly that systems of type-A hex equations and type-Q ABS equations (formed from the pairs listed in Table \ref{tab:CAFCCtoABS}) will give consistent solutions for the unknown variables.

\begin{figure}[htb!]
\centering
\begin{tikzpicture}[line join=bevel,z=-5.5,scale=2.6]

\pgfmathsetmacro\tr{sqrt(3)/2}
\pgfmathsetmacro\sn{sin(-73)}
\pgfmathsetmacro\cs{cos(-73)}

\coordinate (A0) at (1,{-\sn*1/2},{\cs*1/2});
\coordinate (A1) at (1/2,{\cs*\tr+\sn/2},{\sn*\tr-\cs*1/2});
\coordinate (A2) at (1/2,{\cs*\tr-\sn/2},{\sn*\tr+\cs*1/2});
\coordinate (A3) at (-1/2,{\cs*\tr+\sn/2},{\sn*\tr-\cs*1/2});
\coordinate (A4) at (-1/2,{\cs*\tr-\sn/2},{\sn*\tr+\cs*1/2});
\coordinate (A5) at (-1,{\sn*1/2},{-\cs*1/2});
\coordinate (A6) at (-1,{-\sn*1/2},{\cs*1/2});
\coordinate (A7) at (-1/2,{-\cs*\tr+\sn/2},{-\sn*\tr-\cs*1/2});
\coordinate (A8) at (-1/2,{-\cs*\tr-\sn/2},{-\sn*\tr+\cs*1/2});
\coordinate (A9) at (1/2,{-\cs*\tr+\sn/2},{-\sn*\tr-\cs*1/2});
\coordinate (A10) at (1/2,{-\cs*\tr-\sn/2},{-\sn*\tr+\cs*1/2});
\coordinate (A11) at (1,{\sn*1/2},{-\cs*1/2});

\draw (A6)--(A4);
\draw (A4)--(A2);
\draw (A2)--(A0);
\draw (A0)--(A10);
\draw (A10)--(A8);
\draw (A8)--(A6);

\draw[dashed] (A5)--(A3);
\draw[dashed] (A3)--(A1);
\draw[dashed] (A1)--(A11);
\draw (A11)--(A9) node[midway,below right]{$\gamma$};
\draw (A9)--(A7)node[midway,below]{$\alpha$};
\draw (A7)--(A5) node[midway,below left]{$\beta$};

\draw (A6)--(A5) node[midway,left]{$\rho$};
\draw[dashed] (A4)--(A3);
\draw[dashed] (A2)--(A1);
\draw (A0)--(A11);
\draw (A10)--(A9);
\draw (A8)--(A7);

\fill (A5) circle (0.9pt) node[left]{$\xa$};
\fill (A3) circle (0.9pt) node[below=1pt]{$\xb$};
\fill (A1) circle (0.9pt) node[below left]{$\xf$};
\fill (A11) circle (0.9pt) node[right]{$\xd$};
\fill (A9) circle (0.9pt) node[below right]{$\xc$};
\fill (A7) circle (0.9pt) node[below left]{$\xe$};

\fill (A6) circle (0.9pt) node[left]{$\ya$};
\fill (A4) circle (0.9pt) node[above left]{$\yb$};
\fill (A2) circle (0.9pt) node[above right]{$\yf$};
\fill (A0) circle (0.9pt) node[right]{$\yd$};
\fill (A10) circle (0.9pt) node[above]{$\yc$};
\fill (A8) circle (0.9pt) node[above right]{$\ye$};

\end{tikzpicture}
\caption{Variables and parameters on the hexagonal prism.  Opposite edges of each face are assigned the same parameter. On faces of this hexagonal prism there are two systems of type-A hex equations and six type-Q quad equations.}
\label{fig:hexprismexample}
\end{figure}
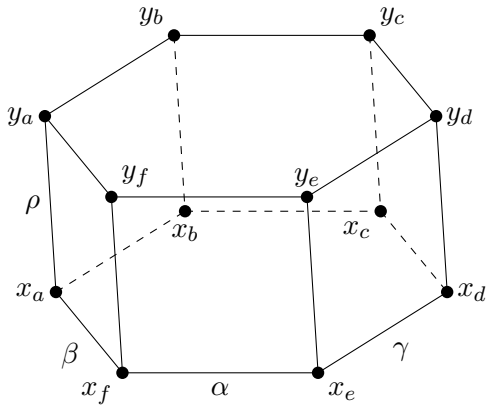

More complicated (and arguably more interesting) combinations of equations arise when considering systems of type-C hex equations $\bm{C}=0$, because orientations of both hex and type-H quad equations must be taken into account.  The assignment of variables and parameters to a polytope is the same as described above, but to have consistent systems of equations requires assigning to quadrilaterals some particular combinations of two type-Q quad equations $Q=0$ and $Q^\ast=0$ represented by the diagrams of Figure \ref{fig:2quads}, and a type-H quad equation $H=0$ and its trapezoidal version $\quadHun=0$ 
represented by the diagrams of Figure \ref{fig:5quads}.  The combination of equations is chosen so that two neighbouring faces of the polytope share the same type of edge according to the diagrams of Figures \ref{fig:5quads}, \ref{fig:hexagonunitcell}, and \ref{fig:2quads}.  For a given system of type-C hex equations, there can be more than one way to arrange the quad equations to form a consistent system on a polytope.  The combinations of type-C hex equations and quad equations that may be used to collectively form consistent systems on polytopes are listed in Table \ref{tab:typeCpolytope}.

\begin{table}[htb!]
\centering

\begin{tabular}{c|c|c|c|c}

\multicolumn{2}{c|}{$\bm{C}$} & \multicolumn{3}{c}{Quad equations} 

\\

\hline

\\[-0.4cm]

 $C$ & $\overline{C}$ & $H$ & $Q$ & $Q^\ast$  
 
 \\
 
 \hline
 
 \\[-0.4cm]
 
$C3_{(1/2;\,1/2;\,0)}$  &  
$C3_{(1/2;\,0;\,1/2)}$  &
$H3_{(1;\,1)}$ &
$Q3_{(1)}$ & $Q3_{(0)}$

\\[0.11cm]
 
$C3_{(1/2;\,0;\,1/2)}$  &  
$C3_{(1/2;\,1/2;\,0)}$  &
$H3_{(1;\,1)}\; (x_a\leftrightarrow x_b,x_c\leftrightarrow x_d)$ & 
$Q3_{(0)}$ & $Q3_{(1)}$

\\[0.11cm]
 
$C3_{(1;\,0;\,0)}$  &  
\textrm{same as} $C$ &
$H3_{(1;\,0)}$ &
$Q3_{(0)}$ & \textrm{same as} $Q$

\\[0.11cm]
 
$C3_{(0;\,0;\,0)}$  &  
\textrm{same as} $C$ &
$H3_{(0;\,0)}$ &
$Q3_{(0)}$ & \textrm{same as} $Q$

\\[0.11cm]
 
$C2_{(1;\,1;\,0)}$  &  
$C2_{(1;\,0;\,1)}$  &
$H2_{(1)}$ &
$Q2$ & $Q1_{(1)}$

\\[0.11cm]
 
$C2_{(1;\,0;\,1)}$  &  
$C2_{(1;\,1;\,0)}$  &
$H2_{(1)}\; (x_a\leftrightarrow x_b,x_c\leftrightarrow x_d)$ &
$Q1_{(1)}$ & $Q2$

\\[0.11cm]
 
$C2_{(1;\,0;\,0)}$ &
\textrm{same as} $C$ &
$H2_{(0)}$ &
$Q1_{(1)}$ & \textrm{same as} $Q$

\\[0.11cm]
 
$C1_{(1)}$  &  
$C2_{(0;\,0;\,0)}$  &
$H1_{(1)}$ &
$Q1_{(1)}$ & $Q1_{(0)}$

\\[0.11cm]
 
$C2_{(0;\,0;\,0)}$ &
$C1_{(1)}$  &
$H1_{(1)}\; (x_a\leftrightarrow x_b,x_c\leftrightarrow x_d)$ &
$Q1_{(0)}$ & $Q1_{(1)}$

\\[0.11cm]
 
$C1_{(0)}$ &
\textrm{same as} $C$ &
$H1_{(0)}$ &
$Q1_{(0)}$ & \textrm{same as} $Q$

\\[0.05cm]

\hline 
\end{tabular}
\caption{Combinations of type-C hex equations and quad equations that can form consistent systems of equations on polytopes with quadrilateral and hexagonal faces.  ``Same as $C$'' indicates that $\overline{C}=C$, and ``same as $Q$'' indicates that $Q^\ast=Q$.}
\label{tab:typeCpolytope}
\end{table}

Following the above guidelines, the combinations of equations listed in Table \ref{tab:typeCpolytope} will be used to form consistent systems of equations on three examples of polyhedra given by the hexagonal prism, the elongated dodecahedron, and the truncated octahedron.  Each of these polyhedra are space-filling and tessellate three-dimensional space by translations, where they respectively form honeycombs known as the hexagonal prismatic honeycomb, the elongated dodecahedral honeycomb, and the bitruncated cubic honeycomb. The consistency of systems of equations on these polyhedra implies that the equations may be consistently embedded into these three-dimensional honeycombs.  In particular, the hexagonal prismatic honeycomb may be regarded as being composed of infinite stackings of hexagonal prisms on each of the faces of the hexagonal lattice, and this provides a natural extension of the systems of hex equations into a third dimension orthogonal to the hexagonal lattice.  This may be regarded as the analogue of extending systems of regular quad equations in the square lattice $\mathbb{Z}^2$ into the cubic lattice $\mathbb{Z}^3$ using the consistency of the equations on the cube.

As a four-dimensional example, consistent combinations of type-A hex equations and type-Q quad equations will be considered for a 4-polytope known as the 6-6-duoprism. The 6-6-duoprism was chosen since it is one of the simpler examples of a 4-polytope that contains only quadrilateral and hexagonal faces, and it also arises as one of the facets of the omnitruncated 5-simplex, which also has only quadrilateral and hexagonal faces.  The omnitruncated 5-simplex is the permutahedron $P_5$, where in general a permutahedron $P_{n-1}\subseteq \mathbb{R}^n$ may be defined as the convex hull of all vectors that are obtained by permuting the coordinates of the vector $(1,\ldots,n)$ \cite{ZieglerPolytopes,ConwaySloaneSpherePackings}.  Thus, $P_{n-1}$ has $n!$ vertices, and the hexagon and truncated octahedron are $P_{2}$ and $P_{3}$ respectively.  $P_{n-1}$ lies entirely in an $(n-1)$-dimensional hyperplane and tessellates this hyperplane through translations.  Thus, consistency on permutahedra (and on facets of permutahedra, such as the 6-6-duoprism) potentially would lead to interesting systems of consistent equations that are embedded in higher-dimensional space.

\subsection{Consistency-around-a-hexagonal-prism (CAHP)}

The hexagonal prism is a polyhedron that has two hexagonal faces and six quadrilateral faces, as shown in the diagram of Figure \ref{fig:CAHP1}. The hexagonal prism is assigned an overdetermined system of eighteen equations (two systems of hex equations and six quad equations) for seven unknown variables.  If the overdetermined system of equations on the hexagonal prism has a consistent solution, then the system of equations will be said to satisfy consistency-around-a-hexagonal prism (CAHP).  

\begin{figure}[htb!]
\centering
\begin{tikzpicture}[line join=bevel,z=-5.5,scale=3.2]

\pgfmathsetmacro\tr{sqrt(3)/2}
\pgfmathsetmacro\sn{sin(-73)}
\pgfmathsetmacro\cs{cos(-73)}

\coordinate (A0) at (1,{-\sn*1/2},{\cs*1/2});
\coordinate (A1) at (1/2,{\cs*\tr+\sn/2},{\sn*\tr-\cs*1/2});
\coordinate (A2) at (1/2,{\cs*\tr-\sn/2},{\sn*\tr+\cs*1/2});
\coordinate (A3) at (-1/2,{\cs*\tr+\sn/2},{\sn*\tr-\cs*1/2});
\coordinate (A4) at (-1/2,{\cs*\tr-\sn/2},{\sn*\tr+\cs*1/2});
\coordinate (A5) at (-1,{\sn*1/2},{-\cs*1/2});
\coordinate (A6) at (-1,{-\sn*1/2},{\cs*1/2});
\coordinate (A7) at (-1/2,{-\cs*\tr+\sn/2},{-\sn*\tr-\cs*1/2});
\coordinate (A8) at (-1/2,{-\cs*\tr-\sn/2},{-\sn*\tr+\cs*1/2});
\coordinate (A9) at (1/2,{-\cs*\tr+\sn/2},{-\sn*\tr-\cs*1/2});
\coordinate (A10) at (1/2,{-\cs*\tr-\sn/2},{-\sn*\tr+\cs*1/2});
\coordinate (A11) at (1,{\sn*1/2},{-\cs*1/2});

\draw[double,-<-=.55] (A6)--(A4);
\draw (A4)--(A2);
\draw (A2)--(A0);
\draw[double,->-=.6] (A0)--(A10);
\draw[-!-=.53] (A10)--(A8);
\draw[-!-=.53] (A8)--(A6);

\draw[dashed,double,-<-=.55] (A5)--(A3);
\draw[dashed] (A3)--(A1);
\draw[dashed] (A1)--(A11);
\draw[double,->-=.6] (A11)--(A9) node[midway,below right=1pt]{$\gamma$};
\draw[-!-=.53] (A9)--(A7)node[midway,below=1pt]{$\alpha$};
\draw[-!-=.53] (A7)--(A5) node[midway,below left=1pt]{$\beta$};

\draw[-!-=.53] (A6)--(A5) node[midway,left=1pt]{$\rho$};
\draw[dashed] (A4)--(A3);
\draw[dashed] (A2)--(A1);
\draw (A0)--(A11);
\draw[-!-=.53] (A10)--(A9);
\draw[-!-=.53] (A8)--(A7);

\filldraw[fill=black,draw=black] (A5) circle (0.7pt) node[left]{$\xa$};
\filldraw[fill=black,draw=black] (A3) circle (0.7pt) node[below=1pt]{$\xb$};
\filldraw[fill=white,draw=black] (A1) circle (0.7pt) node[below left]{$\xf$};
\filldraw[fill=white,draw=black] (A11) circle (0.7pt) node[right]{$\xd$};
\filldraw[fill=black,draw=black] (A9) circle (0.7pt) node[below right]{$\xc$};
\filldraw[fill=black,draw=black] (A7) circle (0.7pt) node[below left]{$\xe$};

\filldraw[fill=black,draw=black] (A6) circle (0.7pt) node[left]{$\ya$};
\filldraw[fill=white,draw=black] (A4) circle (0.7pt) node[above left]{$\yb$};
\filldraw[fill=white,draw=black] (A2) circle (0.7pt) node[above right]{$\yf$};
\filldraw[fill=white,draw=black] (A0) circle (0.7pt) node[right]{$\yd$};
\filldraw[fill=white,draw=black] (A10) circle (0.7pt) node[above]{$\yc$};
\filldraw[fill=white,draw=black] (A8) circle (0.7pt) node[above right]{$\ye$};

\end{tikzpicture}
\caption{CAHP. Initial variables are at black vertices and there are four different parameters $\al,\bt,\gm,\rho$ that are assigned to edges, where opposite edges of a face have the same parameter.}
\label{fig:CAHP1}
\end{figure}
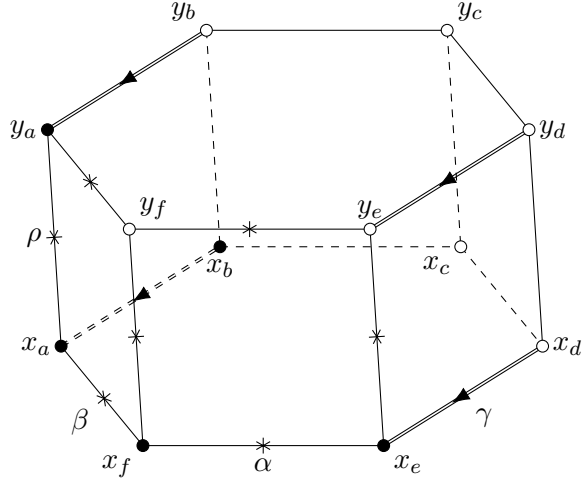

The initial variables are chosen as
\begin{equation}
\xa,\xb,\xc,\xe,\ya,
\end{equation}
as indicated by black vertices in Figure \ref{fig:CAHP1}.  For these initial conditions, the property of CAHP for the hexagonal prism of Figure \ref{fig:CAHP1} can be checked with the following steps.
\begin{enumerate}[(i)]
\item
The system of hex equations
\begin{equation}
\hexC{\xa}{\xb}{\xf}{\xd}{\xc}{\xe}{\alpha}{\beta}{\gamma}
\end{equation}
is used to uniquely solve for the two variables $\xf$ and $\xd$, and the quad equations
\begin{equation}
\begin{split}
\quadHu{\xa}{\ya}{\xb}{\yb}{\rho}{\gm}=0, \\
\quadQs{\ya}{\ye}{\xa}{\xe}{\bt}{\rho}=0,
\end{split}
\end{equation}
are used to uniquely solve for the two variables $\yb$ and $\ye$, respectively.
\item
The quad equations
\begin{equation}
\begin{split}
\quadQ{\yb}{\yf}{\xb}{\xf}{\al}{\rho}=0, \\
\quadQs{\ye}{\yc}{\xe}{\xc}{\al}{\rho}=0,
\end{split}
\end{equation}
are used to uniquely solve for the two variables $\yf$ and $\yc$, respectively.
\item
The system of hex equations 
\begin{equation}
\hexC{\ya}{\yb}{\yf}{\yd}{\yc}{\ye}{\alpha}{\beta}{\gamma}
\end{equation}
and the quad equations
\begin{equation}
\begin{split}
\quadQ{\yf}{\yd}{\xf}{\xd}{\bt}{\rho}=0, \\
\quadHu{\xc}{\yc}{\xd}{\yd}{\rho}{\gm}=0,
\end{split}
\end{equation}
must agree for the final variable $\yd$.
\end{enumerate}

Each of the combinations of equations listed in Table \ref{tab:typeCpolytope} satisfy the above notion of CAHP.

Figure \ref{fig:CAHP2} shows a different arrangement of equations for the hexagonal prism, where each of the quad equations are different, and the upper system of hex equations is rotated by $\pi$ (this rotation only changes the system of type-C hex equations \eqref{Chexdef} if $C$ and $\overline{C}$ are different).

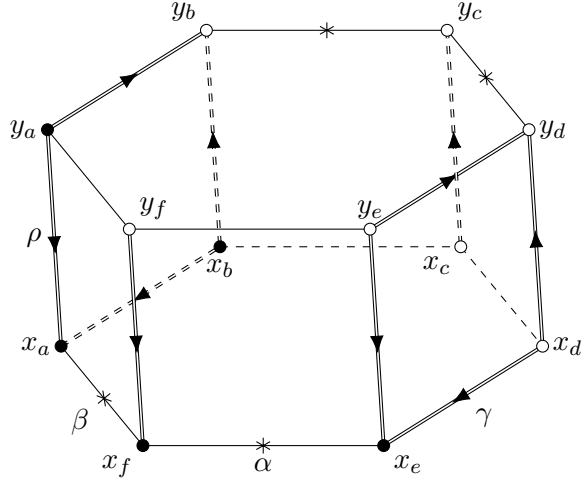
\begin{figure}[htb!]
\centering
\begin{tikzpicture}[line join=bevel,z=-5.5,scale=3.2]

\pgfmathsetmacro\tr{sqrt(3)/2}
\pgfmathsetmacro\sn{sin(-73)}
\pgfmathsetmacro\cs{cos(-73)}

\coordinate (A0) at (1,{-\sn*1/2},{\cs*1/2});
\coordinate (A1) at (1/2,{\cs*\tr+\sn/2},{\sn*\tr-\cs*1/2});
\coordinate (A2) at (1/2,{\cs*\tr-\sn/2},{\sn*\tr+\cs*1/2});
\coordinate (A3) at (-1/2,{\cs*\tr+\sn/2},{\sn*\tr-\cs*1/2});
\coordinate (A4) at (-1/2,{\cs*\tr-\sn/2},{\sn*\tr+\cs*1/2});
\coordinate (A5) at (-1,{\sn*1/2},{-\cs*1/2});
\coordinate (A6) at (-1,{-\sn*1/2},{\cs*1/2});
\coordinate (A7) at (-1/2,{-\cs*\tr+\sn/2},{-\sn*\tr-\cs*1/2});
\coordinate (A8) at (-1/2,{-\cs*\tr-\sn/2},{-\sn*\tr+\cs*1/2});
\coordinate (A9) at (1/2,{-\cs*\tr+\sn/2},{-\sn*\tr-\cs*1/2});
\coordinate (A10) at (1/2,{-\cs*\tr-\sn/2},{-\sn*\tr+\cs*1/2});
\coordinate (A11) at (1,{\sn*1/2},{-\cs*1/2});

\draw[double,->-=.6] (A6)--(A4);
\draw[-!-=.53] (A4)--(A2);
\draw[-!-=.53] (A2)--(A0);
\draw[double,-<-=.55] (A0)--(A10);
\draw (A10)--(A8);
\draw (A8)--(A6);

\draw[dashed,double,-<-=.55] (A5)--(A3);
\draw[dashed] (A3)--(A1);
\draw[dashed] (A1)--(A11);
\draw[double,->-=.6] (A11)--(A9) node[midway,below right=1pt]{$\gamma$};
\draw[-!-=.53] (A9)--(A7)node[midway,below=1pt]{$\alpha$};
\draw[-!-=.53] (A7)--(A5) node[midway,below left=1pt]{$\beta$};

\draw[double,->-=.6] (A6)--(A5) node[midway,left=1pt]{$\rho$};
\draw[dashed,double,-<-=.55] (A4)--(A3);
\draw[dashed,double,-<-=.55] (A2)--(A1);
\draw[double,-<-=.55] (A0)--(A11);
\draw[double,->-=.6] (A10)--(A9);
\draw[double,->-=.6] (A8)--(A7);

\filldraw[fill=black,draw=black] (A5) circle (0.7pt) node[left]{$\xa$};
\filldraw[fill=black,draw=black] (A3) circle (0.7pt) node[below=1pt]{$\xb$};
\filldraw[fill=white,draw=black] (A1) circle (0.7pt) node[below left]{$\xf$};
\filldraw[fill=white,draw=black] (A11) circle (0.7pt) node[right]{$\xd$};
\filldraw[fill=black,draw=black] (A9) circle (0.7pt) node[below right]{$\xc$};
\filldraw[fill=black,draw=black] (A7) circle (0.7pt) node[below left]{$\xe$};

\filldraw[fill=black,draw=black] (A6) circle (0.7pt) node[left]{$\ya$};
\filldraw[fill=white,draw=black] (A4) circle (0.7pt) node[above left]{$\yb$};
\filldraw[fill=white,draw=black] (A2) circle (0.7pt) node[above right]{$\yf$};
\filldraw[fill=white,draw=black] (A0) circle (0.7pt) node[right]{$\yd$};
\filldraw[fill=white,draw=black] (A10) circle (0.7pt) node[above]{$\yc$};
\filldraw[fill=white,draw=black] (A8) circle (0.7pt) node[above right]{$\ye$};

\end{tikzpicture}

\caption{CAHP for equations that are different from Figure \ref{fig:CAHP1}.}
\label{fig:CAHP2}
\end{figure}


The five initial variables are again chosen to be
\begin{equation}
\xa,\xb,\xc,\xe,\ya.
\end{equation}
as indicated by black vertices in Figure \ref{fig:CAHP2}.  For these initial conditions, the property of CAHP for the hexagonal prism of Figure \ref{fig:CAHP2} can be checked with the following steps.
\begin{enumerate}[(i)]
\item
The system of hex equations 
\begin{equation}
\hexC{\xa}{\xb}{\xf}{\xd}{\xc}{\xe}{\alpha}{\beta}{\gamma}
\end{equation}
is used to uniquely solve for the two variables $\xf$ and $\xd$, and the quad equations
\begin{equation}
\begin{split}
\quadH{\yb}{\ya}{\xb}{\xa}{\gm}{\rho}=0, \\
\quadHu{\xe}{\xa}{\ye}{\ya}{\bt}{\rho}=0,
\end{split}
\end{equation}
are used to uniquely solve for the two variables $\yb$ and $\ye$, respectively.
\item
The quad equations
\begin{equation}
\begin{split}
\quadHu{\yb}{\yf}{\xb}{\xf}{\al}{\rho}=0, \\
\quadHu{\xc}{\xe}{\yc}{\ye}{\al}{\rho}=0,
\end{split}
\end{equation}
are used to uniquely solve for the two variables $\yf$ and $\yc$, respectively.
\item
The system of hex equations 
\begin{equation}
\hexC{\yd}{\yc}{\ye}{\ya}{\yb}{\yf}{\alpha}{\beta}{\gamma}
\end{equation}
and the quad equations
\begin{equation}
\begin{split}
\quadHu{\yf}{\yd}{\xf}{\xd}{\bt}{\rho}=0, \\
\quadH{\yd}{\yc}{\xd}{\xc}{\gm}{\rho}=0,
\end{split}
\end{equation}
must agree for the final variable $\yd$.
\end{enumerate}

Each of the combinations of equations given in Table \ref{tab:typeCpolytope} satisfy the above notion of CAHP.

\subsection{Consistency-around-an-elongated-dodecahedron (CAED)}

The elongated dodecahedron is a polyhedron that has four hexagonal faces and eight quadrilateral faces, as shown in the diagram of Figure \ref{fig:CAED1}. The elongated dodecahedron is assigned an overdetermined system of thirty-two equations (four systems of hex equations and eight quad equations) for twelve unknown variables.  If the overdetermined system of equations on the elongated dodecahedron has a consistent solution, then the system of equations will be said to satisfy consistency-around-an-elongated-dodecahedron (CAED).

\begin{figure}[htb!]
\centering
\begin{tikzpicture}[line join=bevel,z=-5.5,scale=3.2]

\pgfmathsetmacro\tr{sqrt(3)/2}
\pgfmathsetmacro\sn{sin(13)}
\pgfmathsetmacro\cs{cos(13)}

\coordinate (A1) at ({-\tr},{-\cs*\tr+\sn/2},{-\sn*\tr-\cs/2});
\coordinate (A2) at ({-\tr},{-\cs*\tr-\sn/2},{-\sn*\tr+\cs/2});
\coordinate (A3) at ({-\tr},{\sn},{-\cs});
\coordinate (A4) at ({-\tr},{-\sn},{\cs});
\coordinate (A5) at ({-\tr},{\cs*\tr+\sn/2},{\sn*\tr-\cs/2});
\coordinate (A6) at ({-\tr},{\cs*\tr-\sn/2},{\sn*\tr+\cs/2});
\coordinate (A7) at (0,{-\cs*\tr+\sn},{-\sn*\tr-\cs});
\coordinate (A8) at (0,{-\cs*\tr-\sn},{-\sn*\tr+\cs});
\coordinate (A9) at (0,{3*\sn/2},{-3*\cs/2});
\coordinate (A10) at (0,{-3*\sn/2},{3*\cs/2});
\coordinate (A11) at (0,{\cs*\tr+\sn},{\sn*\tr-\cs});
\coordinate (A12) at (0,{\cs*\tr-\sn},{\sn*\tr+\cs});
\coordinate (A13) at ({\tr},{-\cs*\tr+\sn/2},{-\sn*\tr-\cs/2});
\coordinate (A14) at ({\tr},{-\cs*\tr-\sn/2},{-\sn*\tr+\cs/2});
\coordinate (A15) at ({\tr},{\sn},{-\cs});
\coordinate (A16) at ({\tr},{-\sn},{\cs});
\coordinate (A17) at ({\tr},{\cs*\tr+\sn/2},{\sn*\tr-\cs/2});
\coordinate (A18) at ({\tr},{\cs*\tr-\sn/2},{\sn*\tr+\cs/2});

\draw[double,-<-=.55] (A6)--(A5);
\draw (A5)--(A11);
\draw (A11)--(A17);
\draw[double,->-=.6] (A17)--(A18);
\draw[-!-=.53] (A18)--(A12);
\draw[-!-=.53] (A12)--(A6);

\draw[dashed,double,-<-=.55] (A2)--(A1);
\draw[dashed] (A1)--(A7);
\draw[dashed] (A7)--(A13);
\draw[double,->-=.6] (A13)--(A14) node[midway,right=1pt]{$\gamma$};
\draw[-!-=.53] (A14)--(A8) node[midway,below right=1pt]{$\alpha$};
\draw[-!-=.53] (A8)--(A2) node[midway,below left=1pt]{$\beta$};

\draw[-!-=.53] (A6)--(A4) node[midway,left=1pt]{$\rhob$}; 
\draw[-!-=.53] (A4)--(A2) node[midway,left=1pt]{$\rho$};
\draw[dashed] (A5)--(A3);
\draw[dashed] (A3)--(A1);
\draw[dashed] (A11)--(A9);
\draw[dashed] (A9)--(A7);
\draw (A17)--(A15);
\draw (A15)--(A13);
\draw[-!-=.53] (A18)--(A16);
\draw[-!-=.53] (A16)--(A14);
\draw[-!-=.53] (A12)--(A10);
\draw[-!-=.53] (A10)--(A8);

\draw[-!-=.53] (A4)--(A10);
\draw[-!-=.53] (A10)--(A16);
\draw[dashed] (A15)--(A9);
\draw[dashed] (A9)--(A3);

\filldraw[fill=black,draw=black] (A2) circle (0.7pt) node[left]{$\xa$};
\filldraw[fill=black,draw=black] (A1) circle (0.7pt) node[left]{$\xb$};
\filldraw[fill=black,draw=black] (A7) circle (0.7pt) node[above left]{$\xf$};
\filldraw[fill=white,draw=black] (A13) circle (0.7pt) node[right]{$\xd$};
\filldraw[fill=white,draw=black] (A14) circle (0.7pt) node[below right]{$\xc$};
\filldraw[fill=black,draw=black] (A8) circle (0.7pt) node[below left]{$\xe$};

\filldraw[fill=white,draw=black] (A6) circle (0.7pt) node[left]{$\ya$};
\filldraw[fill=white,draw=black] (A5) circle (0.7pt) node[above left]{$\yb$};
\filldraw[fill=white,draw=black] (A11) circle (0.7pt) node[above right]{$\yf$};
\filldraw[fill=white,draw=black] (A17) circle (0.7pt) node[right]{$\yd$};
\filldraw[fill=white,draw=black] (A18) circle (0.7pt) node[right]{$\yc$};
\filldraw[fill=white,draw=black] (A12) circle (0.7pt) node[below right]{$\ye$};

\filldraw[fill=black,draw=black] (A4) circle (0.7pt) node[left]{$\za$};
\filldraw[fill=black,draw=black] (A3) circle (0.7pt) node[left]{$\zb$};
\filldraw[fill=white,draw=black] (A9) circle (0.7pt) node[above left]{$\zf$};
\filldraw[fill=white,draw=black] (A15) circle (0.7pt) node[right]{$\zd$};
\filldraw[fill=white,draw=black] (A16) circle (0.7pt) node[right]{$\zc$};
\filldraw[fill=white,draw=black] (A10) circle (0.7pt) node[below right]{$\ze$};

\end{tikzpicture}

\caption{CAED.  Initial variables are at black vertices and there are five different parameters $\al,\bt,\gm,\rho,\rhob$ that are assigned to edges, where opposite edges of a face have the same parameter.}
\label{fig:CAED1}
\end{figure}

The six initial variables are chosen as
\begin{equation}
\xa,\xb,\xf,\xe,\za,\zb.
\end{equation}
as indicated by black vertices in Figure \ref{fig:CAED1}.  For these initial conditions, the property of CAED for the elongated dodecahedron of Figure \ref{fig:CAED1} can be checked with the following steps.
\begin{enumerate}[(i)]
\item
The system of hex equations 
\begin{equation}
\hexC{\xa}{\xb}{\xf}{\xd}{\xc}{\xe}{\alpha}{\beta}{\gamma}
\end{equation}
is used to uniquely solve for the two variables $\xd$ and $\xc$, the system of hex equations
\begin{equation}
\hexC{\ya}{\yb}{\zb}{\xb}{\xa}{\za}{\rho}{\rhob}{\gm}
\end{equation}
is used to uniquely solve for the two variables $\ya$ and $\yb$, and the two quad equations
\begin{equation}
\begin{split}
\quadQ{\zb}{\zf}{\xb}{\xf}{\al}{\rhob}=0, \\
\quadQs{\za}{\ze}{\xa}{\xe}{\bt}{\rho}=0,
\end{split}
\end{equation}
are used to uniquely solve for the two variables $\zf$ and $\ze$, respectively.
\item
The quad equations
\begin{equation}
\begin{split}
\quadQ{\zf}{\zd}{\xf}{\xd}{\bt}{\rhob}=0, \\
\quadQs{\ze}{\zc}{\xe}{\xc}{\al}{\rho}=0,
\end{split}
\end{equation}
are used to uniquely solve for the two variables $\zd$ and $\zc$, and the quad equations
\begin{equation}
\begin{split}
\quadQ{\yb}{\yf}{\zb}{\zf}{\al}{\rho}=0, \\
\quadQs{\ya}{\ye}{\za}{\ze}{\bt}{\rhob}=0,
\end{split}
\end{equation}
are used to uniquely solve for the two variables $\yf$ and $\ye$, respectively.
\item
The two systems of hex equations 
\begin{equation}
\begin{split}
\hexC{\ya}{\yb}{\yf}{\yd}{\yc}{\ye}{\alpha}{\beta}{\gamma}, \\
\hexC{\yc}{\yd}{\zd}{\xd}{\xc}{\zc}{\rho}{\rhob}{\gm},
\end{split}
\end{equation}
and the two quad equations 
\begin{equation}
\begin{split}
\quadQ{\yf}{\yd}{\zf}{\zd}{\bt}{\rho}=0, \\
\quadQs{\ye}{\yc}{\ze}{\zc}{\al}{\rhob}=0,
\end{split}
\end{equation}
must agree for the final two variables $\yd$ and $\yc$.
\end{enumerate}

Each of the combinations of equations given in Table \ref{tab:typeCpolytope} satisfy the above notion of CAED.

Figure \ref{fig:CAED2} shows a different arrangement of equations on the elongated dodecahedron, where the quad equations are different, and the systems of hex equations are rotated.

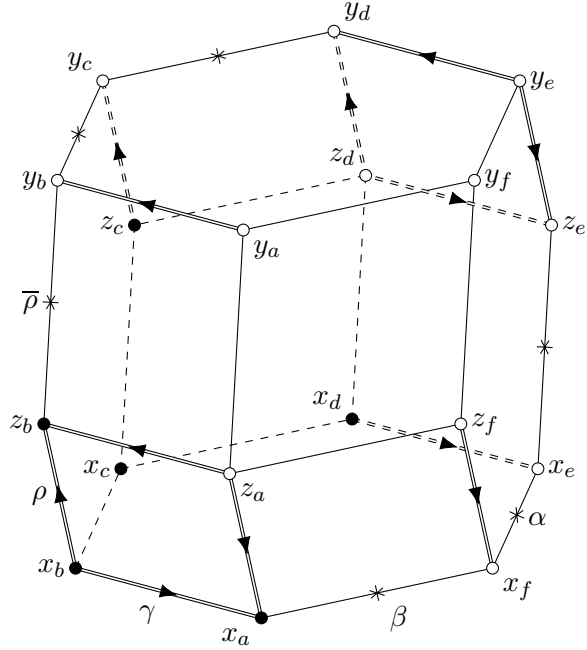
\begin{figure}[htb!]
\centering
\begin{tikzpicture}[line join=bevel,z=-5.5,scale=3.2]

\pgfmathsetmacro\tr{sqrt(3)/2}
\pgfmathsetmacro\sn{sin(13)}
\pgfmathsetmacro\cs{cos(13)}

\coordinate (A1) at ({-\tr},{-\cs*\tr+\sn/2},{-\sn*\tr-\cs/2});
\coordinate (A2) at ({-\tr},{-\cs*\tr-\sn/2},{-\sn*\tr+\cs/2});
\coordinate (A3) at ({-\tr},{\sn},{-\cs});
\coordinate (A4) at ({-\tr},{-\sn},{\cs});
\coordinate (A5) at ({-\tr},{\cs*\tr+\sn/2},{\sn*\tr-\cs/2});
\coordinate (A6) at ({-\tr},{\cs*\tr-\sn/2},{\sn*\tr+\cs/2});
\coordinate (A7) at (0,{-\cs*\tr+\sn},{-\sn*\tr-\cs});
\coordinate (A8) at (0,{-\cs*\tr-\sn},{-\sn*\tr+\cs});
\coordinate (A9) at (0,{3*\sn/2},{-3*\cs/2});
\coordinate (A10) at (0,{-3*\sn/2},{3*\cs/2});
\coordinate (A11) at (0,{\cs*\tr+\sn},{\sn*\tr-\cs});
\coordinate (A12) at (0,{\cs*\tr-\sn},{\sn*\tr+\cs});
\coordinate (A13) at ({\tr},{-\cs*\tr+\sn/2},{-\sn*\tr-\cs/2});
\coordinate (A14) at ({\tr},{-\cs*\tr-\sn/2},{-\sn*\tr+\cs/2});
\coordinate (A15) at ({\tr},{\sn},{-\cs});
\coordinate (A16) at ({\tr},{-\sn},{\cs});
\coordinate (A17) at ({\tr},{\cs*\tr+\sn/2},{\sn*\tr-\cs/2});
\coordinate (A18) at ({\tr},{\cs*\tr-\sn/2},{\sn*\tr+\cs/2});

\draw[double,->-=.6] (A12)--(A6);
\draw[-!-=.53] (A6)--(A5);
\draw[-!-=.53] (A5)--(A11);
\draw[double,-<-=.55] (A11)--(A17);
\draw (A17)--(A18);
\draw (A18)--(A12);

\draw[double,-<-=.55] (A8)--(A2) node[midway,below left=1pt]{$\gamma$};
\draw[dashed] (A2)--(A1);
\draw[dashed] (A1)--(A7);
\draw[dashed,double,->-=.6] (A7)--(A13);
\draw[-!-=.53] (A13)--(A14) node[midway,right=1pt]{$\alpha$};
\draw[-!-=.53] (A14)--(A8) node[midway,below right=1pt]{$\beta$};

\draw[-!-=.53] (A6)--(A4) node[midway,left=1pt]{$\rhob$}; 
\draw[double,-<-=.55] (A4)--(A2) node[midway,left=1pt]{$\rho$};
\draw[dashed,double,-<-=.55] (A5)--(A3);
\draw[dashed] (A3)--(A1);
\draw[dashed,double,-<-=.55] (A11)--(A9);
\draw[dashed] (A9)--(A7);
\draw[double,->-=.6] (A17)--(A15);
\draw[-!-=.53] (A15)--(A13);
\draw (A18)--(A16);
\draw[double,->-=.6] (A16)--(A14);
\draw (A12)--(A10);
\draw[double,->-=.6] (A10)--(A8);

\draw[double,-<-=.55] (A4)--(A10);
\draw (A10)--(A16);
\draw[dashed,double,-<-=.55] (A15)--(A9);
\draw[dashed] (A9)--(A3);

\filldraw[fill=black,draw=black] (A2) circle (0.7pt) node[left]{$\xb$};
\filldraw[fill=black,draw=black] (A1) circle (0.7pt) node[left]{$\xf$};
\filldraw[fill=black,draw=black] (A7) circle (0.7pt) node[above left]{$\xd$};
\filldraw[fill=white,draw=black] (A13) circle (0.7pt) node[right]{$\xc$};
\filldraw[fill=white,draw=black] (A14) circle (0.7pt) node[below right]{$\xe$};
\filldraw[fill=black,draw=black] (A8) circle (0.7pt) node[below left]{$\xa$};

\filldraw[fill=white,draw=black] (A6) circle (0.7pt) node[left]{$\yb$};
\filldraw[fill=white,draw=black] (A5) circle (0.7pt) node[above left]{$\yf$};
\filldraw[fill=white,draw=black] (A11) circle (0.7pt) node[above right]{$\yd$};
\filldraw[fill=white,draw=black] (A17) circle (0.7pt) node[right]{$\yc$};
\filldraw[fill=white,draw=black] (A18) circle (0.7pt) node[right]{$\ye$};
\filldraw[fill=white,draw=black] (A12) circle (0.7pt) node[below right]{$\ya$};

\filldraw[fill=black,draw=black] (A4) circle (0.7pt) node[left]{$\zb$};
\filldraw[fill=black,draw=black] (A3) circle (0.7pt) node[left]{$\zf$};
\filldraw[fill=white,draw=black] (A9) circle (0.7pt) node[above left]{$\zd$};
\filldraw[fill=white,draw=black] (A15) circle (0.7pt) node[right]{$\zc$};
\filldraw[fill=white,draw=black] (A16) circle (0.7pt) node[right]{$\ze$};
\filldraw[fill=white,draw=black] (A10) circle (0.7pt) node[below right]{$\za$};

\end{tikzpicture}

\caption{CAED for equations that are different from Figure \ref{fig:CAED1}.}
\label{fig:CAED2}
\end{figure}

Because of a different variable labelling, the six initial variables are now
\begin{equation}
\xa,\xb,\xf,\xd,\zb,\zf.
\end{equation}
as indicated by black vertices in Figure \ref{fig:CAED2}.  For these initial conditions, the property of CAED for the elongated dodecahedron of Figure \ref{fig:CAED2} can be checked with the following steps.
\begin{enumerate}[(i)]
\item
The system of hex equations 
\begin{equation}
\hexC{\xa}{\xb}{\xf}{\xd}{\xc}{\xe}{\alpha}{\beta}{\gamma}
\end{equation}
is used to uniquely solve for the two variables $\xc$ and $\xe$, the system of hex equations
\begin{equation}
\hexC{\yf}{\zf}{\xf}{\xb}{\zb}{\yb}{\rhob}{\al}{\rho}
\end{equation}
is used to uniquely solve for the two variables $\yb$ and $\yf$, and the two quad equations
\begin{equation}
\begin{split}
\quadQ{\zf}{\zd}{\xf}{\xd}{\bt}{\rhob}=0, \\
\quadH{\zb}{\za}{\xb}{\xa}{\gm}{\rho}=0,
\end{split}
\end{equation}
are used to uniquely solve for the two variables $\zd$ and $\za$, respectively.
\item
The quad equations
\begin{equation}
\begin{split}
\quadHu{\zc}{\xc}{\zd}{\xd}{\rhob}{\gm}=0, \\
\quadHu{\xe}{\xa}{\ze}{\za}{\bt}{\rho}=0,
\end{split}
\end{equation}
are used to uniquely solve for the two variables $\zc$ and $\ze$, and the quad equations
\begin{equation}
\begin{split}
\quadHu{\yf}{\yd}{\zf}{\zd}{\bt}{\rho}=0, \\
\quadHu{\zb}{\yb}{\za}{\ya}{\rhob}{\gm}=0,
\end{split}
\end{equation}
are used to uniquely solve for the two variables $\yd$ and $\ya$, respectively.
\item
The two systems of hex equations 
\begin{equation}
\begin{split}
\hexC{\yd}{\yc}{\ye}{\ya}{\yb}{\yf}{\alpha}{\beta}{\gamma}, \\ 
\hexC{\yc}{\zc}{\xc}{\xe}{\ze}{\ye}{\rhob}{\al}{\rho},
\end{split}
\end{equation}
and the two quad equations 
\begin{equation}
\begin{split}
\quadH{\yd}{\yc}{\zd}{\zc}{\gm}{\rho}=0, \\
\quadQ{\ya}{\ye}{\za}{\ze}{\bt}{\rhob}=0,
\end{split}
\end{equation}
must agree for the final two variables $\yc$ and $\ye$.
\end{enumerate}

Each of the combinations of equations given in Table \ref{tab:typeCpolytope} satisfy the above notion of CAED.

\subsubsection{Consistency-around-a-truncated-octahedron (CATO)}

The truncated octahedron is a polyhedron that has eight hexagonal faces and six quadrilateral faces, as shown in the diagram of Figure \ref{fig:CATO}.  The truncated octahedron is assigned an overdetermined system of fifty-four equations (eight systems of hex equations and six quad equations) for seventeen unknown variables.  If the overdetermined system of equations on the truncated octahedron has a consistent solution, then the system of equations will be said to satisfy consistency-around-an-truncated-octahedron (CATO)\footnote{In terms of permutahedra, an alternative name is consistency-around-a-$P_3$ (CA$P_3$). Similarly, the notion of CAH given by Theorem \ref{thm:CAH} also has the alternative name of CA$P_2$.}.


\begin{figure}[htb!]
\centering
\begin{tikzpicture}[line join=bevel,z=-5.5,scale=3.0]

\pgfmathsetmacro\tw{sqrt(2)}
\pgfmathsetmacro\sn{sin(44)}
\pgfmathsetmacro\cs{cos(44)}

\coordinate (A1) at ({-3/2},{-\sn/2},{-\sn/2});
\coordinate (A2) at ({-3/2},{\sn/2},{\sn/2});
\coordinate (A3) at ({-1},{-\sn+\sn/\tw},{-\sn-\sn/\tw});
\coordinate (A4) at ({-1},{-\sn-\sn/\tw},{-\sn+\sn/\tw});
\coordinate (A5) at ({-1},{\sn+\sn/\tw},{\sn-\sn/\tw});
\coordinate (A6) at ({-1},{\sn-\sn/\tw},{\sn+\sn/\tw});
\coordinate (A7) at ({-1/2},{-3*\sn/2},{-3*\sn/2});
\coordinate (A8) at ({-1/2},{-\sn/2+\sn*\tw},{-\sn-\sn*\tw});
\coordinate (A9) at ({-1/2},{-\sn/2-\sn*\tw},{-\sn+\sn*\tw});
\coordinate (A10) at ({-1/2},{\sn/2+\sn*\tw},{\sn-\sn*\tw});
\coordinate (A11) at ({-1/2},{\sn/2-\sn*\tw},{\sn+\sn*\tw});
\coordinate (A12) at ({-1/2},{3*\sn/2},{3*\sn/2});
\coordinate (A13) at ({1/2},{-3*\sn/2},{-3*\sn/2});
\coordinate (A14) at ({1/2},{-\sn/2+\sn*\tw},{-\sn-\sn*\tw});
\coordinate (A15) at ({1/2},{-\sn/2-\sn*\tw},{-\sn+\sn*\tw});
\coordinate (A16) at ({1/2},{\sn/2+\sn*\tw},{\sn-\sn*\tw});
\coordinate (A17) at ({1/2},{\sn/2-\sn*\tw},{\sn+\sn*\tw});
\coordinate (A18) at ({1/2},{3*\sn/2},{3*\sn/2});
\coordinate (A19) at ({1},{-\sn+\sn/\tw},{-\sn-\sn/\tw});
\coordinate (A20) at ({1},{-\sn-\sn/\tw},{-\sn+\sn/\tw});
\coordinate (A21) at ({1},{\sn+\sn/\tw},{\sn-\sn/\tw});
\coordinate (A22) at ({1},{\sn-\sn/\tw},{\sn+\sn/\tw});
\coordinate (A23) at ({3/2},{-\sn/2},{-\sn/2});
\coordinate (A24) at ({3/2},{\sn/2},{\sn/2});

\draw[double,->-=.6] (A5)--(A10);
\draw[-!-=.53] (A10)--(A16);
\draw[-!-=.53] (A16)--(A21);
\draw[double,-<-=.55] (A21)--(A18);
\draw (A18)--(A12);
\draw (A12)--(A5);

\draw[dashed,double,-<-=.55] (A4)--(A7);
\draw[dashed] (A7)--(A13);
\draw[dashed] (A13)--(A20);
\draw[double,->-=.6] (A20)--(A15) node[midway,below right=1pt]{$\gamma$};
\draw[-!-=.53] (A15)--(A9) node[midway,below=1pt]{$\alpha$};
\draw[-!-=.53] (A9)--(A4) node[midway,below left=1pt]{$\beta$};

\draw (A5)--(A2) node[midway,left=1pt]{$\rhot$}; \draw (A2)--(A1) node[midway,left=1pt]{$\rhob$}; \draw[double,->-=.6] (A1)--(A4) node[midway,left=1pt]{$\rho$};
\draw[dashed,-!-=.53] (A10)--(A8);
\draw[dashed,-!-=.53] (A8)--(A3);
\draw[dashed,double,-<-=.55] (A3)--(A7);
\draw[dashed,-!-=.53] (A16)--(A14);
\draw[dashed,double,-<-=.55] (A14)--(A19);
\draw[dashed] (A19)--(A13);
\draw[double,-<-=.55] (A21)--(A24);
\draw (A24)--(A23);
\draw (A23)--(A20);
\draw[double,->-=.6] (A18)--(A22);
\draw[-!-=.53] (A22)--(A17);
\draw[-!-=.53] (A17)--(A15);
\draw (A12)--(A6);
\draw[double,->-=.6] (A6)--(A11);
\draw[-!-=.53] (A11)--(A9);

\draw[dashed,double,->-=.6] (A1)--(A3);
\draw[dashed,-!-=.53] (A8)--(A14);
\draw[dashed] (A19)--(A23);
\draw[double,->-=.6] (A24)--(A22);
\draw[-!-=.53] (A17)--(A11);
\draw (A6)--(A2);

\filldraw[fill=black,draw=black] (A4) circle (0.75pt) node[left]{$\xa$};
\filldraw[fill=black,draw=black] (A7) circle (0.75pt) node[above right]{$\xb$};
\filldraw[fill=white,draw=black]  (A13) circle (0.75pt) node[above left]{$\xf$};
\filldraw[fill=white,draw=black] (A20) circle (0.75pt) node[below right]{$\xd$};
\filldraw[fill=black,draw=black] (A15) circle (0.75pt) node[below right]{$\xc$};
\filldraw[fill=black,draw=black] (A9) circle (0.75pt) node[below left]{$\xe$};

\filldraw[fill=white,draw=black] (A5) circle (0.75pt) node[above left]{$\ya$};
\filldraw[fill=white,draw=black] (A10) circle (0.75pt) node[above left]{$\yb$};
\filldraw[fill=white,draw=black] (A16) circle (0.75pt) node[above right]{$\yf$};
\filldraw[fill=white,draw=black] (A21) circle (0.75pt) node[above right]{$\yd$};
\filldraw[fill=white,draw=black] (A18) circle (0.75pt) node[below left]{$\yc$};
\filldraw[fill=white,draw=black] (A12) circle (0.75pt) node[below right]{$\ye$};

\filldraw[fill=black,draw=black] (A1) circle (0.75pt) node[left]{$\za$};
\filldraw[fill=white,draw=black] (A3) circle (0.75pt) node[right]{$\zb$};
\filldraw[fill=white,draw=black] (A19) circle (0.75pt) node[below=4pt]{$\zf$};
\filldraw[fill=black,draw=black] (A23) circle (0.75pt) node[right]{$\zd$};
\filldraw[fill=white,draw=black] (A17) circle (0.75pt) node[right]{$\zc$};
\filldraw[fill=black,draw=black] (A11) circle (0.75pt) node[above right]{$\ze$};

\filldraw[fill=white,draw=black] (A2) circle (0.75pt) node[left]{$\wa$};
\filldraw[fill=white,draw=black] (A8) circle (0.75pt) node[above right]{$\wb$};
\filldraw[fill=white,draw=black] (A14) circle (0.75pt) node[below left]{$\wf$};
\filldraw[fill=white,draw=black] (A24) circle (0.75pt) node[right]{$\wdd$};
\filldraw[fill=white,draw=black] (A22) circle (0.75pt) node[below right]{$\wc$};
\filldraw[fill=white,draw=black] (A6) circle (0.75pt) node[above=4pt]{$\we$};

\end{tikzpicture}

\caption{CATO or CA$P_3$. Initial variables are at black vertices and there are six different parameters $\al,\bt,\gm,\rho,\rhob,\rhot$ that are assigned to edges, where opposite edges of a face have the same parameter.}
\label{fig:CATO}
\end{figure}

The seven initial variables are chosen as
\begin{equation}
\xa,\xb,\xc,\xe,\za,\zd,\ze,
\end{equation}
as indicated by black vertices in Figure \ref{fig:CATO}.  For these initial conditions, the property of CATO for the truncated octahedron of Figure \ref{fig:CATO} can be checked with the following steps.
\begin{enumerate}[(i)]
\item
The system of hex equations 
\begin{equation}
\hexC{\xa}{\xb}{\xf}{\xd}{\xc}{\xe}{\alpha}{\beta}{\gamma}
\end{equation}
is used to uniquely solve for the two variables $\xf$ and $\xd$, the system of hex equations 
\begin{equation}
\hexC{\xa}{\za}{\wa}{\we}{\ze}{\xe}{\rhob}{\bt}{\rho}
\end{equation}
is used to uniquely solve for the two variables $\wa$ and $\we$, and the two quad equations
\begin{equation}
\begin{split}
\quadH{\zb}{\za}{\xb}{\xa}{\gm}{\rho}=0, \\
\quadQs{\ze}{\zc}{\xe}{\xc}{\al}{\rhob}=0,
\end{split}
\end{equation}
are used to uniquely solve for the two variables $\zb$ and $\zc$, respectively.
\item
The system of hex equations 
\begin{equation}
\hexC{\wc}{\wdd}{\zd}{\xd}{\xc}{\zc}{\rhob}{\rhot}{\gm}
\end{equation}
is used to uniquely solve for the two variables $\wdd$ and $\wc$, and the quad equation
\begin{equation}
\begin{split}
\quadQ{\zf}{\zd}{\xf}{\xd}{\bt}{\rhot}=0,
\end{split}
\end{equation}
is used to uniquely solve for the variable $\zf$.
\item
The system of hex equations 
\begin{equation}
\hexC{\ze}{\we}{\ye}{\yc}{\wc}{\zc}{\rhot}{\al}{\rho}
\end{equation}
is used to uniquely solve for the two variables $\yc$ and $\ye$, and the system of hex equations 
\begin{equation}
\hexC{\xb}{\zb}{\wb}{\wf}{\zf}{\xf}{\rhot}{\al}{\rho}
\end{equation}
is used to uniquely solve for the two variables $\wb$ and $\wf$.
\item
The system of hex equations 
\begin{equation}
\hexC{\zb}{\za}{\wa}{\ya}{\yb}{\wb}{\rhob}{\rhot}{\gm}
\end{equation}
is used to uniquely solve for the two variables $\ya$ and $\yb$, and the system of hex equations 
\begin{equation}
\hexC{\yd}{\wdd}{\zd}{\zf}{\wf}{\yf}{\rhob}{\bt}{\rho}
\end{equation}
is used to uniquely solve for the two variables $\yf$ and $\yd$.
\item
The system of hex equations 
\begin{equation}
\hexC{\yd}{\yc}{\ye}{\ya}{\yb}{\yf}{\alpha}{\beta}{\gamma}
\end{equation}
and the three quad equations 
\begin{equation}
\begin{split}
\quadH{\yd}{\yc}{\wdd}{\wc}{\gm}{\rho}=0, \\
\quadQs{\yb}{\yf}{\wb}{\wf}{\al}{\rhob}=0, \\
\quadQ{\ya}{\ye}{\wa}{\we}{\bt}{\rhot}=0,
\end{split}
\end{equation}
must each be automatically satisfied by the variables determined in the previous steps.
\end{enumerate}

Each of the combinations of equations given in Table \ref{tab:typeCpolytope} satisfy the above notion of CATO.

\subsubsection{Consistency-around-a-6-6-duoprism (CA66D)}

The 6-6 duoprism (also known as the hexagonal duoprism) is a 4-polytope that has twelve hexagonal faces and thirty-six quadrilateral faces and twelve hexagonal prism cells. A Schlegel diagram of the 6-6 duoprism is shown in Figure \ref{fig:CA66D}.

\begin{figure}[htb!]
\centering
\begin{tikzpicture}[line join=bevel,z=-5.5,scale=5.5]

\pgfmathsetmacro\tr{sqrt(3)/2}
\pgfmathsetmacro\sn{sin(-90)}
\pgfmathsetmacro\cs{cos(-90)}

\pgfmathsetmacro\fc{1/4}

\pgfmathsetmacro\fd{1/2}
\pgfmathsetmacro\sh{22/100}
\pgfmathsetmacro\snn{sin(-87)}
\pgfmathsetmacro\css{cos(-87)}

\coordinate (A0) at (1,{-\sn*1/2},{\cs*1/2});
\coordinate (A1) at (1/2,{\cs*\tr+\sn/2},{\sn*\tr-\cs*1/2});
\coordinate (A2) at (1/2,{\cs*\tr-\sn/2},{\sn*\tr+\cs*1/2});
\coordinate (A3) at (-1/2,{\cs*\tr+\sn/2},{\sn*\tr-\cs*1/2});
\coordinate (A4) at (-1/2,{\cs*\tr-\sn/2},{\sn*\tr+\cs*1/2});
\coordinate (A5) at (-1,{\sn*1/2},{-\cs*1/2});
\coordinate (A6) at (-1,{-\sn*1/2},{\cs*1/2});
\coordinate (A7) at (-1/2,{-\cs*\tr+\sn/2},{-\sn*\tr-\cs*1/2});
\coordinate (A8) at (-1/2,{-\cs*\tr-\sn/2},{-\sn*\tr+\cs*1/2});
\coordinate (A9) at (1/2,{-\cs*\tr+\sn/2},{-\sn*\tr-\cs*1/2});
\coordinate (A10) at (1/2,{-\cs*\tr-\sn/2},{-\sn*\tr+\cs*1/2});
\coordinate (A11) at (1,{\sn*1/2},{-\cs*1/2});

\coordinate (B0) at (\fc,{-\sn*\fc/2},{\cs*\fc/2});
\coordinate (B1) at (\fc/2,{\cs*\tr*\fc+\fc*\sn/2},{\sn*\tr*\fc-\cs*\fc/2});
\coordinate (B2) at (\fc/2,{\cs*\tr*\fc-\sn*\fc/2},{\sn*\tr*\fc+\cs*\fc/2});
\coordinate (B3) at (-\fc/2,{\cs*\tr*\fc+\sn*\fc/2},{\sn*\tr*\fc-\cs*\fc/2});
\coordinate (B4) at (-\fc/2,{\cs*\tr*\fc-\sn*\fc/2},{\sn*\tr*\fc+\cs*\fc/2});
\coordinate (B5) at (-\fc,{\sn*\fc/2},{-\cs*\fc/2});
\coordinate (B6) at (-\fc,{-\sn*\fc/2},{\cs*\fc/2});
\coordinate (B7) at (-\fc/2,{-\cs*\tr*\fc+\sn*\fc/2},{-\sn*\tr*\fc-\cs*\fc/2});
\coordinate (B8) at (-\fc/2,{-\cs*\tr*\fc-\sn*\fc/2},{-\sn*\tr*\fc+\cs*\fc/2});
\coordinate (B9) at (\fc/2,{-\cs*\tr*\fc+\sn*\fc/2},{-\sn*\tr*\fc-\cs*\fc/2});
\coordinate (B10) at (\fc/2,{-\cs*\tr*\fc-\sn*\fc/2},{-\sn*\tr*\fc+\cs*\fc/2});
\coordinate (B11) at (\fc,{\sn*\fc/2},{-\cs*\fc/2});

\coordinate (C0) at (\fd,{-\snn*\fd/2+\sh},{\css*\fd/2});
\coordinate (C1) at (\fd/2,{\css*\tr*\fd+\fd*\snn/2-\sh},{\snn*\tr*\fd-\css*\fd/2});
\coordinate (C2) at (\fd/2,{\css*\tr*\fd-\snn*\fd/2+\sh},{\snn*\tr*\fd+\css*\fd/2});
\coordinate (C3) at (-\fd/2,{\css*\tr*\fd+\snn*\fd/2-\sh},{\snn*\tr*\fd-\css*\fd/2});
\coordinate (C4) at (-\fd/2,{\css*\tr*\fd-\snn*\fd/2+\sh},{\snn*\tr*\fd+\css*\fd/2});
\coordinate (C5) at (-\fd,{\snn*\fd/2-\sh},{-\css*\fd/2});
\coordinate (C6) at (-\fd,{-\snn*\fd/2+\sh},{\css*\fd/2});
\coordinate (C7) at (-\fd/2,{-\css*\tr*\fd+\snn*\fd/2-\sh},{-\snn*\tr*\fd-\css*\fd/2});
\coordinate (C8) at (-\fd/2,{-\css*\tr*\fd-\snn*\fd/2+\sh},{-\snn*\tr*\fd+\css*\fd/2});
\coordinate (C9) at (\fd/2,{-\css*\tr*\fd+\snn*\fd/2-\sh},{-\snn*\tr*\fd-\css*\fd/2});
\coordinate (C10) at (\fd/2,{-\css*\tr*\fd-\snn*\fd/2+\sh},{-\snn*\tr*\fd+\css*\fd/2});
\coordinate (C11) at (\fd,{\snn*\fd/2-\sh},{-\css*\fd/2});

\draw(A0)--(C0)--(B0);
\draw(A2)--(C2)--(B2);
\draw(A4)--(C4)--(B4);
\draw(A6)--(C6)--(B6);
\draw(A8)--(C8)--(B8);
\draw(A10)--(C10)--(B10);

\draw(A1)--(C1)--(B1);
\draw(A3)--(C3)--(B3);
\draw(A5)--(C5);\draw(C5)--(B5) node[midway,above left]{$\rhot$};
\draw(A7)--(C7) node[midway,above left]{$\rhob$};\draw(C7)--(B7);
\draw(A9)--(C9)--(B9);
\draw(A11)--(C11)--(B11);

\draw (C6)--(C4);
\draw (C4)--(C2);
\draw (C2)--(C0);
\draw (C0)--(C10);
\draw (C10)--(C8);
\draw (C8)--(C6);

\draw (C5)--(C3);
\draw (C3)--(C1);
\draw (C1)--(C11);
\draw (C11)--(C9);
\draw (C9)--(C7);
\draw (C7)--(C5);

\draw (B6)--(B4);
\draw (B4)--(B2);
\draw (B2)--(B0);
\draw (B0)--(B10);
\draw (B10)--(B8);
\draw (B8)--(B6);

\draw (B5)--(B3);
\draw (B3)--(B1);
\draw (B1)--(B11);
\draw (B11)--(B9);
\draw (B9)--(B7);
\draw (B7)--(B5);

\draw (B6)--(B5);
\draw (B4)--(B3);
\draw (B2)--(B1);
\draw (B0)--(B11);
\draw (B10)--(B9);
\draw (B8)--(B7);

\draw (A6)--(A4);
\draw (A4)--(A2);
\draw (A2)--(A0);
\draw (A0)--(A10);
\draw (A10)--(A8);
\draw (A8)--(A6);

\draw (A5)--(A3);
\draw (A3)--(A1);
\draw (A1)--(A11);
\draw (A11)--(A9) node[midway,below right]{$\gm$};
\draw (A9)--(A7) node[midway,below]{$\al$};
\draw (A7)--(A5) node[midway,below left]{$\bt$};

\draw (A6)--(A5) node[midway,left]{$\rho$};
\draw (A4)--(A3);
\draw (A2)--(A1);
\draw (A0)--(A11);
\draw (A10)--(A9);
\draw (A8)--(A7);


\fill (B5) circle (0.4pt);
\fill (B3) circle (0.4pt);
\fill (B9) circle (0.4pt);
\fill (B7) circle (0.4pt);
\fill (B6) circle (0.4pt);
\fill (C6) circle (0.4pt);
\fill (A6) circle (0.4pt);

\end{tikzpicture}

\caption{CA66D. Initial variables are at black vertices and there are six different parameters $\al,\bt,\gm,\rho,\rhob,\rhot$ that are assigned to edges, where opposite edges of a face have the same parameter.}
\label{fig:CA66D}
\end{figure}

The 6-6 duoprism is assigned an overdetermined system of 108 equations (twelve systems of hex equations and thirty-six quad equations) for twenty-nine unknown variables.  If the overdetermined system of equations on the 6-6 duoprism has a consistent solution, then the system of equations will be said to satisfy consistency-around-a-6-6-duoprism (CA66D).

For simplicity, CA66D will only be considered in terms of systems of type-A hex equations in combination with type-Q ABS equations. First, denote by $\prsm{\bbx}{\bby}{\al}{\bt}{\gm}{\rho}$ the system of equations on the hexagonal prism of Figure \ref{fig:hexprismexample}, defined by
\begin{equation}\label{hexprism66}
\begin{alignedat}{2}
(PX.a)&&\qquad\hexA{\xa}{\xb}{\xf}{\xd}{\xc}{\xe}{\al}{\bt}{\gm}, \\
(PX.b)&&\qquad\hexA{\ya}{\yb}{\yf}{\yd}{\yc}{\ye}{\al}{\bt}{\gm}, \\
(PQ.a)&&\qquad\quadQ{\ya}{\yb}{\xa}{\xb}{\gm}{\rho}=0, \\
(PQ.b)&&\qquad\quadQ{\yb}{\yf}{\xb}{\xf}{\al}{\rho}=0, \\
(PQ.c)&&\qquad\quadQ{\yf}{\yd}{\xf}{\xd}{\bt}{\rho}=0, \\
(PQ.d)&&\qquad\quadQ{\yc}{\yd}{\xc}{\xd}{\gm}{\rho}=0, \\
(PQ.e)&&\qquad\quadQ{\ye}{\yc}{\xe}{\xc}{\al}{\rho}=0, \\
(PQ.f)&&\qquad\quadQ{\ya}{\ye}{\xa}{\xe}{\bt}{\rho}=0, 
\end{alignedat}
\end{equation}
where $\bm{A}=0$ denotes a system of type-A hex equations \eqref{Ahexdef} and $Q=0$ is a type-Q ABS equation. The equations $\bigl((PX.a),(PX.b),(PQ.a),\ldots,(PQ.f)\bigr)$ of $\bm{P}$ each depend on different combinations of the four parameters $\al,\bt,\gm,\rho$, and on different subsets of the twelve variables
\begin{equation}\label{xyvars}
\bbx=(\xa,\xb,\xf,\xd,\xc,\xe),\quad \bby=(\ya,\yb,\yf,\yd,\yc,\ye).
\end{equation}
The twelve hexagonal prism cells of a 6-6-duoprism are respectively assigned the equations
\begin{equation}\label{66equations}
\begin{split}
\prsm{\bbx}{\bby}{\al}{\bt}{\gm}{\rho},&\qquad
\prsm{\bm{X}_a}{\bm{X}_f}{\rhot}{\rhob}{\rho}{\bt}, \\
\prsm{\bby}{\bbz}{\al}{\bt}{\gm}{\rhob},&\qquad
\prsm{\bm{X}_f}{\bm{X}_e}{\rhot}{\rhob}{\rho}{\al}, \\
\prsm{\bbz}{\bbw}{\al}{\bt}{\gm}{\rhot},&\qquad
\prsm{\bm{X}_e}{\bm{X}_d}{\rhot}{\rhob}{\rho}{\gm}, \\
\prsm{\bbw}{\bbv}{\al}{\bt}{\gm}{\rho},&\qquad
\prsm{\bm{X}_d}{\bm{X}_c}{\rhot}{\rhob}{\rho}{\bt}, \\
\prsm{\bbv}{\bbu}{\al}{\bt}{\gm}{\rhob},&\qquad
\prsm{\bm{X}_c}{\bm{X}_b}{\rhot}{\rhob}{\rho}{\al}, \\
\prsm{\bbu}{\bbx}{\al}{\bt}{\gm}{\rhot},&\qquad
\prsm{\bm{X}_b}{\bm{X}_a}{\rhot}{\rhob}{\rho}{\gm},
\end{split}
\end{equation}
where $\bbx$ and $\bby$ are given in \eqref{xyvars}, and
\begin{equation}
\begin{gathered}
\bbw=(\wa,\wb,\wf,\wdd,\wc,\we),\quad
\bbz=(\za,\zb,\zf,\zd,\zc,\ze),\quad 
\bbu=(\ua,\ub,\uf,\ud,\uc,\ue),\quad \\
\bbv=(\va,\vb,\vf,\vd,\vc,\ve),\qquad
\bm{X}_i=(y_i,x_i,u_i,v_i,w_i,z_i),\; i\in\{a,b,c,d,e,f\}.
\end{gathered}
\end{equation}
The system of equations $\prsm{\bbw}{\bbv}{\al}{\bt}{\gm}{\rho}$ is assigned to the ``outer'' hexagonal prism that contains all other hexagonal prisms in the sense of the Schlegel diagram of Figure \ref{fig:CA66D}.  The twelve six-tuples of variables $\bbx,\bby,\bbz,\bbw,\bbv,\bbu$, and $\bm{X}_i$, $i\in\{a,b,c,d,e,f\}$, are respectively associated to vertices of one of the twelve hexagonal faces in the 6-6-duoprism.  

Consider two systems of equations $\prsm{\bbw}{\bbx}{\al}{\bt}{\gm}{\rho}$ and $\prsm{\bby}{\bbz}{\al}{\bt}{\gm}{\rho}$. If $\bbx=\bby$, the system of hex equations $(PX.b)$ from the former, and the system of hex equations $(PX.a)$ from the latter, are equivalent and lie on a shared hexagonal face associated to the variables $\bbx$.  Thus, the systems of equations on the left of \eqref{66equations} belong to a ``loop'' of six hexagonal prism cells that share hexagonal faces in the 6-6-duoprism, and the systems of equations on the right belong to the other loop of six hexagonal prism cells in the 6-6-duoprism.

To avoid repeating the steps of the CAHP algorithm, it is assumed that each of the twelve individual systems of equations \eqref{66equations} satisfy CAHP.  The seven initial variables are chosen as
\begin{equation}
\xa,\xb,\xc,\xe,\ya,\za,\wa,
\end{equation}
as indicated by black vertices in Figure \ref{fig:CA66D}.  For these initial conditions, the property of CA66D for the 6-6-duoprism of Figure \ref{fig:CA66D} can be checked with the following steps.
\begin{enumerate}[(i)]
\item
The systems of equations 
\begin{equation}
\prsm{\bbx}{\bby}{\al}{\bt}{\gm}{\rho},\quad 
\prsm{\bby}{\bbz}{\al}{\bt}{\gm}{\rhob},\quad
\prsm{\bbz}{\bbw}{\al}{\bt}{\gm}{\rhot},
\end{equation}
are used to uniquely determine each of the unknown variables in $\bbx,\bby,\bbz,\bbw$, and these systems are consistent as a consequence of CAHP.
\item
The systems of equations
\begin{equation}
\begin{split}
\prsm{\bm{X}_a}{\bm{X}_f}{\rhot}{\rhob}{\rho}{\bt}, \\
\prsm{\bm{X}_f}{\bm{X}_e}{\rhot}{\rhob}{\rho}{\al}, \\
\prsm{\bm{X}_e}{\bm{X}_d}{\rhot}{\rhob}{\rho}{\gm}, \\
\prsm{\bm{X}_d}{\bm{X}_c}{\rhot}{\rhob}{\rho}{\bt}, \\
\prsm{\bm{X}_c}{\bm{X}_b}{\rhot}{\rhob}{\rho}{\al}, 
\end{split}
\end{equation}
are used to uniquely determine each of the remaining unknown variables in $\bbu$ and $\bbv$, and these systems are consistent as a consequence of CAHP.
\item
The remaining equations that were not used to determine the unknown variables (and whose consistency is not implied through CAHP) must be consistent with the values of the unknown variables determined through the previous steps.  These equations are the three quad equations $(PQ.b),(PQ.c),(PQ.d)$, from
\begin{equation}
\prsm{\bm{X}_b}{\bm{X}_a}{\rhot}{\rhob}{\rho}{\gm},
\end{equation}
and the two systems of hex equations $(PX.a)$ and $(PX.b)$ from
\begin{equation}
\prsm{\bbv}{\bbu}{\al}{\bt}{\gm}{\rhob}.
\end{equation}
\end{enumerate}

Hex equations $\bm{A}=0$ and quad equations $Q=0$ constructed from pairs on the left of Table \ref{tab:CAFCCtoABS} satisfy the above notion of CA66D.

\section{Summary}

This paper has presented the new concept of a system of hex equations, which is an overdetermined system of six face-centered quad equations defined on six vertices of a hexagon.  Consistent systems of hex equations satisfy analogues of several of the properties that are characteristic of quad equations, including dependence on parameters which are the same on opposite edges of the hexagon, symmetries under rotations and reflections of the hexagon, well-defined initial value problems both on the hexagon itself and on connected staircases in the hexagonal lattice, and multidimensional consistency on polytopes with hexagonal faces.  This appears to be the first time that such properties have been collectively realised for systems of equations that evolve in hexagonal lattices.



The results of this paper open up interesting directions for future research, some of which include (in no particular order):
\begin{itemize}\setlength\itemsep{0em}
\item
Develop the analogue of consistency on polytopes for lattice models of statistical mechanics by using the connection to the quasi-classical limit of the Yang-Baxter equation.  This could potentially have some interesting physical interpretation for models on hexagonal lattices.  Reinterpreting the consistency of the hex equations on polytopes in terms of some equations satisfied by Boltzmann weights might also lead to new forms of the Yang-Baxter equations.
\item
Determine the solutions of the systems of hex equations. There are methods that have been developed for constructing the soliton solutions of integrable quad equations \cite{AHNsols3,HZsols1,ANsols1} and these could possibly be adapted for obtaining solutions of hex equations.
\item
Extend systems of hex equations to multicomponent cases by using the quasi-classical limit of multicomponent solutions of the Yang-Baxter equation \cite{Bazhanov:2011mz}.  This may also have connections with multicomponent discrete Landau-Lifschitz type equations \cite{DeliceNijhoffYooKong} that were proposed as extensions of $Q4$.
\item
Investigate reductions of the systems of hex equations on polytopes.  For example, reductions of quad equations on certain polyhedra have been shown to lead to discrete Painlev\'e equations \cite{JNS,JoshiNakazonoCUBOReduction}.  There could potentially be some interesting reductions from the permutahedron $P_{n-1}$ which may equivalently be interpreted as the Voronoi cell of the dual lattice of the root lattice $A_{n-1}$ \cite{ConwaySloaneSpherePackings}.
\item
Classification of hex equations that satisfy consistencty-around-a-hexagon (CAH).  CAH (six face-centered quad equations for two unknowns) is similar in complexity to CAC (six quad equations for four unknowns), although an individual face-centered quad equation involves an extra variable and parameters.  For quad equations there exists the important ABS classification \cite{ABS,ABS2}, and it might be feasible to use a similar approach to classify hex equations.   Such a classification should include all of the known type-A and type-C face-centered quad equations that satisfy CAFCC and might find some new ones.
\item
CAH for type-B CAFCC equations.  Besides type-A and type-C equations, there were also introduced the type-B equations in connection with CAFCC \cite{Kels:2020zjn}, but it is not known if there is a CAH property for type-B equations which could be used to construct the corresponding consistent systems of hex equations.  This might be obtained through a classification result as mentioned in the previous point.  
\end{itemize}

\section*{Acknowledgments}

The author thanks Adam Doliwa and Frank Nijhoff for helpful correspondence.

\begin{appendices}
\numberwithin{equation}{section}

\section{Quad equations and face-centered quad equations}\label{app:equations}

In the following, $\wp(z)$ and $\wp'(z)$ denote the Weierstrass elliptic function and its derivative respectively, with Weierstrass invariants $g_2$ and $g_3$, and $\zz{\alpha}$ and $\sh{\alpha}$ are defined by
\begin{equation}\label{def1}
\zz{\alpha}=\EXP^{\alpha},\qquad \sh{\alpha}=\sinh(\alpha).
\end{equation}

\subsection{Face-centered quad equations}

Two conventions have been used for face-centered quad equations in this paper.  In Section \ref{sec:quads} they are written with dependence on two-component parameters $\bm{\al}=(\al_1,\al_2)$ and $\bm{\bt}=(\bt_1,\bt_2)$ (which are more suitable for the square lattice), while in Section \ref{sec:hexagon} they are introduced for hex equations in terms of three scalar parameters $\al,\bt,\gm$ (more suitable for the hexagonal lattice).

In the following, the equations are given in terms of $\al,\bt,\gm$, where the relation between different sets of parameters is $\alpha={\beta_1}-{\alpha_1}$, $\beta={\beta_1}-{\alpha_2}$, $\gamma={\beta_1}-{\beta_2}$.  Also, $\phi$ is used to denote the following sum of parameters 
\begin{equation}\label{phidef}
    \phi=\alpha+\beta-\gamma
\end{equation}
and $\dxv$ is used to denote
\begin{equation}\label{def2}
\dxv=4\xv^3-g_2\xv-g_3.
\end{equation}

Define the following four functions
\begin{equation}
\begin{split}
\ta(\xv,\yv,\pa,\pb,\pc,\pd)&=
\pdb(\xv,\pa,\pb)\bigl(\pda(\xv,\pc)\pda(\xv,\pd)-256\wes(\xv,\pc)\wes(\xv,\pd)\yv\bigr), \\
\tb(\xv,\yva,\yvb,\pa,\pb,\pc,\pd)&=
\pdb(\xv,\pa,\pb)\bigl(\pda(\xv,\pc)\wes(\xv,\pd)\yva-\pda(\xv,\pd)\wes(\xv,\pc)\yvb\bigr), \\
\tc(\xv,\yva,\yvb,\pa,\pb,\pc,\pd)&=
\pdc(\xv,\pa,\pb,\pc)\bigl(16\wes(\xv,\pd)\yva-\pda(\xv,\pd)\yvb\bigr), \\
\td(\xv,\yv,\pa,\pb,\pc,\pd)&=
\pda(\xv,\pa)\wes(\xv,\pb)\Bigl(\dwepc\bigl(16\wes(\xv,\pd)\yv+\pda(\xv,\pd)\bigr) \\
&\hspace{3cm}+\dwepd\bigl(16\wes(\xv,\pc)\yv+\pda(\xv,\pc)\bigr)\Bigr),
\end{split}
\end{equation}
where
\begin{equation}\label{def3}
\begin{split}
\wes(\xv,\al)&=\bigl(\xv-\wp(\al)\bigr)^2,\\
\pdz(\xv,\al)&=2g_3+\bigl(\xv+\wp(\al)\bigr)\bigl(g_2-4\xv\wp(\al)\bigr), \\
\pda(\xv,\al)&=16\bigl(\xv+\wp(\al)\bigr)g_3 + \bigl(4\xv\wp(\al)+g_2\bigr)^2, \\
\pdb(\xv,\pa,\pb)&=\dwepb\pdz(\xv,\pa) + \dwepa\pdz(\xv,\pb), \\
\pdc(\xv,\pa,\pb,\pc)&= -4\dxv\prod_{1\leq j\leq3}\wp'(\al_j) -
\sum_{1\leq i\leq 3}\wp'(\al_1)\prod_{\substack{1\leq j\leq 3 \\ j\neq i}}\pdz(\xv,\al_j), \\
\pdd(\xv,\pa,\pb,\pc,\pd)&=
\sum_{1\leq i\leq4}\bigl(4\dxv\pdz(\xv,\al_i)\prod_{\substack{1\leq j\leq 4 \\ j\neq i}}\wp'(\al_j) +
\wp'(\al_i)\prod_{\substack{1\leq j\leq 4 \\ j\neq i}}\pdz(\xv,\al_j)\bigr).
\end{split}
\end{equation}

The following is a multilinear polynomial of the form \eqref{afflinpoly} that corresponds to the elliptic face-centered quad equation $A4$ 
\begin{equation}\label{a4}
\begin{alignedat}{2}
A4&& \bigl(32\xv^2g_3+(4x^2+g_2)^2-16\wp(\al-\bt)\dxv\bigr)\bigl(\wp(\gm)-\wp(\phi)\bigr)\\&&
\times\biggl(\ta(\xv,x_ax_bx_cx_d,-\pqaa,\pqba,\pqab,\pqbb) +
\ta(\xv,x_ax_bx_cx_d, \pqaa,\pqbb,\pqab,\pqba)\phantom{\!\!\bigr)} \\&&-
\ta(\xv,x_ax_bx_cx_d, \pqab,\pqba,\pqaa,\pqbb) -
\ta(\xv,x_ax_bx_cx_d,-(\pqab),\pqbb,\pqaa,\pqba)\phantom{\!\!\bigr)} \\&&- 16\Bigl(
\pdd(\xv, \pqaa, \pqab,\pqba,\pqbb)(x_bx_c-x_ax_d)+
\tb(\xv,x_d,x_c,\pqba,\pqbb,\pqaa,\pqab)(x_a+x_b) \phantom{\bigr)\!\!\bigr)} \\&& +
\pdd(\xv,-\pqaa,-(\pqab),\pqba,\pqbb)(x_ax_c-x_bx_d)+
\tb(\xv,x_a,x_b,\pqaa,\pqab,\pqbb,\pqba)(x_c+x_d)\phantom{\bigr)\!\!\bigr)} \\&&+
\tb(\xv,x_ax_b,x_cx_d,\pqaa,-(\pqbb),\pqab,\pqba) -
\tb(\xv,x_ax_b,x_cx_d,\pqaa, \pqba,\pqab,\pqbb)\phantom{\bigr)\!\!\bigr)} \\&&+
\tb(\xv,x_ax_b,x_cx_d,\pqab, \pqbb,\pqaa,\pqba) -
\tb(\xv,x_ax_b,x_cx_d,\pqab,-\pqba,\pqaa,\pqbb)\!\Bigr)\phantom{\!\!\bigr)} \\&&+ 4\Bigl(
\tc(\xv,x_bx_cx_d,x_a, \pqaa,\pqba, \pqbb,\pqab) -
\tc(\xv,x_bx_cx_d,x_a,-(\pqab),\pqba, \pqbb,\pqaa)\phantom{\bigr)\!\!\bigr)} \\&&+
\tc(\xv,x_ax_cx_d,x_b,-\pqaa,\pqba, \pqbb,\pqab) -
\tc(\xv,x_ax_cx_d,x_b, \pqab,\pqba, \pqbb,\pqaa)\phantom{\bigr)\!\!\bigr)} \\&&-
\tc(\xv,x_ax_bx_d,x_c, \pqaa,\pqab, \pqba,\pqbb) +
\tc(\xv,x_ax_bx_d,x_c, \pqaa,\pqab,-(\pqbb),\pqba)\phantom{\bigr)\!\!\bigr)} \\&&-
\tc(\xv,x_ax_bx_c,x_d, \pqaa,\pqab,-\pqba,\pqbb) +
\tc(\xv,x_ax_bx_c,x_d, \pqaa,\pqab, \pqbb,\pqba)\phantom{\bigr)\!\!\bigr)} \\&&-
\td(\xv,x_cx_d,\pqbb,\pqba,\pqaa,\pqab)x_a +
\td(\xv,x_cx_d,\pqba,\pqbb,\pqaa,\pqab)x_b\phantom{\bigr)\!\!\bigr)} \\&&+
\td(\xv,x_ax_b,\pqab,\pqaa,\pqba,\pqbb)x_c -
\td(\xv,x_ax_b,\pqaa,\pqab,\pqba,\pqbb)x_d
\Bigr)\!\!\biggr)
\end{alignedat}
\end{equation}
The above polynomial is degree 10 in $x$.  The product of factors $\bigl(32\xv^2g_3+(4x^2+g_2)^2-16\wp(\al-\bt)\dxv\bigr)\bigl(\wp(\gm)-\wp(\phi)\bigr)$ on the first line is degree 4 in $x$ and independent of $x_a,x_b,x_c,x_d$.    These factors are needed to preserve the third symmetry of  
\eqref{fcqsyms}.

The following are multilinear polynomials of the form \eqref{afflinpoly} for type-A and type-C face-centered quad equations

\begin{equation}\label{a3d}
\begin{alignedat}{2}
A3_{(\delta)}&&
xP^{A3}_{(\delta)}(x_a,x_b,x_c,x_d;\al,\bt,\gm)+\sh{\bt}(x_ax^2-x_bx_cx_d) \\&&\quad- \sh{\bt-\gm}(x_bx^2-x_ax_cx_d) - \sh{\al}(x_cx^2-x_ax_bx_d) + \sh{\al-\gm}(x_dx^2-x_ax_bx_c)  \\&&+ \delta\Bigl(
\sh{\bt-\gm}\sh{\al-\gm}(\sh{\al}x_a-\sh{\bt}x_c) +
 \sh{\al}\sh{\bt}(\sh{\gm-\al}x_b-\sh{\gm-\bt}x_d)
\Bigr)
\end{alignedat}
\end{equation}

\begin{equation}\label{a2dd}
\begin{alignedat}{2}
 A2_{(\delta_1;\,\delta_2)}&&\quad
xP^{A2}_{(\delta_1;\,\delta_2)}(x_a,x_b,x_c,x_d;\al,\bt,\gm)-\al(x_c-x_d)\bigl(x^2+x_ax_b-\delta_1\bt^2(x_a+x_b-\bt^2)^{\delta_2}\bigr) \\ &&+
\bt(x_a-x_b)\bigl(x^2+x_cx_d-\delta_1\al^2(x_c+x_d-\al^2)^{\delta_2}\bigr) \\&&+
\gm(x_b-x_d)\bigl(x^2+x_ax_c-\delta_1\al\bt(x_a+x_c-(\al-\bt)^2-\al\bt)^{\delta_2}\bigr) \\ &&+
\delta_1\gm\phi\Bigl( 
 \bigl(\al x_a-\bt x_c+\delta_2(\al-\bt)\al\bt\bigr)\bigl(x_b+x_d + (\al-\gm)(\gm-\bt)\bigr)^{\delta_2} \\ &&+
 \delta_2\bigl((\al-\bt)(x_ax_c-x^2)-\al^3x_a+\bt^3x_c\bigr)\Bigr)
\end{alignedat}
\end{equation}

\begin{equation}\label{c3ddd}
\begin{alignedat}{2}
C3_{(\delta_1;\,\delta_2;\,\delta_3)}&&  
xP^{C3}_{(\delta_1;\,\delta_2;\,\delta_3)}(x_a,x_b,x_c,x_d;\al,\bt,\gm) + \bigl(x_ax_b+2\delta_2\sh{\bt}\sh{\bt-\gm}\bigr)\bigl(x_d - x_c\zz{\gm}\bigr) \\ &&+ 
\bigl(x_d\zz{\gm}-x_c\bigr)x^2
+\delta_1\zz{-\al}\bigl(1+2\delta_2x_cx_d\zz{2\al-\gm}\bigr)\times \\ &&
\hspace{-0.4cm}\Bigl(\bigl(\zz{\bt}-2\delta_3\zz{2\al-\bt}x^2\bigr)\bigl(x_b\zz{\gm}-x_a\bigr) +
\bigl(x_a\zz{\gm}-x_b\bigr)\bigl(\zz{\gm-\bt}-2\delta_3\zz{\al+\phi}x^2\bigr)\Bigr)
\end{alignedat}
\end{equation}

\begin{equation}\label{c2ddd}
\begin{alignedat}{2}
 C2_{(\delta_1;\,\delta_2;\,\delta_3)} && 
xP^{C2}_{(\delta_1;\,\delta_2;\,\delta_3)}(x_a,x_b,x_c,x_d;\al,\bt,\gm) + (x^2-x_ax_b)\gm(\gm-2\al)^{\delta_3}-(x^2+x_ax_b)(x_c-x_d) \\ &&+
\delta_1\bt(\bt-\gm)\bigl(\gm(\gm-2\al)^{\delta_3}+(x_c-x_d)\bigr)\bigl(\bt(\gm-\bt)+x_a+x_b\bigr)^{\delta_2}-2\delta_3\gm(x_a+x_b)x^2 \\ &&
\hspace{-0.4cm}+\delta_1 \bigl(\bt(x_a-x_b)-\gm x_a\bigr)\Bigl(x_c+x_d+(2\al-\gm)(\gm-2\al)^{\delta_3}+2\delta_3\bigl(\al(\al-\gm)-x^2\bigr)\Bigr)(\gm-2\al)^{\delta_2} \\ &&+
2\delta_2\Bigl(\bt\gm(\bt-\gm)(\al+x_c)(\al-\gm+x_d) + \bigl(\gm x_a-\bt(x_a-x_b)\bigr)(\al(\gm-\al)+x_cx_d)\Bigr) 
\end{alignedat}
\end{equation}

\begin{equation}\label{c1d}
\begin{alignedat}{2}
C1_{(\delta)} && \quad xP^{C1}_{(\delta)}(x_a,x_b,x_c,x_d;\al,\bt,\gm) +
2\bigl(\bt(x_a-x_b)-\gm x_a\bigr)(-x_c/2-x_d/2)^\delta \\ && + \bigl(x^2+x_ax_b+\delta\bt(\gm-\bt)\bigr)(x_c-x_d)
\end{alignedat}
\end{equation}

The above polynomials are each quadratic in $x$, where the terms linear in $x$ (corresponding to $P_1$ in \eqref{afflinsum}) are given by
\begin{equation}\label{p1polys}
\begin{split}
P^{A3}_{(\delta)}&=(x_ax_c-x_bx_d)\sh{\al-\bt} + (x_cx_d-x_ax_b)\sh{\gm} + (x_bx_c-x_ax_d)\sh{\phi}+\delta \sh{\al-\bt}\sh{\phi}\sh{\gm}, \\
P^{A2}_{(\delta_1;\,\delta_2)}&= \al(x_c-x_d)(x_a+x_b+2\delta_2\bt^2) - \bt(x_a-x_b)(x_c+x_d+2\delta_2\al^2) \\ &\quad- \gm(x_b-x_d)(x_a+x_c+2\delta_2\al\bt), \\ &\quad
-\delta_1\phi\gm\bigl((\al-\bt)(x_a+x_b+x_c+x_d + \gm\phi-\al^2-\bt^2)^{\delta_2}-2\delta_2(\al x_a-\bt x_c)\bigr)\\
P^{C3}_{(\delta_1;\,\delta_2;\,\delta_3)}&=(x_ax_c - x_bx_d)\zz{\bt} + (x_bx_c - x_ax_d)\zz{\gm-\bt} \\ &\quad+ \delta_1\bigl(\zz{-\al}-2(\delta_3x_ax_b-\delta_2x_cx_d)\zz{\al-\gm}\bigr)\bigl(1 - \zz{2\gm}\bigr), \\
P^{C2}_{(\delta_1;\,\delta_2;\,\delta_3)}&=\bigl(x_a+x_b+2\delta_2(\bt-\gm)\bt\bigr)(x_c-x_d) + (x_a-x_b)(2\bt-\gm)(\gm-2\al)^{\delta_3} \\ &\quad+ \delta_1 \gm\bigl(x_c+x_d+(2\al-\gm)(\gm-2\al)^{\delta_3}\bigr)(\gm-2\al)^{\delta_2} +2\delta_2\gm\bigl((\al-\gm)\al-x_cx_d\bigr) \\ &\quad+2\delta_3\gm\bigl((\al-\bt)\phi+x_ax_b\bigr), \\
P^{C1}_{(\delta)}&=2\gm(-x_c/2-x_d/2)^\delta - (x_a + x_b)(x_c - x_d).
\end{split}
\end{equation}

\subsection{ABS quad equations}

The following are multilinear polynomials of the form \eqref{afflinquadpoly} for equations in the ABS list \cite{ABS,ABS2}, where the expression for $Q4$ is due to Nijhoff \cite{NijhoffQ4Lax}.
\begin{equation}
\begin{alignedat}{2}
Q4 &&\qquad \wp'(\al)\Bigl(\bigl(x_a-\wp(\bt)\bigr)\bigl(x_c-\wp(\bt)\bigr)-\bigl(\wp(\al)-\wp(\bt)\bigr)\bigl(\wp(\al-\bt)-\wp(\bt)\bigr)\Bigr) \\ &&\qquad \phantom{\wp'(\al)}\times
\Bigl(\bigl(x_d-\wp(\bt)\bigr)\bigl(x_b-\wp(\bt)\bigr)-\bigl(\wp(\al)-\wp(\bt)\bigr)\bigl(\wp(\al-\bt)-\wp(\bt)\bigr)\Bigr)
 \\ &&\qquad +
 \wp'(\bt)\Bigl(\bigl(x_a-\wp(\al)\bigr)\bigl(x_b-\wp(\al)\bigr)-\bigl(\wp(\bt)-\wp(\al)\bigr)\bigl(\wp(\al-\bt)-\wp(\al)\bigr)\Bigr) \\ &&\qquad \phantom{\wp'(\bt)}\times
\Bigl(\bigl(x_c-\wp(\al)\bigr)\bigl(x_d-\wp(\al)\bigr)-\bigl(\wp(\bt)-\wp(\al)\bigr)\bigl(\wp(\al-\bt)-\wp(\al)\bigr)\Bigr) \\ &&\qquad +
\wp'(\al)\wp'(\bt)\wp'(\al-\bt)\bigl(\wp(\al)-\wp(\bt)\bigr) \\
Q3_{(\delta)}&&\quad \sh{\al}(x_ax_b+x_cx_d)-\sh{\bt}(x_ax_c+x_bx_d)-\sh{\al-\bt}(x_ax_d+x_bx_c) -\delta\sh{\al}\sh{\bt}\sh{\al-\bt} \\
Q2&&\qquad Q1_{(0)}+\alpha\beta(\alpha-\beta)(x_a+x_b+x_c+x_d+\alpha\beta-\alpha^2-\beta^2) \\
Q1_{(\delta)}&&\qquad \alpha(x_a-x_c)(x_b-x_d)-\beta(x_a-x_b)(x_c-x_d)+\delta\alpha\beta(\alpha-\beta) \\
H3_{(\delta;\,\varepsilon)}&&\qquad \zz{\alpha}(x_ax_c+x_bx_d)-\zz{\beta}(x_ax_b+x_cx_d)-\sh{\al-\bt}\bigl(\delta(2+\varepsilon)+\varepsilon(\zz{\alpha_\beta} x_ax_d)\bigr) \\
H2_{(\varepsilon)}&&\qquad (x_a-x_d)(x_b-x_c)-(\alpha-\beta)(x_a+x_b+x_c+x_d)-(\alpha^2-\beta^2) \\ &&\qquad +\tfrac{\varepsilon}{2}(\alpha-\beta)\bigl((2x_a+\alpha+\beta+1)(2x_d+\alpha+\beta+1)+(\alpha-\beta)^2-1\bigr) \\
H1_{(\varepsilon)}&&\qquad (x_a-x_d)(x_b-x_c)+2(\alpha-\beta)-\varepsilon(\alpha-\beta)(2+x_a+x_d)
\end{alignedat}
\end{equation}

Apart from $Q4$, each of the above are equivalent to $P_1(x_a,-x_d,x_c,x_b;\al,\al-\bt,\al-\bt)$ in \eqref{afflinsum} for one of the polynomials in \eqref{p1polys}, where the correspondence is given in Table \ref{tab:CAFCCtoABS}.

\subsection{Three-leg and four-leg equations}

In the following, $\sigma(z)$ denotes the Weierstrass sigma function, and $\overline{x}$ is used to denote
\begin{equation}
\overline{x}=x+\sqrt{x^2-1}.
\end{equation}
The definitions in \eqref{def1}, \eqref{def2}, and \eqref{def3}, are also used.

The functions used for the three-leg and four-leg equations may be organised according to which face-centered quad equations they are associated to.  The four-leg equations associated to type-A and type-C face-centered quad equations are respectively given by \eqref{4lega} and \eqref{4legc} (or in additive form \eqref{4legadd} and \eqref{4legcdd}). First, for $A4$ one may use $a(x;y;\alpha)$ given by
\begin{equation}\label{a4q4leg}
    \ds
    a(x;y;\alpha)=
\frac{\Bigl(4\bigl(x+y+\wp(\al)\bigr)\wes(x,\al)+\bigl(\dxv^{\frac{1}{2}}+\wp'(\al)\bigr)^2\Bigr)}
{\Bigl(4\bigl(x+y+\wp(\al)\bigr)\wes(x,\al)+\bigl(\dxv^{\frac{1}{2}}-\wp'(\al)\bigr)^2\Bigr)}
\frac{\sigma\bigl(\wp(x)-\wp(\al)\bigr)}{\sigma\bigl(\wp(x)+\wp(\al)\bigr)}.
\end{equation}
The functions for all other face-centered quad equations are given in Table \ref{tab:4legquads}, where the abbreviation $add.$ is used to indicate it is for the additive forms of the equation.

\begin{table}[htb!]
\centering

\begin{tabular}{c|ccc|c}%




 


Type-A & $a(x;y;\alpha)$ & & Type-C & $c(x;y;\alpha)$ %
 
 \\
 
\cline{1-2}\cline{4-5}
 
 \\[-0.35cm]

$A3_{(1)}$ & $\ds
\frac{1+Z(\alpha)^2\overline{x}^2-2Z(\alpha)\overline{x}y}
     {Z(\alpha)^2+\overline{x}^2-2Z(\alpha) \overline{x}y}$
& &     
$C3_{(\frac{1}{2};\,\frac{1}{2};\,0)}$  
&
$\ds
\frac{1-Z(\alpha)\overline{x}y}{\overline{x}-Z(\alpha)y}$

\\[0.3cm]
\cline{1-2}\cline{4-5}
\\[-0.4cm]

\multirow{3}{*}[-0.3cm]{$A3_{(0)}$} & \multirow{3}{*}[-0.3cm]{$\ds
\frac{Z(\alpha) x-y}{x-Z(\alpha) y}$}
& &
$C3_{(\frac{1}{2};\,0;\,\frac{1}{2})}$
&
$Z(-\alpha)+Z(\alpha)x^2-2xy$

\\[0.1cm]
\cline{4-5}
\\[-0.4cm]

& & &
$C3_{(1;\,0;\,0)}$      
&
$xy-Z(-\alpha)$

\\[0.1cm]
\cline{4-5}
\\[-0.4cm]

& & &
$C3_{(0;\,0;\,0)}$ 
&
$y$

\\[0.1cm]
\cline{1-2}\cline{4-5}
\\[-0.35cm]

$A2_{(1;\,1)}$ & $\ds
\frac{(\sqrt{x}-\alpha)^2-y}{(\sqrt{x}+\alpha)^2-y}$
& &
$C2_{(1;\,1;\,0)}$
&
$\ds
\frac{y-\sqrt{x}+\alpha}{y+\sqrt{x}+\alpha}$

\\[0.3cm]
\cline{1-2}\cline{4-5}
\\[-0.4cm]

\multirow{3}{*}[-0.3cm]{$A2_{(1;\,0)}$} & \multirow{3}{*}[-0.3cm]{$\ds
\frac{x-y+\alpha}{x-y-\alpha}$}
& &
$C2_{(1;\,0;\,1)}$ 
&
$(x+\alpha)^2-y$

\\[0.1cm]
\cline{4-5}
\\[-0.4cm]

& & &
$C2_{(1;\,0;\,0)}$ 
&
$\ds
x+y+\alpha$

\\[0.1cm]
\cline{4-5}
\\[-0.4cm]

& & &
$C1_{(1)}$ 
& 
$y$

\\[0.1cm]
\cline{1-2}\cline{4-5}
\\[-0.3cm]

\multirow{2}{*}[-0.15cm]{$A2_{(0;\,0)}$} & \multirow{2}{*}[-0.15cm]{$\ds
\phantom{(add.) }\quad  \frac{\alpha}{x-y} \quad (add.)$}
& &
$C2_{(0;\,0;\,0)}$ 
&
$\phantom{(add.) }$ \,  $\ds\frac{y+\al}{2x}  \quad (add.)$

\\[0.2cm]
\cline{4-5}
\\[-0.3cm]

& & &
$C1_{(0)}$ 
& 
$\phantom{(add.) }\ds\quad  -\frac{y}{2} \quad (add.)$

\\[0.2cm]

\cline{1-2}\cline{4-5}
\end{tabular}
\caption{Left: 
A list of $a(x;y;\alpha)$ in \eqref{4lega} for type-A equations \eqref{a3d} and \eqref{a2dd}. Right: A list of $c(x;y;\alpha)$ in \eqref{4legc} for type-C equations \eqref{c3ddd}, \eqref{c2ddd}, and \eqref{c1d}.  For $C3_{(\delta_1;\,\delta_2;\,\delta_3)}$, $C2_{(\delta_1;\,\delta_2;\,\delta_3)}$, and  $C1_{(\delta_1)}$, the $a(x;y;\alpha,\beta)$ are respectively given by the functions for $A3_{(2\delta_2)}$, $A2_{(\delta_1;\,\delta_2)}$, and $A2_{(\delta_1;\,\delta_2=0)}$.}%
\label{tab:4legquads}
\end{table}

The three-leg equations for type-Q and type-H ABS quad equations are respectively given by \eqref{43legsQ} and \eqref{43legsH} (or for the additive form with \eqref{3legadd}), where the function for $Q4$ is given by \eqref{a4q4leg}, and the functions for all other equations are given in Table \ref{tab:3legquadsH} in terms of the functions from Table \ref{tab:4legquads}.\vspace{2cm}  

\begin{table}[htb!]
\centering
\begin{tabular}{c|ccc|c|c|c}%

Type-Q & $a(x;y;\alpha)$ & & Type-H & $a^\ast(x;y;\alpha)$ & $c(x;y;\alpha)$ & $c^\ast(x;y;\alpha)$
 
 \\
 
\cline{1-2}\cline{4-7}
 
 \\[-0.3cm]

$Q3_{(1)}$ & $A3_{(1)}$ & &
$H3_{(1;\,1)}$ & \multirow{3}{*}[-0.3cm]{$\ds A3_{(0)}$}
& 
$\ds C3_{(\frac{1}{2};\,\frac{1}{2};\,0)}$  
& $\ds
C3_{(\frac{1}{2};\,0\;\,\frac{1}{2})}$

\\[0.1cm]
 
\cline{1-2} \cline{4-4}\cline{6-7}
 
\\[-0.3cm]

\multirow{2}{*}[-0.15cm]{$Q3_{(0)}$} & \multirow{2}{*}[-0.15cm]{$A3_{(0)}$} & &
$H3_{(1;\,0)}$ & & \multicolumn{2}{|c}{$\ds
C3_{(1;\,;0;\,0)}$  }

\\[0.1cm]
 
 \cline{4-4}\cline{6-7}
 
\\[-0.3cm]

 & & &
$H3_{(0;\,0)}$ & & \multicolumn{2}{|c}{$\ds C3_{(0;\,;0;\,0)}$}

\\[0.1cm]
 
\cline{1-2}\cline{4-7}
 
\\[-0.3cm]

$Q2$ & $A2_{(1;\,1)}$ & &
$H2_{(1)}$ & \multirow{2}{*}[-0.15cm]{$\ds A2_{(1;\,0)}$}
& $\ds
C2_{(1;\,;1;\,0)}$
& $\ds
C2_{(1;\,;0;\,1)}$

\\[0.1cm]
 
\cline{1-2}\cline{4-4}\cline{6-7}
 
\\[-0.3cm]

\multirow{2}{*}[-0.15cm]{$Q1_{(1)}$} & \multirow{2}{*}[-0.15cm]{$A2_{(1;\,0)}$} & &
$H2_{(0)}$  & & \multicolumn{2}{|c}{$\ds
C2_{(1;\,;0;\,0)}$}

\\[0.1cm]
 
 \cline{4-7}
 
\\[-0.3cm]

 & & &
$H1_{(1)}$ & \multirow{2}{*}[-0.15cm]{$\ds A2_{(0;\,0)}$}
& $\ds
C1_{(1)}$
& $ \ds
C2_{(0;\,;0;\,0)}$

\\[0.1cm]
 
\cline{1-2}\cline{4-4}\cline{6-7}
 
\\[-0.3cm]

$Q1_{(0)}$ & $A2_{(0;\,0)}$ & &
$H1_{(0)}$ & & \multicolumn{2}{|c}{$ \ds
C1_{(0)}$}

\\[0.0cm]

\cline{1-2}\cline{4-7}
\end{tabular}
\caption{Left: 
A list of $a(x;y;\alpha)$ in \eqref{43legsQ} for type-Q ABS equations. Right: A list of $a^\ast(x;y;\alpha),c(x;y;\alpha),c^\ast(x;y;\alpha)$ in \eqref{43legsH} for type-H ABS equations.  For $H3_{(\delta;\,\varepsilon)}$ the $a(x;y;\alpha)$ are given by the functions for $Q3_{(\varepsilon)}$, for $H2_{(\varepsilon)}$ the $a(x;y;\alpha)$ is given by the function for $Q2$ if $\varepsilon=1$, and the function for $Q1_{(1)}$ if $\varepsilon=0$, and for $H1_{(\varepsilon)}$ the $a(x;y;\alpha)$ are given by the functions for $Q1_{(\varepsilon)}$.}
%
\label{tab:3legquadsH}
\end{table}


\end{appendices}


{\small

\bibliography{MComp}
\bibliographystyle{utphys}

}

\end{document}